\newcommand{\mc}{\mathfrak{C}}
\newcommand{\transmatrix}{P}
\newcommand{\sinit}{s_I}
\newcommand{\bmsymbol}{\Phi}
\newcommand{\bmsymbola}[1]{\Phi_{#1}}
\newcommand{\mdp}{\mathfrak{M}}
\newcommand{\smdp}[1]{\mathfrak{M}_{#1}}
\newcommand{\act}{\textnormal{Act}}
\newcommand{\mdps}{\left( S, \,  \sinit , \, \act, \, \transmatrix \right)}
\newcommand{\mdpsprime}{\left( S', \,  \sinit , \, \act, \, \transmatrix' \right)}
\newcommand{\acts}[1]{\act \left( #1 \right)}
\newcommand{\probm}{\textnormal{Pr}}
\newcommand{\prevtbmax}[4]{\probm^{\textnormal{max}} \left( #2 \models \lozenge^{\leq #3} #4 \right) }
\newcommand{\prevtmax}[3]{\probm^{\textnormal{max}} \left( #2 \models \lozenge #3 \right) }
\newcommand{\bm}[1]{\bmsymbol \left( #1 \right)}
\newcommand{\bma}[2]{\bmsymbol_{#1} \left( #2 \right)}
\newcommand{\bmn}[2]{\bmsymbol^{#1} \left( #2 \right)}
\newcommand{\succs}{\textnormal{Succs}}
\newcommand{\reach}[1]{\textnormal{Reach}^{{}\leq #1}}
\newcommand{\zeroexp}{\mathbf{0}}
\newcommand{\oneexp}{\mathbf{1}}
\newcommand{\substright}{\right\rangle}
\newcommand{\substleft}{\left\langle}
\newcommand{\substarrow}{\mapsto}
\newcommand{\subst}[2]{\substleft #1 \substarrow #2 \substright}
\newcommand{\bmseq}{\bm{\zeroexp},\, \bmn{2}{\zeroexp},\, \bmn{3}{\zeroexp},\, \ldots}
\newcommand{\bmseqk}{\bm{\zeroexp},\, \bmn{2}{\zeroexp},\, \ldots,\, \bmn{k+1}{\zeroexp}}
\newcommand{\frameseq}{F_0,\, F_1,\, F_2,\, \ldots}
\newcommand{\frameseqk}{F_0,\, \ldots,\, F_k}
\newcommand{\frameseqkmo}{F_0,\, \ldots,\, F_{k-1}}
\newcommand{\frameseqkpo}{F_0,\, \ldots,\, F_{k+1}}
\newcommand{\changed}[1]{\textcolor{Crimson}{#1}}
\newcommand{\old}[1]{\textcolor{gray}{#1}}
\newcommand{\interface}[1]{\textcolor{DodgerBlue}{#1}}
\newcommand{\changedColor}{red\xspace}
\newcommand{\thresh}{\lambda}
\newcommand{\touched}{\textit{touched}}
\newcommand{\success}{\textit{success}}
\newcommand{\safeReturn}{\textit{safe}}
\newcommand{\subsystem}{\textit{subsystem}}
\newcommand{\oldSubsystem}{\textit{oldSubsystem}}
\newcommand{\algoname}[1]{\textup{\textsf{#1}}\xspace}
\newcommand{\boldalgoname}[1]{\textnormal{\textbf{\textsf{#1}}}\xspace}%\newcommand{\getprobs}{\algoname{FindProbabilities}}
\newcommand{\getprobs}{\heuristic}
\newcommand{\initOracle}{\algoname{Initialize}}
\newcommand{\enlarge}{\algoname{Enlarge}}
\newcommand{\createHeuristic}{\algoname{CreateHeuristic}}
\newcommand{\refineoracle}{\algoname{Refine}}
\newcommand{\checkrefutation}{\algoname{CheckRefutation}}
\newcommand{\icthree}{\algoname{IC3}}
\newcommand{\pricthree}{\algoname{PrIC3}}
\newcommand{\mainloop}{\ensuremath{\pricthree_{\heuristic}}\xspace}
\newcommand{\boldmainloop}{\ensuremath{\boldalgoname{PrIC3}_{\boldsymbol\heuristic}}\xspace}
\newcommand{\outerloop}{\ensuremath{\algoname{\pricthree}}\xspace}
\newcommand{\boldpricthree}{\boldalgoname{PrIC3}}
\newcommand{\blank}{\underline{\hspace{0.4cm}}}
\newcommand{\qtouched}{Q.\textnormal{touched}()}
\newcommand{\bad}{B}
\newcommand{\vars}{\textnormal{vars}}
\newcommand{\oracle}{\Omega}
\newcommand{\heuristic}{\ensuremath{\mathcal{H}}\xspace}
\newcommand{\prict}{\pricthree}
\newcommand{\prictcall}[4]{\pricthree_{#4}\left( #1, \, #2, \, #3 \right)}
\newcommand{\strengthenH}[1]{\ensuremath{\algoname{Strengthen}_{#1}}}
\newcommand{\strengthen}{\strengthenH{\heuristic}}
\newcommand{\strengthencall}[2]{\strengthenH{#2}\left( #1 \right)}
\newcommand{\propagate}{\algoname{Propagate}}
\newcommand{\propagatecall}[1]{\algoname{Propagate} \left( #1 \right)}
\newcommand{\qpop}{\textnormal{popMin}()}
\newcommand{\qpush}[1]{\textnormal{push}\left( #1 \right)}
\newcommand{\htriple}[3]{ \big\{\, #1 \,\big\} \, #2 \, \big\{\, #3 \,\big\} }
\newcommand{\icinv}[1]{\textsf{\upshape{PrIC3Inv}} \left( #1 \right)}
\newcommand{\genset}{G}
\newcommand{\genfunc}{p_\genset}
\newcommand{\fun}[3]{\ensuremath{#1 \colon #2 \to #3}}
\newcommand{\true}{\textup{\texttt{true}}}
\newcommand{\false}{\textup{\texttt{false}}}
\newcommand{\abs}[1]{\left| #1 \right|}
\newcommand{\eeq}{~{}={}~}
\newcommand{\lleq}{~{}\leq{}~}
\newcommand{\lleftarrow}{~{}\leftarrow{}~}
\newcommand{\RRightarrow}{~{}\Rightarrow{}~}
\newcommand{\iimplies}{~{}\implies{}~}
\newcommand{\qqiff}{\qquad{}\textnormal{iff}{}\qquad}
\newcommand{\qqand}{\qquad{}\textnormal{and}{}\qquad}
\newcommand{\qimplies}{\quad{}\textnormal{implies}{}\quad}
\newcommand{\qqimplies}{\qquad{}\textnormal{implies}{}\qquad}
\newcommand{\eewedge}{~{}\wedge{}~}
\newcommand{\eeleq}{~{}\leq{}~}
\newcommand{\Reals}{\mathbb{R}}
\newcommand{\Ints}{\mathbb{Z}}
\newcommand{\Bools}{\mathbb{B}}
\newcommand{\minval}[1]{\textsf{\upshape{min}} \left( #1 \right)}
\newcommand{\lfp}{\textnormal{lfp}}
\newcommand{\sebastianmargin}[1]{\todo[color=violet!35,size=\scriptsize,fancyline,author=Sebastian]{#1}\xspace}
\newcommand{\epmc}{\textnormal{EPMC}}
\newcommand{\prism}{\textnormal{Prism}}
\newcommand{\storm}{\textnormal{Storm}}
\newcommand{\modest}{\textnormal{modest}}
\newcommand{\eexp}[1]{\mathbf{E}^{#1}}
\definecolor{prismgreen}{HTML}{009900}
\definecolor{prismred}{HTML}{cc0000}
\definecolor{prismblue}{HTML}{0000cc}
\lstdefinelanguage{Prism}{
        basicstyle=\color{prismred}\scriptsize\ttfamily,
        literate=*	{0}{{\textcolor{prismblue}{0}}}{1}
			{1}{{\textcolor{prismblue}{1}}}{1}
			{2}{{\textcolor{prismblue}{2}}}{1}
			{3}{{\textcolor{prismblue}{3}}}{1}
			{4}{{\textcolor{prismblue}{4}}}{1}
			{5}{{\textcolor{prismblue}{5}}}{1}
			{6}{{\textcolor{prismblue}{6}}}{1}
			{7}{{\textcolor{prismblue}{7}}}{1}
			{8}{{\textcolor{prismblue}{8}}}{1}
			{9}{{\textcolor{prismblue}{9}}}{1}
			{.0}{{\textcolor{prismblue}{.0}}}{2}
			{.1}{{\textcolor{prismblue}{.1}}}{2}
			{.2}{{\textcolor{prismblue}{.2}}}{2}
			{.3}{{\textcolor{prismblue}{.3}}}{2}
			{.4}{{\textcolor{prismblue}{.4}}}{2}
			{.5}{{\textcolor{prismblue}{.5}}}{2}
			{.6}{{\textcolor{prismblue}{.6}}}{2}
			{.7}{{\textcolor{prismblue}{.7}}}{2}
			{.8}{{\textcolor{prismblue}{.8}}}{2}
			{.9}{{\textcolor{prismblue}{.9}}}{2}
			{[}{{\textcolor{black}{[}}}{1}
			{]}{{\textcolor{black}{]}}}{1}
			{+}{{\textcolor{black}{+}}}{1}
			{-}{{\textcolor{black}{-}}}{1}
			{=}{{\textcolor{black}{=}}}{1}
			{<}{{\textcolor{black}{<}}}{1}
			{>}{{\textcolor{black}{>}}}{1}
			{\&}{{\textcolor{black}{\&}}}{1}
			{|}{{\textcolor{black}{|}}}{1}
			{:}{{\textcolor{black}{:}}}{1}
			{;}{{\textcolor{black}{;}}}{1}
			{(}{{\textcolor{black}{(}}}{1}
			{)}{{\textcolor{black}{)}}}{1}
			{..}{{\textcolor{black}{..}}}{2},
        keywords= {bool,ceil,const,ctmc,double,dtmc,endinit,endmodule,endrewards, endsystem,F,false,floor,formula,G,global,I,init,int,label,max,mdp,min,module,nondeterministic,P,Pmin,Pmax,prob,probabilistic,rate,rewards,Rmin,Rmax,S,stochastic,system,true,U, option, either, assignment, relation, operation, hole, variable},
        keywordstyle={\bfseries\color{black}},
        numberstyle=\footnotesize\color{black},
        comment=[l] {//}, morecomment=[s]{/*}{*/},
        commentstyle= \color{prismgreen},
        tabsize=4,
        captionpos=b,
        escapechar=^,
        moredelim=[is][\color{orange}]{@}{@},
}
\renewcommand{\subsubsection}[1]{\medskip\noindent\textbf{#1}}
\renewcommand{\paragraph}[1]{\smallskip\noindent\emph{#1}}
\spnewtheorem{recovery}{Recovery Statement}{\bfseries}{\itshape}
\spnewtheorem{challenge}{Challenge}{\itshape}{\textnormal}
\begin{document}
%
%% \title{PrIC~$\!$3: IC3 for Markov Decision Processes}
\title{\pricthree: Property Directed Reachability for MDPs\thanks{This work has been supported by the ERC Advanced Grant 787914 (FRAPPANT), NSF grants 1545126 (VeHICaL) and 1646208, the DARPA Assured Autonomy program,  Berkeley Deep Drive, and by Toyota under the iCyPhy center.}}
%\title{PrIC-3: IC3 for Markov Decision Processes}
%
\titlerunning{\pricthree: Property Directed Reachability for MDPs}
% If the paper title is too long for the running head, you can set
% an abbreviated paper title here
%
\author{Kevin Batz\inst{1} \and
Sebastian Junges\inst{2}\and
Benjamin Lucien Kaminski\inst{3}\and\\
Joost-Pieter Katoen\inst{1}
\and
Christoph Matheja\inst{4}\and
Philipp Schr\"{o}er\inst{1}
}

\authorrunning{K. Batz, S. Junges, B.~Kaminski, J.-P.~Katoen, C.~Matheja, P.~Schr\"{o}er}

% First names are abbreviated in the running head.
% If there are more than two authors, 'et al.' is used.
%
\institute{RWTH Aachen University, Aachen, Germany
 \and
 University of California, Berkeley, USA
 \and
 University College London, United Kingdom
 \and
 ETH Z\"{u}rich, Z\"{u}rich, Switzerland
 }

%Springer Heidelberg, Tiergartenstr. 17, 69121 Heidelberg, Germany
%\email{lncs@springer.com}\\
%\url{http://www.springer.com/gp/computer-science/lncs} \and
%ABC Institute, Rupert-Karls-University Heidelberg, Heidelberg, Germany\\
%\email{\{abc,lncs\}@uni-heidelberg.de}}
%
\maketitle              % typeset the header of the contribution
\begin{abstract}
IC3 has been a leap forward in symbolic model checking. This paper proposes PrIC3 (pronounced pricy-three), a conservative extension of IC3 to symbolic model checking of MDPs. 
Our main focus is to develop the theory underlying PrIC3.
Alongside, we present a first implementation of PrIC3 including the key ingredients from IC3 such as generalization, repushing, and propagation. 
%\keywords{Markov decision processes \and IC3 \and probabilistic model checking \and SMT-based model checking \and probabilistic invariants}
\end{abstract}
\section{Introduction}
\label{sec:intro}
% !TEX root = ./main.tex

\paragraph{\icthree.}
Also known as property-directed reachability (PDR)~\cite{DBLP:conf/fmcad/EenMB11}, \icthree~\cite{DBLP:conf/vmcai/Bradley11} is a symbolic approach for verifying finite transition systems (TSs) against safety properties like \enquote{\emph{bad~states are unreachable}}.
It combines bounded model checking (BMC)~\cite{DBLP:series/faia/Biere09} and inductive invariant generation.
Put shortly, \icthree either proves that a set $B$ of bad states is \emph{un}reachable by finding a set of non-$B$ states closed under reachability---called an \emph{inductive invariant}---or refutes reachability of $B$ by a \emph{counterexample} path reaching $B$.
Rather than unrolling the transition relation (as in BMC), \icthree attempts to incrementally strengthen the invariant \enquote{no state in $B$ is reachable} into an inductive one. 
%Rather than unrolling the transition relation (as in BMC), it generates a sequence of frames, strengthenings of $\Box \neg B$
In addition, it applies aggressive abstraction to the explored state space, so-called generalization~\cite{DBLP:conf/fmcad/HassanBS13}.
These aspects together with the enormous advances in modern SAT solvers have led to \icthree's success.
\icthree has been extended~\cite{DBLP:conf/sat/HoderB12,DBLP:conf/fmcad/GurfinkelI15} and adapted to software verification~\cite{DBLP:journals/fmsd/CimattiGMT16,sttt_siemens2020}.
This paper develops a \emph{quantitative} \icthree framework for probabilistic models.

\paragraph{MDPs.}
Markov decision processes (MDPs) extend TSs with discrete probabilistic choices.
They are central in planning, AI as well as in modeling randomized distributed algorithms.
A key question in verifying MDPs is \emph{quantitative} reachability: \enquote{\emph{is~the (maximal) probability to reach $B$ at most $\lambda$?}}. 
Quantitative reachability~\cite{DBLP:reference/mc/BaierAFK18,DBLP:series/lncs/BaierHK19} reduces to solving linear programs (LPs).
Various tools support MDP model checking, e.g., \prism~\cite{DBLP:conf/cav/KwiatkowskaNP11}, \storm~\cite{DBLP:conf/cav/DehnertJK017}, \modest~\cite{DBLP:conf/tacas/HartmannsH14}, and \epmc~\cite{DBLP:conf/fm/HahnLSTZ14}. 
The LPs are mostly solved using (variants of) value iteration~\cite{DBLP:journals/tcs/HaddadM18,DBLP:conf/cav/Baier0L0W17,DBLP:conf/cav/QuatmannK18,DBLP:journals/corr/abs-1910-01100}. 
Symbolic BDD-based MDP model checking originated two decades ago~\cite{DBLP:conf/tacas/AlfaroKNPS00} and is rather successful.

\paragraph{Towards \icthree for MDPs.}
Despite the success of BDD-based symbolic methods in tools like \prism, \icthree has not penetrated probabilistic model checking yet.
The success of \icthree and the importance of quantitative reachability in probabilistic model checking raises the question \emph{whether and how \icthree can be adapted---not just utilized---to reason about quantitative reachability in MDPs}.
This paper addresses the challenges
%of adapting \icthree to verifying MDPs 
of answering this question. It extends \icthree in several dimensions to overcome these hurdles, making \pricthree---to our knowledge---\emph{the first \icthree framework for quantitative reachability in MDPs}\footnote{Recently, (standard) \icthree for TSs was \emph{utilized} in model checking Markov chains~\cite{DBLP:journals/corr/abs-1909-08017} to on-the-fly compute the states that cannot reach $B$.}.
Notably, \pricthree is conservative: For a threshold $\lambda = 0$, \pricthree{} solves the same qualitative problem \emph{and behaves (almost) the same as standard \icthree{}}. 
Our main contribution is developing the theory underlying \pricthree, which is accompanied by a proof-of-concept implementation.
\begin{challenge}[Leaving the Boolean domain]
\label{challenge:domain}
\icthree iteratively computes \emph{frames}, which are over-approximations of sets of states that can reach $B$ in a bounded number of steps.
%The quantitative setting focuses on computing reachability probabilities.
For MDPs, Boolean reachability becomes a \emph{quantitative reachability probability}.
This requires a shift: frames become real-valued functions rather than sets of states.
Thus, there are infinitely many possible frames---even for finite-state MDPs---just as for infinite-state software~\cite{DBLP:journals/fmsd/CimattiGMT16,sttt_siemens2020} and hybrid systems~\cite{DBLP:conf/vmcai/SuenagaI20}. 
Additionally, whereas in TSs a state reachable within $k$ steps remains reachable on increasing $k$, the reachability probability in MDPs may increase.
This complicates ensuring termination of an \icthree algorithm for MDPs.\hfill$\triangle$
\end{challenge}%
\vspace*{-1.1\baselineskip}
\begin{challenge}[Counterexamples $\neq$ single paths]
\label{challenge:counterexamples}
For TSs, a single cycle-free path\footnote{In~\cite{DBLP:conf/sat/HoderB12}, tree-like counterexamples are used for non-linear predicate transformers in \icthree.} to $B$ suffices to refute that \enquote{\emph{$B$ is not reachable}}.
This is not true in the probabilistic setting~\cite{DBLP:journals/tse/HanKD09}.
Instead, proving that the probability of reaching $B$ exceeds the threshold $\lambda$ requires \emph{a set of possibly  cyclic paths}---e.g., represented as a sub-MDP~\cite{DBLP:journals/tocl/ChadhaV10}---whose probability mass exceeds $\lambda$.
Handling sets of paths as counterexamples in the context of \icthree is new.\hfill$\triangle$
\end{challenge}%
\vspace*{-1.1\baselineskip}
\begin{challenge}[Strengthening]
\label{challenge:strengthen}
This key \icthree technique intuitively turns a proof obligation of type (\textsc{i}) \enquote{state $s$ is unreachable from the initial state $\sinit$} into type~(\textsc{ii})~\enquote{$s$'s~\mbox{\emph{predecessors}} are unreachable from $\sinit$}.
A first issue is that in the quantitative setting, the standard characterization of reachability probabilities in MDPs (the Bellman equations) inherently \emph{reverses} the direction of reasoning (cf.\ \enquote{reverse} \icthree~\cite{DBLP:conf/mbmv/SeufertS17}): 
Hence, strengthening turns (\textsc{i}) \enquote{$s$ cannot reach $\bad$} into (\textsc{ii}) \enquote{$s$'s \emph{successors} cannot reach $\bad$}.%

A much more challenging issue, however, is that in the quantitative setting obligations of type (\textsc{i})  read \enquote{$s$ is reachable \emph{with at most probability $\delta$}}.
However, the strengthened type~(\textsc{ii}) obligation must then read:
\enquote{\emph{the weighted sum over the reachability probabilities of the successors of $s$} is at most $\delta$}.
In general, %As soon as $s$ has more than one successor, 
there are infinitely many possible choices of subobligations for the successors of $s$ in order to satisfy the original obligation, because---grossly simplified---there are infinitely many possibilities for $a$ and $b$ to satisfy weighted sums such as $\tfrac{1}{3} a + \tfrac{2}{3} b \leq \delta$.
While we only need one choice of subobligations, picking a \emph{good} one is approximately as hard as solving the entire problem altogether.
We hence require a heuristic, which is guided by a \emph{user-provided oracle}.\hfill$\triangle$
\end{challenge}%
\vspace*{-1.1\baselineskip}
\begin{challenge}[Generalization] %% treat this as the last challenge
\label{challenge:generalization}
\enquote{One of the key components of \icthree is [inductive] generalization}~\cite{DBLP:conf/vmcai/Bradley11}.
Generalization~\cite{DBLP:conf/fmcad/HassanBS13} abstracts single states. It makes \icthree scale, but is \emph{not} essential for correctness.
To facilitate generalization, systems should be encoded symbolically, i.e., integer-valued program variables describe states.
Frames thus map variables to probabilities. 
A first aspect is how to effectively present them to an SMT-solver.
Conceptually,  we use uninterpreted functions and universal quantifiers (encoding program behavior) together with linear real arithmetic to encode the weighted sums occurring when reasoning about probabilities.
%To make this practically feasible, reductions and simplifications are used to eliminate universal quantification and uninterpreted functions.
A second aspect is more fundamental:
Abstractly, \icthree's generalization guesses an unreachable set of states.
We, however, need to guess this set \emph{and} a probability for each state. 
To be effective, these guesses should moreover eventually yield an inductive frame, which is often highly nonlinear.
We propose three SMT-guided interpolation variants for guessing these maps.\hfill$\triangle$
\end{challenge}%
\vspace*{-.5\baselineskip}
\paragraph{Structure of this paper.}
We develop \pricthree gradually:
%After the preliminaries in Sect.~\ref{sec:preliminaries}, 
We explain the underlying rationale in Sect.~\ref{sec:setup}. 
We also describe the core of \pricthree---called \mainloop---which resembles closely the main loop of standard \icthree, but uses adapted frames and termination criteria (Chall.~\ref{challenge:domain}). % for TSs. 
%To treat the first two challenges\sebastianmargin{but the heuristic is in challenge 'strengthening'}\sebastianmargin{independent of type of counterexample},
In line with Chall.~\ref{challenge:strengthen}, \mainloop is parameterized by a heuristic $\heuristic$ which is applied whenever we need to select one out of infinitely many probabilities. 
No requirements on the quality of \heuristic are imposed.
$\mainloop$ is \emph{sound} and always terminates: If it returns $\true$, then the maximal reachability probability is bounded by $\lambda$. 
%i.e., $\prevtmax{\mc}{\sinit}{\bad} \leq \lambda$.
Without additional assumptions about \heuristic, \mainloop is \emph{incomplete}: on returning $\false$, it is unknown whether the returned subMDP is indeed a counterexample (Chall.~\ref{challenge:counterexamples}).
%  within the threshold $\lambda$. 
Sect.~\ref{sec:strengthen} details strengthening (Chall.~\ref{challenge:strengthen}).
Sect.~\ref{sec:outermost_loop} presents a sound \emph{and} complete algorithm \outerloop on top of \mainloop.
%That is, \outerloop always terminates and returns $\true$ iff $\prevtmax{\mc}{\sinit}{\bad} \leq \lambda$.
Sect.~\ref{sec:practical} presents a prototype, discusses our chosen heuristics, and addresses Chall.~\ref{challenge:generalization}.
Sect.~\ref{sec:experiments} shows some encouraging experiments, but also illustrates need for further progress.%

\subsubsection{Related Work.}
%\label{sec:related_work}
% !TEX root = ./main.tex
%
%
Just like \icthree has been a symbiosis of different approaches, \pricthree has been inspired by several existing techniques from the verification of probabilistic systems.

\paragraph{BMC.}
Adaptions of BMC to Markov chains (MCs) with a dedicated treatment of cycles %extending the LP 
have been pursued in~\cite{DBLP:conf/vmcai/WimmerBB09}. 
The encoding in~\cite{DBLP:conf/hybrid/FranzleHT08} annotates sub-formulae with probabilities.
The integrated SAT solving process implicitly unrolls all paths leading to an exponential blow-up.
%It is effective for short horizons only.
In \cite{DBLP:conf/qest/RabeWKYH14}, this is circumvented by grouping paths, discretizing them, and using an encoding with quantifiers and bit-vectors, but without numerical values.
Recently, \cite{DBLP:journals/corr/abs-1903-09354} extends this idea to a PAC algorithm by purely propositional encodings and (approximate) model counting \cite{DBLP:conf/ijcai/ChakrabortyFMV15}. 
These approaches focus on MCs and are not mature yet.
%% To our knowledge, these approaches to probabilistic model checking are not mature yet.

\paragraph{Invariant synthesis.}
%\icthree either yields a path from the initial state to a bad state---thus witnessing reachability---or an \emph{inductive invariant} which does not include the bad state---thus proving unreachability.
%For an TS, an inductive invariant amounts to a set of states which cannot be left by taking any transition starting in one of those states.\sebastianmargin{feels repetitive?}
Quantitative loop invariants are key in analyzing \emph{probabilistic programs} whose operational semantics are (possibly infinite) MDPs~\cite{DBLP:journals/pe/GretzKM14}.
%One of the main challenges when reasoning about probabilistic programs is that quantiative loop invariants do not simply describe a set of states.
%Rather, they capture a numerical property that holds before and after each loop iteration, e.g. ``the probability of termination is below 0.5''.\sebastianmargin{This is MDP challenge 1}
A quantitative invariant $I$ maps states to probabilities.
%%, e.g., stating that the probability from this state is below $0.5$.
$I$ is shown to be an invariant by comparing $I$ to the result of applying the MDP's Bellman operator to~$I$.
Existing approaches for invariant synthesis are, e.g., based on weakest pre-expectations~\cite{DBLP:conf/stoc/Kozen83,DBLP:series/mcs/McIverM05,DBLP:journals/jacm/KaminskiKMO18,thesis:kaminski,DBLP:journals/pacmpl/HarkKGK20}, template-based constraint solving~\cite{DBLP:conf/qest/GretzKM13}, 
 notions of martingales~\cite{DBLP:conf/cav/ChakarovS13,DBLP:conf/cav/BartheEFH16,DBLP:journals/pacmpl/AgrawalC018,DBLP:conf/atva/TakisakaOUH18},
and solving recurrence relations~\cite{DBLP:conf/atva/BartocciKS19}.
All but the last technique require user guidance.%\sebastianmargin{Our oracle might be considered user guidance...}

%% Moreover, it is noteworthy that quantitative invariants lack some of the properties that make interpolants popular for synthesizing invariants in the qualitative case\christophmargin{needs citation}.

\paragraph{Abstraction.}
To combat state-space explosion, abstraction is often employed.
%% Bisimulations or counter-abstractions are powerful but are not always applicable or inadequate using symbolic data structures. 
CEGAR for MDPs~\cite{DBLP:conf/cav/HermannsWZ08} deals with explicit sets of paths as counterexamples. 
Game-based abstraction~\cite{DBLP:journals/fmsd/KattenbeltKNP10,DBLP:conf/tacas/HahnHWZ10} and partial exploration~\cite{DBLP:conf/atva/BrazdilCCFKKPU14} exploit that not all paths have to be explored to prove bounds on reachability probabilities. 
%Whereas game-based abstraction merges states to abstract away from the precise behavior along a path, partial exploration cuts paths and assumes that the suffix of this paths does (not) lead to $B$. 
%Partial exploration may be seen as a flexible but non-symbolic variant of BMC. \sebastianmargin{game-based abstraction and CEGAR are more or less relevant for generalization, partial exploration matches our construction of frames}

\paragraph{Statistical methods and (deep) reinforcement learning.}
Finally, an avenue that avoids storing a (complete) model are simulation-based approaches (statistical model checking~\cite{DBLP:journals/tomacs/AghaP18}) and variants of reinforcement learning, possibly with  neural networks. 
For MDPs, these approaches yield weak statistical guarantees~\cite{DBLP:conf/isola/DArgenioHS18}, but may provide good oracles.

\section{Problem Statement}
\label{sec:preliminaries}
%
% !TEX root = ./main.tex

Our aim is to prove that the \emph{maximal probability} of \emph{reaching} a \emph{set $\bad$ of bad states} from the initial state $\sinit$ of a \emph{Markov decision process} $\mdp$ is at most some \emph{threshold}~$\thresh$.
Below, we give a formal description of our problem. We refer to~\cite{puterman1994markov,BK08} for a thorough introduction.%
\begin{definition}[MDPs]
\label{def:mdps}
	 A % (finite-state)
	  \emph{Markov decision process} (\emph{MDP}) is a tuple $\mdp = \mdps$,
	 where $S$ is a finite set of \emph{states}, $\sinit \in S$ is the \emph{initial state}, $\act$ is a finite
	 set of \emph{actions}, and $\fun{\transmatrix}{S \times \act \times S}{[0,1]}$ is a \emph{transition probability function}. 
	 For state $s$, let $\acts{s} = \left\{ a \in \act ~\mid~ \exists s' \in S \colon \transmatrix(s,a,s') > 0\right\}$ be the  \emph{enabled actions} at $s$. 
     For all states $s \in S$, 
     we require $|\acts{s}| \geq 1$ 
     and $\sum_{s'\in S} \transmatrix(s,a,s') = 1$.
     %\footnote{We allow $\leq 1$ to ease Def.~\ref{def:submdp}} for all $s\in S$,  $a \in \acts{s}$.
	 \hfill $\triangle$
\end{definition}%
For this paper, we fix an MDP $\mdp = \mdps$, a set of \mbox{\emph{bad states} $\bad \subseteq S$}, and a threshold $\thresh \in [0, 1]$.
The \emph{maximal}\footnote{Maximal with respect to all possible resolutions of nondeterminism in the MDP.} \emph{(unbounded) reachability probability} to eventually reach a state in $\bad$ from a state $s$ is denoted by $\prevtmax{\mc}{s}{\bad}$.
We characterize
$\prevtmax{\mc}{s}{\bad}$ using the so-called \emph{Bellman operator}. %\footnote{For a more thorough introduction to reasoning about MDPs, we refer to~\cite{puterman1994markov,BK08}.}
Let $M^N$ denote the set of functions from $N$ to $M$.
Anticipating \icthree terminology, we call a function $F \in [0,1]^{S}$ a \emph{frame}. %Ich habe hier (quantitative) entfernt, da die Gegenüberstellung (qualitative) (quantitative) erst später auftaucht. --Benni
We denote by $F[s]$ the evaluation of frame $F$ for state $s$.%
\begin{definition}[Bellman Operator]
	 For a set of actions $A \subseteq \act$, we define the  \emph{Bellman operator for $A$} as  a frame transformer $\fun{\bmsymbola{A}}{[0,1]^{S}}{[0,1]^{S}}$ with%
	\begin{align*}
		\bma{A}{F}[s] \eeq \begin{cases}
			1, & ~\text{if}~ s \in \bad \\
			%
%			\max\limits_{a \in A}~  \sum\limits_{s' \in S \setminus \bad} \transmatrix(s, a, s') \cdot F[s'] \pplus \sum\limits_{s' \in \bad} \transmatrix(s, a, s')~, & ~\text{if}~ s \notin \bad~.
			%
			\max\limits_{a \in A}~  \sum\limits_{s' \in S} \transmatrix(s, a, s') \cdot F[s']~, & ~\text{if}~ s \notin \bad~.
     \end{cases}
   \end{align*}
   We write $\bmsymbola{a}$ for~$\bmsymbola{ \{ a \}}$, $\bmsymbol$ for $\bmsymbola{\act}$, and call $\bmsymbol$ simply \emph{the Bellman operator}. \hfill $\triangle$
\end{definition}%
For every state $s$, the maximal reachability probability $\prevtmax{\mc}{s}{\bad}$ is then given by the 
least fixed point of the Bellman operator $\Phi$. 
That is,
\begin{align*}
	\forall\, s\colon\quad \prevtmax{\mc}{s}{\bad} \eeq \bigl(\lfp~ \bmsymbol \bigr)[s]~,
\end{align*}%
where the underlying partial order on frames is a complete lattice with ordering
\begin{align*}
	F_1 \lleq F_2 \qqiff \forall\, s \in S\colon \quad F_1[s] \lleq F_2[s]~.
\end{align*}%
%
%This lattice has as least and greatest element frame $\zeroexp$ and $\oneexp$, respectively, 
%where $\zeroexp$ maps every state to $0$ and $\oneexp$ maps every state to  $1$.
%The least fixed point of the Bellman operator then characterizes the maximal reachability probability for every state, i.e.,
%\begin{align*}
%	\forall\, s\colon\quad \prevtmax{\mc}{s}{\bad} \eeq \bigl(\lfp~ \bmsymbol \bigr)[s]~.
%\end{align*}%
%
In terms of the Bellman operator, our formal problem statement reads as follows:
%
%In this paper, we denote by $\sqm$ the set $\left( S \setminus \szero \right) \setminus \bad$ of states that are neither bad states nor have a probability of 0 to reach a bad state, and assume for the remainder of this paper  that $\szero$ and $\sqm$ are given. 
%
%
%
\begin{center}\fbox{%
	\parbox{.98\textwidth}{%
		\centering{
			Given an MDP $\mdp$ with initial state $\sinit$, 
			a set~$\bad$ of bad states, 
			and a \emph{threshold} $\thresh \in [0,1]$,%
			\begin{align*}
				\text{prove or refute that}\qquad 
				\prevtmax{\mc}{\sinit}{\bad}\eeq \bigl(\lfp~ \bmsymbol \bigr)[\sinit] 
				%\eeq \lim_{n\rightarrow\infty} \bmn{n}{\zeroexp} [\sinit]  
				\quad\leq{}\quad \thresh~.
			\end{align*}%
			\vspace{-1.4em}
		}%
	}%
}%
\end{center}
Whenever $\prevtmax{\mc}{\sinit}{\bad} \leq \thresh$ indeed holds, we say that the MDP $\mdp$ is \emph{safe} (with respect to the set of bad states $\bad$ and threshold $\thresh$); otherwise, we call it \emph{unsafe}.%
\begin{recovery}
	\label{recovery:1}
For $\thresh = 0$, our problem statement is equivalent to the \emph{qualitative reachability} problem solved by (reverse) standard \icthree, i.e, prove or refute that all bad states in $\bad$ are \emph{unreachable} from the initial state $\sinit$.
\end{recovery}%
\begin{example}
\label{ex:mdp}
%As a running example,
The MDP $\mdp$ in Fig.~\ref{fig:runningex} consists of 6 states with initial state $s_0$ and bad states $\bad = \{ s_5 \}$.
In $s_2$, actions $a$ and $b$ are enabled; in all other states, one unlabeled action is enabled.
We have $\prevtmax{\mc}{s_0}{\bad} = \nicefrac{2}{3}$.
Hence, $\mdp$ is safe for all thresholds $\lambda \geq \nicefrac{2}{3}$ and unsafe for $\lambda < \nicefrac{2}{3}$.
In particular, $\mdp$ is unsafe for $\lambda = 0$ as $s_5$ is \emph{reachable} from $s_0$.\hfill$\triangle$
\end{example}%
\begin{figure}[t]
\centering
% !TEX root = ../main.tex
%
%
\begin{tikzpicture}[every node/.style={font={\scriptsize}}, shorten >=1pt,nodestyle/.style={draw,circle,font={\scriptsize}},actstyle/.style={draw,circle,fill=black,inner sep=1pt}]
	\draw[help lines,use as bounding box,white] (-5.1,-.5) grid (6.7, 0.7);
	\node[nodestyle] (s0) {$s_0$};
	\node[nodestyle,right=1cm of s0] (s1) {$s_1$};
	\node[nodestyle,left=1cm of s0] (s2) {$s_2$};
	\node[nodestyle,right=1cm of s1] (s3) {$s_3$};
	\node[nodestyle,left=2cm of s2] (s4) {$s_4$};
    \node[nodestyle,fill=Crimson,right=2cm of s3] (s5) {$\textcolor{white}{s_5}$};
	\node[actstyle,above=0.2cm of s2] (b) {};
	\node[actstyle,left=1cm of s2] (a) {};
	
	\draw[<-] (s0) edge ++(0,0.6);
	
	\draw[->] (s0) edge[bend left=20] node[above] {$\nicefrac{1}{2}$} (s1);
	\draw[->] (s0) edge node[above, pos=0.6] {$\nicefrac{1}{2}$} (s2);
	
	\draw[-] (s2) -- node[above] {$a$} (a); 
	\draw[-] (s2) --  node[left] {$b$} (b); 
	
	\draw[->] (b) edge[bend left=10] node[above,pos=0.6] {$1$} (s0);

	\draw[->] (s1) edge[bend left=20] node[above] {$\nicefrac{1}{2}$} (s0);
	\draw[->] (s1) edge node[above] {$\nicefrac{1}{2}$} (s3);
	
	\draw[->] (s3) -- node[right,pos=0.5] {$\nicefrac{1}{3}$} +(0,0.7) -| (s4);
	\draw[->] (s3) edge node[above] {$\nicefrac{2}{3}$} (s5);
	\draw[->] (a) edge node[above] {$\nicefrac{1}{2}$} (s4);
	\draw[->] (a) -- node[left,pos=0.5] {$\nicefrac{1}{2}$} +(0,-0.5) -| (s5);

	\draw[->] (s4) edge[loop left] node[left] {$1$} (s4);
	\draw[->] (s5) edge[loop right] node[right] {$1$} (s5);
\end{tikzpicture}%
 \caption{The MDP $\mdp$ serving as a running example.}	
 \label{fig:runningex}
\end{figure}%
\section{The Core \boldpricthree Algorithm}
\label{sec:setup}
%
% !TEX root = ./main.tex

%Recall that we have fixed for the remainder of this paper 
%Let $\mdp = \mdps$ be an MDP, $\bad \subseteq S$ be a set of bad states, and $\thresh \in [0, 1]$ be some threshold.
The purpose of \pricthree is to prove or refute that the maximal probability to reach a bad state in $\bad$ from the initial state $\sinit$ of the MDP $\mdp$ is at most~$\thresh$.
In this section, we explain the rationale underlying \pricthree.
Moreover, we describe the core of \pricthree---called \mainloop---which bears close resemblance to the main loop of standard \icthree for TSs.

Because of the inherent direction of the Bellman operator, we build \pricthree on \emph{reverse} \icthree~\cite{DBLP:conf/mbmv/SeufertS17}, cf.~Chall.~\ref{challenge:strengthen}.
Reversing constitutes a shift from reasoning along the direction \emph{initial-to-bad} to \emph{bad-to-initial}.
While this shift is mostly \emph{inessential} to the fundamentals underlying \icthree, the reverse direction is unswayable in the probabilistic setting.
Whenever we draw a connection to standard \icthree, we thus generally mean \emph{reverse} \icthree.

\subsection{Inductive Frames}
\label{sec:pric3-basic-principle}

\icthree for TSs operates on
(\emph{quali}tative) frames representing sets of states of the TS at hand.
A frame $F$ can hence be thought of as a mapping\footnote{In \icthree, frames are typically characterized by logical formulae. To understand \icthree's fundamental principle, however, we prefer to think of frames as functions in $\{0,1\}^{S}$ partially ordered by $\leq$.} from states to~\mbox{$\{0, 1\}$}. 
%, where $F[s] = 0$ postulates \enquote{$s$ cannot reach $\bad$}. 
In \pricthree for MDPs, we need to move from a Boolean to a quantitative regime. 
Hence, a (\emph{quanti}tative) frame is a mapping from states to probabilities in $[0,1]$.
%, where $F[s] = \delta$ postulates \enquote{the probability that $s$ can reach $\bad$ is at most $\delta$}.
%mappings  and therefore only use \emph{quantitative frames}, i.e., mappings from states to $[0,1]$. In the sequel, these are just called frames. %Hence, in the remainder, we omit the prefix quantitative.

For a given TS, consider the frame transformer $T$ that adds to a given input frame~$F'$ all bad states in $\bad$ and all predecessors of the states contained in $F'$.
The rationale of standard (reverse) \icthree is to find a frame $F \in \{0,1\}^{S}$ such that (\textsc{i})~the initial state $\sinit$ does not belong to $F$ and (\textsc{ii}) applying $T$ takes us down in the partial order on frames, i.e.,
\begin{align*}
   (\textsc{i})\quad F[\sinit] \eeq 0 \qqand (\textsc{ii})\quad T(F) \lleq F~.
\end{align*}%
Intuitively, (\textsc{i}) postulates the \emph{hypothesis} that $\sinit$ cannot reach $\bad$ and (\textsc{ii})~expresses that $F$ is closed under adding bad states and taking predecessors, thus affirming the hypothesis.

Analogously, the rationale of \pricthree is to find a frame $F \in [0,1]^{S}$ such that (\textsc{i})~$F$~postulates that the probability of $\sinit$ to reach $\bad$ is at most the threshold $\thresh$ and (\textsc{ii}) applying the Bellman operator $\bmsymbol$ to $F$ takes us down in the partial order on frames, i.e.,
\begin{align*}
	(\textsc{i})\quad F[\sinit] \lleq \thresh \qqand (\textsc{ii})\quad \bmsymbol(F) \lleq F~.
\end{align*}
Frames satisfying the above conditions are called \emph{inductive invariants} in~\icthree. 
We adopt this terminology.
By \emph{Park's Lemma}~\cite{park1969fixpoint}, which in our setting reads
\begin{align*}
	\bmsymbol(F) \lleq F \qimplies \lfp~\bmsymbol \lleq F~,
\end{align*}
an inductive invariant $F$ would indeed \emph{witness} that $\prevtmax{\mc}{\sinit}{\bad} \leq \thresh$, because
\begin{align*}
	\prevtmax{\mc}{\sinit}{\bad} \eeq \bigl( \lfp~ \bmsymbol \bigr)[\sinit] \lleq F[\sinit] \lleq \thresh~.
\end{align*}%
If no inductive invariant exists, then standard \icthree will find a counterexample: a \emph{path} from the initial state $\sinit$ to a bad state in $\bad$, which serves as a witness to refute.
Analogously, \pricthree will find a counterexample, but of a different kind:
Since single paths are insufficient as counterexamples in the probabilistic realm (Chall.~\ref{challenge:counterexamples}), \pricthree will instead find a \emph{subsystem} of states of the MDP witnessing $\prevtmax{\mc}{\sinit}{\bad} > \thresh$.

\subsection{The \boldpricthree Invariants}
\label{sec:pric3invariants}
Analogously to standard \icthree, \pricthree aims to find the inductive invariant by maintaining a \emph{sequence of frames} $F_0 \leq F_1 \leq F_2 \leq \ldots$ such that $F_i[s]$ overapproximates the maximal probability of reaching $B$ from $s$ within \emph{at most $i$ steps}.
This \emph{$i$-step-bounded reachability probability} $\prevtbmax{\mdp}{s}{i}{\bad}$
can be characterized using the Bellman operator:
$\bm{\zeroexp}$ is the $0$-step probability; it is $1$ for every $s \in \bad$ and $0$ otherwise.
For any $i \geq 0$, we have
\begin{align*}
    \prevtbmax{\mdp}{s}{i}{\bad} 
    \eeq \Bigl( \bmn{i}{\vphantom{\bigl(}\bm{\zeroexp}} \Bigr)[s]
    \eeq \Bigl( \bmn{i+1}{\zeroexp} \Bigr)[s]~,
\end{align*}%
where $\zeroexp$, the frame that maps every state to $0$, is the least frame of the underlying complete lattice.
For a finite MDP, the \emph{unbounded} reachability probability is then given by the limit
\begin{align*}
	\prevtmax{\mc}{s}{\bad} \eeq \bigl(\lfp~ \bmsymbol \bigr)[s]~ \stackrel{(*)}{=}{}~  \left( \hspace{.125ex} \lim_{n\rightarrow\infty} \bmn{n}{\zeroexp} \right)[s] %~,
    \eeq \lim_{n\rightarrow\infty} \prevtbmax{\mdp}{s}{n}{\bad}~,
\end{align*}%
where $(*)$ is a consequence of the well-known Kleene fixed point theorem~\cite{DBLP:journals/ipl/LassezNS82}.
%~\cite{DBLP:journals/ipl/LassezNS82}.

The sequence $F_0 \leq F_1 \leq F_2 \leq \ldots$ %$\frameseq$ 
maintained by \pricthree should frame-wise %\emph{point-wise} (i.e., frame-wise) 
overapproximate the increasing sequence $\bm{\zeroexp} \leq \bmn{2}{\zeroexp} \leq \bmn{3}{\zeroexp} \ldots$.
%$\bmseq$. 
Pictorially: %, i.e.\ we want to have the following picture:%
\begin{center}%
	\renewcommand{\arraystretch}{1.2}%
    \begin{tabular}{c@{\quad}c@{\quad}c@{\qquad}c@{\qquad}c@{\qquad}c@{\qquad}c@{\qquad}c@{\qquad}c@{\qquad}c@{\qquad}c}
            & & $F_0$ & $\leq$ & $F_1$ & $\leq$ & $F_2$ & $\leq$ & $\ldots$  & $\leq$ & $F_k$ \\
            & & \rotatebox{90}{$\leq$} & ~ & \rotatebox{90}{$\leq$} & ~ & \rotatebox{90}{$\leq$} & ~ & ~ & ~ & \rotatebox{90}{$\leq$} \\
            $\zeroexp$ & $\leq$ & $\bm{\zeroexp}$ & $\leq$ & $\bmn{2}{\zeroexp}$ & $\leq$ & $\bmn{3}{\zeroexp}$ & $\leq$ & $\ldots$  & $\leq$ & $\bmn{k+1}{\zeroexp}$
	\end{tabular}%
	\renewcommand{\arraystretch}{1}%
\end{center}%
However, the sequence $\bmseq$ will never explicitly be known to \pricthree.
Instead, \pricthree will ensure the above frame-wise overapproximation property implicitly by enforcing the so-called \emph{\pricthree invariants} on the frame sequence $\frameseq$.
Apart from allowing for a threshold $0 \leq \thresh \leq 1$ on the maximal reachability probability,
these invariants coincide with the standard \icthree invariants (where $\lambda = 0$ is fixed).
Formally:%
\begin{definition}[\boldpricthree Invariants]%
\label{def:invariants}
	Frames $\frameseqk$, for $k \geq 0$, satisfy the \emph{\pricthree invariants}, a fact we will denote by $\icinv{\frameseqk}$, if all of the following hold:\\%

	\vspace{-.5em}\renewcommand{\arraystretch}{1.25}%
	\begin{tabular}{l@{\qquad}l}
        \textnormal{\textbf{1.}}\quad\emph{\textbf{Initiality:}} & $F_0 \eeq \bm{\zeroexp}$ \\
		\textnormal{\textbf{2.}}\quad\emph{\textbf{Chain Property:}} & $\forall\, 0 \leq i < k \colon \quad F_i \lleq F_{i+1}$ \\
		\textnormal{\textbf{3.}}\quad\emph{\textbf{Frame-safety:}} & $\forall\, 0 \leq i \leq k \colon\quad F_i[\sinit] \lleq \thresh$ \\
		\textnormal{\textbf{4.}}\quad\emph{\textbf{Relative Inductivity:}} & $\forall\, 0 \leq i < k \colon \quad \bm{F_i} \lleq F_{i+1}$
	\end{tabular}%
	\renewcommand{\arraystretch}{1}\\[-1em]%
	\noindent{}$~$\hfill $\triangle$
\end{definition}%
%
%
%Let us now understand how satisfaction of the \pricthree invariants 
The \pricthree invariants enforce the above picture:
The \emph{chain property} ensures $F_0 \leq F_1 \leq \ldots \leq F_k$.
%To understand that $F_0,\,\ldots,\, F_k$ frame-wise 
%overapproximates $\bmseqk$, we proceed by induction~on~$k$:
We have $\bm{\zeroexp} = F_0 \leq F_0$ by \emph{initiality}.
Assuming $\bmn{i+1}{\zeroexp} \leq F_i$ as induction hypothesis, monotonicity of $\bmsymbol$ and \emph{relative inductivity} imply $\bmn{i+2}{\zeroexp} \leq \bmsymbol(F_i) \leq F_{i+1}$.%, the latter by \emph{relative~inductivity}.
%
%
%The \emph{chain property} additionally ensures that the sequence $\frameseqk$ is monotonically increasing.
%
%By our above considerations 
%\benni{Lemma 1 hat überhaupt nichts mit der chain property zu tun. Dadurch, dass das Bild von hier verschwunden ist, wirkt es auf mich so, als wäre insbesondere der vorangehende Satz für Lemma 1 relevant.} 
%\benni{Ich hatte das Bild hier und nicht da oben, denn das Bild ist eine Konsequenz von initiality, relative inductivity und der chain property. Hier würde das wie eine Konsequenz wirken. Oben wirkt das für mich eher wie ein Wunschbild, auf welches aber an dieser Stelle gar nicht mehr zurückverwiesen wird.}
%and the characterization of step-bounded reachability through the Bellman operator, we immediately obtain the following result:
%

By overapproximating $\bmseqk$, the frames $F_0,\,\ldots,\, F_k$ in effect bound the maximal step-bounded reachability probability of every state:%
\begin{lemma}
\label{lem:ic3_inv_overapprox}
	Let frames $\frameseqk$ satisfy the \pricthree invariants. 
	Then%
	\begin{align*}
		%\icinv{F_0,\,\ldots,\, F_k} 
		%
		%\qqimplies
		%
        \forall\, s~~
        \forall\, i \leq k\colon \quad
		%\forall\, i,s \colon 
        \prevtbmax{\mdp}{s}{i}{\bad} \lleq F_i[s].
	\end{align*}%
\end{lemma}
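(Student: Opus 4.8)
The plan is to prove the statement by a straightforward induction on $i$, after first reducing it to a purely frame-wise inequality. The key observation is the characterization recorded just before the lemma: $\prevtbmax{\mdp}{s}{i}{\bad} = (\bmn{i+1}{\zeroexp})[s]$. Thus, evaluating at an arbitrary state $s$, it suffices to establish $\bmn{i+1}{\zeroexp} \lleq F_i$ for every $i \leq k$, where $\lleq$ is the pointwise order on frames. This is exactly the overapproximation property sketched informally in the paragraph preceding the lemma, so the proof amounts to making that sketch precise. Accordingly, I would fix frames $\frameseqk$ satisfying $\icinv{\frameseqk}$ and argue the frame-wise bound by induction on $i$.

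For the base case $i = 0$, I would invoke \emph{initiality}, which gives $F_0 = \bm{\zeroexp} = \bmn{1}{\zeroexp}$, so the desired inequality holds with equality.

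For the inductive step, I would assume $\bmn{i+1}{\zeroexp} \lleq F_i$ for some $i < k$ and apply the Bellman operator $\bmsymbol$ to both sides. Monotonicity of $\bmsymbol$ then yields $\bmn{i+2}{\zeroexp} = \bm{\bmn{i+1}{\zeroexp}} \lleq \bm{F_i}$, and \emph{relative inductivity} supplies $\bm{F_i} \lleq F_{i+1}$; chaining these two inequalities gives $\bmn{i+2}{\zeroexp} \lleq F_{i+1}$, which closes the induction. Evaluating the resulting frame-wise bound at $s$ and rewriting the left-hand side via $\prevtbmax{\mdp}{s}{i}{\bad} = (\bmn{i+1}{\zeroexp})[s]$ recovers the claim.

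I expect no genuine obstacle: this is a textbook fixed-point induction in which \emph{initiality} provides the base and \emph{relative inductivity} provides the step. The only ingredient not listed among the \pricthree invariants is monotonicity of $\bmsymbol$, but this holds routinely since each component of $\bm{F}$ is either the constant $1$ on $\bad$ or a maximum over non-negative convex combinations of the entries of $F$, both of which are monotone in $F$. I would also note, as a sanity check, that the \emph{chain property} and \emph{frame-safety} invariants are not actually needed for this particular lemma; only \emph{initiality} and \emph{relative inductivity} are consumed.
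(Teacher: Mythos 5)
Your proof is correct and follows essentially the same route as the paper, which argues this lemma in-text just before its statement: the characterization $\prevtbmax{\mdp}{s}{i}{\bad} = \bigl(\bmn{i+1}{\zeroexp}\bigr)[s]$ reduces the claim to the frame-wise bound $\bmn{i+1}{\zeroexp} \leq F_i$, established by induction with \emph{initiality} as the base case and monotonicity of $\bmsymbol$ plus \emph{relative inductivity} for the step. Your side remarks (monotonicity of $\bmsymbol$, and that the chain property and frame-safety are not consumed here) are also accurate.
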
%
%
%
%%%%\begin{proof}
%%%%	%
%%%%	%
%%%%	Recall that $\prevtbmax{\mdp}{s}{j}{\bad} \geq \bmn{j}{\zeroexp}$ for all $j \geq 0$. We proceed by induction on $i$. \\ \\
%%%%	%
%%%%	%
%%%%	\noindent
%%%%	%
%%%%	\emph{Base case $i=0$.} We have
%%%%	%
%%%%	\[
%%%%	  \bmn{0}{\zeroexp} \eeq \zeroexp \eeleq \zeroexp \eeq F_0~.
%%%%	\]
%%%%	%
%%%%	\emph{Induction step.} We have
%%%%	%
%%%%	\begin{align*}
%%%%	   & \bmn{i+1}{\zeroexp} \\
%%%%	   %
%%%%	   \eeq & \bm{\bmn{i}{\zeroexp}} \\
%%%%	   %
%%%%	   \eeleq & \bm{F_i}     \tag{$\bmn{i}{\zeroexp} \leq F_i$ by I.H.\ and monotonicity of $\bmsymbol$} \\ 
%%%%	   %
%%%%	   \eeleq & F_{i+1}   \tag{relative inductivity}~.
%%%%	\end{align*}
%%%%	%
%%%%	%
%%%%\end{proof}
%
In particular, Lem.~\ref{lem:ic3_inv_overapprox} together with \emph{frame-safety} ensures that the maximal step-bounded reachability probability of the \emph{initial state} $\sinit$ to reach $\bad$ is at most the threshold $\thresh$.

As for proving that the \emph{unbounded} reachability probability is also at most $\thresh$, it suffices to find two consecutive frames, say $F_i$ and $F_{i+1}$, that coincide:%
\begin{lemma}
\label{lem:ic3_inv_eqframes}
	Let frames $\frameseqk$ satisfy the \pricthree invariants. 
	Then%
	\begin{align*}
		%\eqframes{\frameseqk}
	    \exists\, i < k \colon \quad F_i \eeq F_{i+1}
		\qqimplies
		\prevtmax{\mdp}{\sinit}{\bad} \lleq \thresh~.
	\end{align*}%
\end{lemma}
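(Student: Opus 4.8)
The plan is to recognize that two coinciding consecutive frames constitute a \emph{pre-fixed point} of the Bellman operator, after which Park's Lemma (already stated in the excerpt) finishes the argument immediately. So the proof is short and entirely driven by the invariants of Definition~\ref{def:invariants}.

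First I would fix an index $i < k$ witnessing $F_i \eeq F_{i+1}$, and feed this equation into \emph{relative inductivity} (invariant~4), which supplies $\bm{F_i} \lleq F_{i+1}$. Substituting $F_{i+1} = F_i$ on the right-hand side yields $\bm{F_i} \lleq F_i$, i.e.\ $F_i$ is a pre-fixed point of $\bmsymbol$. This is the only genuinely load-bearing step: note that equality of two consecutive frames says nothing about fixed points on its own, and becomes a pre-fixed-point condition \emph{precisely because} relative inductivity relates $\bm{F_i}$ to the successor frame $F_{i+1}$ rather than to $F_i$ itself. The hypothesis $i < k$ is exactly what makes the frame $F_{i+1}$ (and hence the relevant instance of invariant~4) available.

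Next I would invoke Park's Lemma, which in this setting reads $\bmsymbol(F) \lleq F \implies \lfp~\bmsymbol \lleq F$, to conclude $\lfp~\bmsymbol \lleq F_i$. Evaluating this frame inequality at the initial state $\sinit$ and chaining it with \emph{frame-safety} (invariant~3), which gives $F_i[\sinit] \lleq \thresh$, I obtain
\begin{align*}
	\prevtmax{\mdp}{\sinit}{\bad} \eeq \bigl( \lfp~ \bmsymbol \bigr)[\sinit] \lleq F_i[\sinit] \lleq \thresh~,
\end{align*}
where the first equality is just the definition of the maximal reachability probability as the least fixed point of $\bmsymbol$.

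I do not expect a real obstacle here: the argument is a three-line chain and uses neither Lemma~\ref{lem:ic3_inv_overapprox} nor the chain property (invariant~2). The only things to be careful about are invoking relative inductivity rather than any other invariant, and performing the substitution $F_{i+1} = F_i$ in the correct direction so that the pre-fixed-point inequality $\bm{F_i} \lleq F_i$ comes out; the passage from the bounded to the \emph{unbounded} probability is automatic once Park's Lemma is in place.
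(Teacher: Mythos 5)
Your proof is correct and is essentially identical to the paper's: both use $F_i = F_{i+1}$ together with relative inductivity to obtain $\bm{F_i} \leq F_i$, then apply Park's Lemma to get $\lfp~\bmsymbol \leq F_i$, and conclude via frame-safety evaluated at $\sinit$. No difference in substance, only in the level of commentary.
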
%
\begin{proof}
    $F_i = F_{i + 1}$ and \emph{relative inductivity} yield $\bmsymbol(F_i) \leq F_{i+1} = F_i$, rendering $F_i$ \emph{inductive}. By Park's lemma (cf.~Section~\ref{sec:pric3-basic-principle}), we obtain $\lfp~\bmsymbol \leq F_i$ and---by \emph{frame-safety}---conclude
	\begin{align*}
		\prevtmax{\mdp}{\sinit}{\bad} 
		\eeq 
		\bigl( \lfp~\bmsymbol \bigr) [\sinit] 
		\lleq F_i [\sinit]  
		\lleq
		\thresh~.
        \tag*{\qed}
	\end{align*}
	%
	%the latter inequation by \emph{frame-safety}.
	%Notice that we did not need to invoke \emph{initiality}.
	%\qed
\end{proof}%

\subsection{Operationalizing the \boldpricthree Invariants for Proving Safety}
%\subsection{The \boldpricthree Main Loop}
\label{sec:pric3-main-loop}
\noindent
Lem.~\ref{lem:ic3_inv_eqframes} gives us a clear angle of attack for \emph{proving} an MDP safe: %, i.e., $\prevtmax{\mdp}{\sinit}{\bad} \leq \thresh$:
Repeatedly add and refine frames approximating step-bounded reachability probabilities for more and more steps while enforcing the \pricthree invariants (cf.\ Def.~\ref{sec:pric3invariants}) until two \mbox{consecutive frames coincide}.

% !TEX root = ../main.tex
%
%
%\begin{figure}[t]
\begin{algorithm}[t]
%\setstretch{2}
\KwData{\changed{MDP $\mdp$},\quad set of bad states $\bad$,\quad \changed{threshold $\lambda$}}
\KwResult{$\true$ or $\false$ and a subset of the states of $\mdp$}
$F_0 \lleftarrow \bm{\zeroexp}$;\quad $F_1 \lleftarrow \oneexp$;\quad $k \lleftarrow 1$; \quad \changed{$\oldSubsystem \lleftarrow \emptyset$\;} \label{alg:pric3:init}
\While(\label{alg:pric3:loop}){$\true$}
{
$\success, \,  F_0, \ldots,F_k, \, \subsystem  \lleftarrow \strengthencall{\frameseqk}{\heuristic}$\; \label{alg:pric3:strengthen}
    \lIf(\label{alg:pric3:unknown}){$\neg\success$}{\Return \false, \subsystem}
       $F_{k+1} \lleftarrow \oneexp$\; \label{alg:pric3:newFrame}
       $F_0, \ldots,F_{k+1} \lleftarrow \propagatecall{F_0,\ldots,F_{k+1}}$\; \label{alg:pric3:propagate}
       \lIf(\label{alg:pric3:checkInductive}){$\exists\, 1 \leq i \leq k\colon F_i = F_{i+1}$}
       {\Return \true, \blank}
       \changed{\lIf(\label{alg:pric3:zeno}){$\oldSubsystem = \subsystem$}{\Return \false, \subsystem}}
       $k \lleftarrow k+1$; \quad \changed{$\oldSubsystem \lleftarrow \subsystem$}\;
    }
    \caption{$\prictcall{\mdp}{\bad}{\thresh}{\heuristic}$}
    \label{alg:pric3}
\end{algorithm}%
%\end{figure}%
%
%%
Analogously to standard \icthree, this approach is taken by the core loop \mainloop depicted in Alg.~\ref{alg:pric3}; differences to the main loop of \icthree (cf.~\cite[Fig. 5]{DBLP:conf/fmcad/EenMB11}) are highlighted in \changed{\changedColor}.
A~particular difference is that \mainloop is parameterized by a heuristic \heuristic for finding suitable probabilities (see Chall.~\ref{challenge:strengthen}). 
Since the precise choice of \heuristic is irrelevant for the 
soundness 
%correctness 
of \mainloop, we defer a detailed discussion of suitable heuristics to Sec.~\ref{sec:strengthen}.

As input, \mainloop takes an MDP $\mdp = \mdps$, a set $\bad \subseteq S$ of bad states, and a threshold $\thresh \in [0,1]$. 
Since the input is never changed, we assume it to be \emph{globally available}, also to subroutines.
As output, \mainloop returns $\true$ if two consecutive frames become equal.
We hence say that \mainloop is \emph{sound} if it only returns $\true$ if $\mdp$ is safe.

We will formalize soundness using Hoare triples. For precondition $\phi$, postcondition~$\psi$, and program $P$, the triple $\htriple{\phi}{P}{\psi}$ is \emph{valid} (for partial correctness) if, whenever program $P$ starts in a state satisfying precondition $\phi$ and terminates in some state $s'$, then $s'$ satisfies postcondition $\psi$.
Soundness of \mainloop then means validity of the triple%
\begin{align*}
	\htriple{\true}{\safeReturn, \blank \lleftarrow \prictcall{\mdp}{\bad}{\thresh}{\heuristic}}{\safeReturn \RRightarrow \prevtmax{\mc}{\sinit}{\bad} \leq \thresh }~.
	%\tag{$\dagger$}
\end{align*}%
Let us briefly go through the individual steps of \mainloop in Alg.~\ref{alg:pric3} and convince ourselves that it is indeed sound.
After that, we discuss why \mainloop terminates and what happens if it is unable to prove safety by finding two equal consecutive frames.%

\subsubsection{How \mainloop works.}
Recall that \mainloop maintains a sequence of frames $\frameseqk$ which is initialized in l.~\ref{alg:pric3:init} with $k = 1$, $F_0 = \bm{\zeroexp}$, and $F_1 = \oneexp$, where the frame $\oneexp$ maps every state to $1$.
Every time upon entering the \textbf{while}-loop in terms l.~\ref{alg:pric3:loop}, the initial segment $\frameseqkmo$ satisfies all \pricthree invariants (cf.~Def.~\ref{def:invariants}), whereas the full sequence $\frameseqk$ potentially violates frame-safety as it is possible that $F_{k}[\sinit] > \thresh$.

In l.~\ref{alg:pric3:strengthen}, procedure $\strengthen$---detailed in Sect.~\ref{sec:strengthen}---is called to restore \emph{all} \pricthree invariants on the \emph{entire} frame sequence:
It either returns $\true$ if successful
or returns $\false$ and a counterexample (in our case a subsystem of the MDP) if \mbox{it was unable to do so}.
%Moreover, the role of the subsystem is addressed alongside termination of \mainloop at the end of this section.
To ensure soundness of \mainloop, it suffices that $\strengthen$ restores the \pricthree invariants whenever it returns $\true$. Formally, $\strengthen$ must meet the following specification:%
\begin{definition}
\label{def:spec_strengthen}
	Procedure $\strengthen$ is \emph{sound} if the following Hoare triple is valid:
	\begin{align*}
		& 	\big\{\, 
			\icinv{F_0,\ldots,F_{k-1}} 
			\eewedge 
			F_{k-1} \leq F_k 
			\eewedge 
			\bm{F_{k-1}} \leq F_k 
			\,\big\}\\
	        & \qquad \success , \,  F_0, \ldots,F_k, \, \blank \lleftarrow  \strengthencall{\frameseqk}{\heuristic} \\
			& \big\{\, \success \RRightarrow \icinv{F_0, \ldots, F_k} 
			\,\big\}. %\label{def:spec_strengthen:safe} 
	\end{align*}%
\end{definition}%
%
%
%Hence, \mainloop returns $\false$ (line~\ref{alg:pric3:unknown}) which never compromises soundness.
%The check in line~\ref{alg:pric3:zeno} is required to ensure termination; it also does not affect soundness.
If $\strengthen$ returns $\true$, then a new frame $F_{k+1} = \oneexp$ is created in l.~\ref{alg:pric3:newFrame}.
After that,  the (now initial) segment $\frameseqk$ again satisfies all \pricthree invariants, whereas the full sequence $\frameseqkpo$ potentially violates frame-safety at $F_{k+1}$.
%the $\pricthree$ invariants hold for $\frameseqk$.
%Moreover, initiality, chain property and relative inductivity remain true if we include the newly created frame $F_{k+1}$.
\emph{Propagation}~(l.~\ref{alg:pric3:propagate}) aims to speed up termination by updating $F_{i+1}[s]$ by $F_{i}[s]$ iff this does not violate relative inductivity.
Consequently, the previously mentioned properties remain unchanged.
% Contrary to \strengthen, no obligations are spawned when the inductivity check fails.
%We require that this remains true after applying \emph{propagation} in l.~\ref{alg:pric3:propagate}. 

If $\strengthen$ returns $\false$, the \pricthree invariants---premises to Lem.~\ref{lem:ic3_inv_eqframes} for witnessing safety---cannot be restored and \mainloop terminates returning $\false$ (l.~\ref{alg:pric3:unknown}).
Returning $\false$ (also possible in l.~\ref{alg:pric3:zeno}) has by specification no affect on soundness of \mainloop .

In l.~\ref{alg:pric3:checkInductive}, we check whether there exist two identical consecutive frames.
If so, Lem.~\ref{lem:ic3_inv_eqframes} yields that the MDP is safe; consequently, \mainloop returns $\true$.
Otherwise, we increment~$k$ and are in the same setting as upon entering the loop, now with an increased frame sequence; \mainloop then performs another iteration. In summary, we obtain:%
\begin{theorem}[Soundness of $\boldmainloop$]
\label{thm:strengthencorrecticcorrect}
   If $\strengthen$ is sound
   and $\propagate$ does not affect the \pricthree invariants, 
   %(\textnormal{Def{\-}i{\-}ni{\-}tions \ref{def:spec_strengthen}} and \textnormal{\ref{def:spec_propagate}}), 
   then \mainloop is sound, i.e., the following triple is valid:
\begin{align*}
\htriple{\true}{\safeReturn, \blank \lleftarrow \prictcall{\mdp}{\bad}{\thresh}{\heuristic}}{\safeReturn \iimplies \prevtmax{\mc}{\sinit}{\bad} \lleq \thresh }
\end{align*}%
\end{theorem}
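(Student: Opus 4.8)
The plan is to establish this partial-correctness statement via a single loop invariant together with a short case analysis at the unique point where $\true$ is returned. The invariant I would assert to hold whenever control reaches the head of the \textbf{while}-loop (l.~\ref{alg:pric3:loop}) is
\[
  \mathcal{I}:\quad \icinv{\frameseqkmo}\ \wedge\ F_{k-1} \leq F_k\ \wedge\ \bm{F_{k-1}} \leq F_k~.
\]
The crucial observation is that $\mathcal{I}$ is \emph{exactly} the precondition of the $\strengthen$ specification in Def.~\ref{def:spec_strengthen}; this is what lets the soundness hypothesis on $\strengthen$ drive the argument. For the base case, after initialization (l.~\ref{alg:pric3:init}) we have $k=1$, $F_0 = \bm{\zeroexp}$ and $F_1 = \oneexp$: here $\icinv{F_0}$ holds because initiality is immediate, the chain and relative-inductivity conjuncts are vacuous for a one-frame sequence, and frame-safety reads $F_0[\sinit]\leq\thresh$, which holds since $F_0[\sinit]=\bm{\zeroexp}[\sinit]=0$ under the standing assumption $\sinit\notin\bad$ (the degenerate case $\sinit\in\bad$ forces $\prevtmax{\mc}{\sinit}{\bad}=1$, which is safe iff $\thresh=1$ and hence immediate). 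The remaining conjuncts $F_0\leq\oneexp$ and $\bm{F_0}\leq\oneexp$ hold because $\oneexp$ is the top frame.

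Next I would prove that $\mathcal{I}$ is preserved by one loop iteration. Entering with $\mathcal{I}$, the call to $\strengthen$ in l.~\ref{alg:pric3:strengthen} has its precondition met, so by the assumed soundness of $\strengthen$ (Def.~\ref{def:spec_strengthen}) we obtain $\success \Rightarrow \icinv{\frameseqk}$. If $\neg\success$, the loop returns $\false$ in l.~\ref{alg:pric3:unknown} and the postcondition $\safeReturn \Rightarrow \prevtmax{\mc}{\sinit}{\bad}\leq\thresh$ holds vacuously; likewise for the early $\false$-return in l.~\ref{alg:pric3:zeno}. Otherwise $\icinv{\frameseqk}$ now holds on the full length-$(k{+}1)$ segment. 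Creating $F_{k+1}=\oneexp$ in l.~\ref{alg:pric3:newFrame} leaves $\icinv{\frameseqk}$ intact and additionally yields $F_k \leq F_{k+1}$ and $\bm{F_k}\leq F_{k+1}$ for free, since $\oneexp$ is the top frame. By hypothesis, $\propagate$ (l.~\ref{alg:pric3:propagate}) does not affect the invariants, so all of these facts survive it; in particular relative inductivity $\bm{F_k}\leq F_{k+1}$ is maintained, as propagation lowers a frame only when this property is retained. Consequently, after the final increment $k\leftarrow k+1$ the predicate $\mathcal{I}$ holds again for the new value of $k$, closing the induction.

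Finally I would analyze the only $\true$-return, in l.~\ref{alg:pric3:checkInductive}, reached when the guard $\exists\,1\leq i\leq k:\ F_i = F_{i+1}$ fires while $\icinv{\frameseqk}$ holds and $\bm{F_k}\leq F_{k+1}$ is maintained. Here care is needed: the guard ranges over indices up to $k$, whereas $\icinv$ is available only on the frame-safe prefix $\frameseqk$, because the freshly added $F_{k+1}$ violates frame-safety at $\sinit$ unless $\thresh=1$. Hence one cannot simply invoke Lem.~\ref{lem:ic3_inv_eqframes} on the whole extended sequence. For $i<k$ the coinciding pair lies inside $\frameseqk$, so Lem.~\ref{lem:ic3_inv_eqframes} applies directly and yields $\prevtmax{\mc}{\sinit}{\bad}\leq\thresh$. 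For the boundary case $i=k$, I would reuse the argument behind Lem.~\ref{lem:ic3_inv_eqframes}: relative inductivity $\bm{F_k}\leq F_{k+1}$ together with $F_k=F_{k+1}$ makes $F_k$ inductive, whence Park's lemma and frame-safety $F_k[\sinit]\leq\thresh$ (both from $\icinv{\frameseqk}$) give the same bound. This boundary bookkeeping, namely ensuring the detected equal pair is anchored in the frame-safe prefix rather than in the unsafe top frame $F_{k+1}$, is the one genuinely delicate point; the remainder is a routine but careful threading of which invariants are guaranteed at each line, matching $\mathcal{I}$ to the $\strengthen$ precondition and invoking the two hypotheses on $\strengthen$ and $\propagate$ exactly where they are used.
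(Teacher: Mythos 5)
Your proof is correct and follows essentially the same route as the paper's: the identical loop invariant $\mathcal{I} = \icinv{\frameseqkmo} \wedge F_{k-1} \leq F_k \wedge \bm{F_{k-1}} \leq F_k$, the same initialization/consecution argument driven by Def.~\ref{def:spec_strengthen} and the hypothesis on \propagate, and Lem.~\ref{lem:ic3_inv_eqframes} at the $\true$-return. The only difference is that you explicitly handle the boundary case $i = k$ of the equal-frames check by inlining Park's-lemma argument (and flag the degenerate case $\sinit \in \bad$ at initialization), points the paper glosses over by citing Lem.~\ref{lem:ic3_inv_eqframes} and calling $\icinv{F_0}$ trivial---a welcome bit of extra rigor, not a different approach.
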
%

\vspace{-1ex}\subsubsection{\mainloop terminates for unsafe MDPs.}
If the MDP is unsafe, then there exists a step-bound $n$, such that
%\begin{align*}
  $\prevtbmax{\mdp}{\sinit}{n}{\bad} > \thresh$.
%\end{align*}%
Furthermore, any sound implementation of \strengthen{} (cf.\ Def.~\ref{def:spec_strengthen}) either immediately terminates \mainloop by returning $\false$ or restores the \pricthree invariants for $\frameseqk$.
If the former case never arises, then \strengthen{} will eventually restore the \pricthree invariants for a frame sequence of length $k = n$.
By Lem.~\ref{lem:ic3_inv_overapprox}, we have 
$F_n[\sinit] \geq \prevtbmax{\mdp}{\sinit}{n}{\bad} > \thresh$ contradicting frame-safety.
%In particular, \strengthen{} is then eventually tasked to restore the \pricthree invariants which include that the step-bounded reachability is over-approximated, but this over-approximation will necessarily violate frame safety for this frame $n$.

\subsubsection{\mainloop terminates for safe MDPs.}
Standard \icthree terminates on safe finite TSs as there are only finitely many different frames, making every ascending chain of frames eventually stabilize. 
For us, frames map states to probabilities (Chall.~\ref{challenge:domain}), yielding \emph{infinitely many possible frames} even for finite MDPs. 
%\sebastianmargin{example?}\christophmargin{this may become clear later in~Ex.~\ref{ex:running:strengthen}; I don't know whether we can give a decent example here already}\sebastianmargin{The infinite ascending chains are more than just challenge \ref{challenge:domain}}
Hence, $\strengthen$ need not ever yield a stabilizing chain of frames.
If it continuously fails to stabilize while repeatedly reasoning about the same set of states, we give up.
%, then we may loose hope that $\strengthen$ may do so.
%Concretely, if $\strengthen$ keeps 
%reasoning about the same set of states without stabilizing, we give up.
%\enquote{guessing} the stabilizing sequence, we give up.
\mainloop checks this by comparing the subsystem $\strengthen$ operates on with the one it operated on in the previous loop iteration (l.~\ref{alg:pric3:zeno}).
%is operationalized by the conservative check in line~\ref{alg:pric3:zeno}.
\begin{theorem}
If \strengthen{} and \propagate terminate, then \mainloop terminates.	
\end{theorem}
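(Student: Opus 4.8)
The plan is to reduce termination of \mainloop to bounding the number of iterations of its \textbf{while}-loop (l.~\ref{alg:pric3:loop}). By hypothesis \strengthen and \propagate terminate, and every remaining step of the loop body---the assignment $F_{k+1} \lleftarrow \oneexp$ in l.~\ref{alg:pric3:newFrame}, the two equality checks in l.~\ref{alg:pric3:checkInductive} and l.~\ref{alg:pric3:zeno}, and the updates of $k$ and \oldSubsystem---is an elementary operation over finite data and hence terminates. Consequently a single execution of the loop body always terminates, so \mainloop terminates if and only if the \textbf{while}-loop is entered only finitely often. It therefore suffices to exhibit a finite bound on the number of loop iterations.

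First I would characterise exactly when an iteration fails to return and thus triggers another pass. This happens only if simultaneously (a)~\strengthen returns $\success = \true$ (otherwise l.~\ref{alg:pric3:unknown} returns \false), (b)~no two consecutive frames coincide (otherwise l.~\ref{alg:pric3:checkInductive} returns \true), and (c)~the current subsystem differs from the previous one, i.e.\ $\oldSubsystem \neq \subsystem$ (otherwise l.~\ref{alg:pric3:zeno} returns \false); in that case $k$ is incremented and $\subsystem$ is copied into \oldSubsystem. The subsystems returned by \strengthen are subsets of the \emph{finite} state set $S$. The structural fact I would use is that, as the frame sequence $\frameseqk$ is extended with growing $k$, the subsystem on which \strengthen must operate never shrinks from one iteration to the next; that is, the subsystems produced across successive non-returning iterations form a sequence that is monotonically non-decreasing under set inclusion.

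Combining this monotonicity with condition~(c), along any run in which the loop keeps iterating the consecutive subsystems are distinct and non-decreasing, hence \emph{strictly} increasing. A strictly increasing chain of subsets of the finite set $S$ has length at most $\abs{S}+1$, so at most $\abs{S}+1$ iterations can pass the test in l.~\ref{alg:pric3:zeno}; after that, one of the three return statements must fire and \mainloop terminates. The main obstacle is justifying the monotone-growth property of the subsystems---that enlarging the frame sequence never makes \strengthen reason about a strictly smaller set of states---which is the single point where the detailed construction of \strengthen from Sect.~\ref{sec:strengthen} is needed; the reduction and the remaining finiteness argument are routine. As an independent and typically earlier guarantee covering the unsafe case, once $k$ reaches a step bound $n$ with $\prevtbmax{\mdp}{\sinit}{n}{\bad} > \thresh$, Lem.~\ref{lem:ic3_inv_overapprox} together with \emph{frame-safety} shows that \strengthen cannot return \true and must return \false, so termination via l.~\ref{alg:pric3:unknown} occurs at the latest at iteration $n$.
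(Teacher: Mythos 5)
Your reduction to bounding the number of loop iterations is unproblematic, and your fallback argument for unsafe MDPs (once $k$ reaches a bound $n$ with $\prevtbmax{\mdp}{\sinit}{n}{\bad} > \thresh$, Lem.~\ref{lem:ic3_inv_overapprox} and frame-safety force \strengthen{} to return \false) is precisely the paper's own argument for that case. The genuine gap is the step you yourself flag as \enquote{the main obstacle}: the claim that the subsystems returned by successive calls to \strengthen{} are monotonically non-decreasing under set inclusion. This is never proven, it appears nowhere in the paper, and for Alg.~\ref{alg:strengthensimple} as written it is false in general: $Q$ is re-initialized to $\{(k,\sinit,\thresh)\}$ at the start of every invocation (l.~\ref{alg:str:init}), so $\qtouched$ is recomputed per call from the \emph{current} frames and heuristic values. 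In particular, if after some iteration $\bm{F_{k}}[\sinit] \leq \thresh$ holds (propagation can easily bring this about), the next call resolves its initial obligation immediately in l.~\ref{alg:str:update} and returns the singleton subsystem $\{\sinit\}$, which may be strictly \emph{smaller} than the previously returned subsystem. Without monotonicity the counting argument collapses: a sequence of subsets of a finite set whose consecutive elements differ need not be finite (it can alternate between two sets forever), so the test in l.~\ref{alg:pric3:zeno} alone does not yield your bound of $\abs{S}+1$ iterations.

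This is also not the route the paper takes. The paper's termination discussion is a case split: for unsafe MDPs, the step-bound argument you reproduce; for safe MDPs, it appeals directly to the check in l.~\ref{alg:pric3:zeno} of Alg.~\ref{alg:pric3} as a designed \enquote{give-up} mechanism once \strengthen{} repeatedly reasons about the same set of states---it never asserts any inclusion relation between successive subsystems. Your proposal instead makes the whole theorem hinge on a structural property of \strengthen{} that is both unestablished and untrue, while your only airtight ingredient (the step bound) covers unsafe MDPs only: for a safe MDP no such $n$ exists, and then nothing in your argument guarantees that any of the three return statements (l.~\ref{alg:pric3:unknown}, l.~\ref{alg:pric3:checkInductive}, or l.~\ref{alg:pric3:zeno}) ever fires.
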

\begin{recovery}
	\label{recovery:2}
For qual.~reachability ($\lambda=0$), \mainloop never terminates in \textnormal{l.~\ref{alg:pric3:zeno}}.
\end{recovery}

%If $\lambda=0$ and $\subsystem = \oldSubsystem$, then the MDP is safe.	
%\end{recovery}

\vspace{-1.4ex}\subsubsection{\mainloop is incomplete.}
Standard \icthree either proves safety %finds two identical frames $F_i = F_{i+1}$, 
or returns $\false$ and a counterexample---a single path from the initial to a bad state.
As single paths are insufficient as counterexamples in MDPs~(Chall.~\ref{challenge:counterexamples}), \mainloop instead returns a \emph{subsystem} of the MDP $\mdp$ provided by \strengthen.
However, as argued above, we cannot trust $\strengthen$ to provide a stabilizing chain of frames. 
Reporting $\false$ thus only means that the given MDP \emph{may} be unsafe; the returned subsystem has to be analyzed further.
%contains diagnostic information for further analysis.
%However, we always provide diagnostic information in form of a subsystem.
%As single paths are insufficient as counterexamples when reasoning about MDPs (see Challenge 2), we provide a set of states.
%Hence, \mainloop returns $\false$ and a \emph{subsystem} of the MDP $\mdp$ provided by \strengthen meaning that it failed to prove safety.

%Thus, when strengthen fails, it might be a countere
%  the right sequence of frames that 
%If standard \icthree never finds two identical consecutive frames, it returns $\false$ and a counterexample, i.e., a single path from the initial state to a bad state.
%Our core algorithm \mainloop behaves analogously.
%However, single paths are insufficient as counterexamples when reasoning about MDPs (see Challenge 2).
%Hence, \mainloop returns $\false$ and a \emph{subsystem} of the MDP $\mdp$ provided by \strengthen.
%%. Detail are found in Sect.~\ref{sec:strengthen} alongside the $\strengthen$ procedure which is in charge of creating the subsystem.
%Finally, we stress that this subsystem is only possibly a witness that the MDP is unsafe.
%
%In contrast to standard \icthree, the subsystem returned by \mainloop does \emph{possibly but not necessarily} witness that the MDP is unsafe:
%Since frames in \pricthree map states to probabilities (see Challenge 1), there are infinitely many possibilities to update a frame.
%Hence, \strengthen may successfully restore the \pricthree invariants in a way that prevents us from finding identical consecutive frames in future iterations.
The full \pricthree algorithm presented in Sect.~\ref{sec:outermost_loop} addresses this issue.
Exploiting the subsystem returned by \mainloop, \pricthree returns $\true$ if the MDP is safe; otherwise, it returns $\false$ and provides a true counterexample witnessing that the MDP is unsafe.
\begin{figure}[t]
\subfloat[Threshold $\lambda = \sfrac{5}{9}$]{
\adjustbox{max width=0.56\textwidth}{
\begin{tabular}{l|c|cc|ccc|cccc|ccccc}
It. & 
1   & 
\multicolumn{2}{c|}{2} & 
\multicolumn{3}{c|}{3} & 
\multicolumn{4}{c|}{4}  &
\multicolumn{5}{c}{5} \\\hline
$F_i$ & 
$F_1$ & 
$F_1$ & 
$F_2$ & 
$F_1$ & 
$F_2$ & 
$F_3$ & 
$F_1$ & 
$F_2$ & 
$F_3$ & 
$F_4$ & 
$F_1$ & 
$F_2$ & 
$F_3$ & 
$F_4$ & 
$F_5$ \\
\hline
$s_0$ & 
$\sfrac{5}{9}$ &
$\sfrac{5}{9}$ & 
$\sfrac{5}{9}$ & 
$\sfrac{5}{9}$ & 
$\sfrac{5}{9}$ & 
$\sfrac{5}{9}$ & 
$\sfrac{5}{9}$ & 
$\sfrac{5}{9}$ & 
$\sfrac{5}{9}$ & 
$\sfrac{5}{9}$ & 
$\sfrac{5}{9}$ & 
$\sfrac{5}{9}$ & 
$\sfrac{5}{9}$ & 
$\sfrac{5}{9}$ & 
$\sfrac{5}{9}$ \\
$s_1$ & 
\old{$1$} & 
$\sfrac{11}{18}$ & 
\old{$1$} & 
$\sfrac{11}{18}$ & 
$\sfrac{11}{18}$ &  
\old{$1$} & 
$\sfrac{11}{18}$ & 
$\sfrac{11}{18}$ & 
$\sfrac{11}{18}$ & 
\old{$1$} & 
$\sfrac{11}{18}$ & 
$\sfrac{11}{18}$ & 
$\sfrac{11}{18}$ & 
$\sfrac{11}{18}$ & 
\old{$1$} \\
$s_2$ & 
\old{$1$} & 
$\sfrac{1}{2}$ &
\old{$1$} & 
$\sfrac{1}{2}$ & 
$\sfrac{1}{2}$ & 
\old{$1$} & 
$\sfrac{1}{2}$ &
$\sfrac{1}{2}$ & 
$\sfrac{1}{2}$ & 
\old{$1$} & 
$\sfrac{1}{2}$ &
$\sfrac{1}{2}$ & 
$\sfrac{1}{2}$ & 
$\sfrac{1}{2}$ & 
\old{$1$} \\
$s_3$ & 
\old{$1$} & 
\old{$1$} & 
\old{$1$} & 
$\sfrac{2}{3}$ & 
\old{$1$} & 
\old{$1$} & 
$\sfrac{2}{3}$ &
$\sfrac{2}{3}$ & 
\old{$1$} &
\old{$1$} &
$\sfrac{2}{3}$ &
$\sfrac{2}{3}$ &
$\sfrac{2}{3}$ & 
\old{$1$} &
\old{$1$} \\
$s_4$ & 
\old{$1$} &
\old{$1$} &
\old{$1$} &
\old{$1$} &
\old{$1$} &
\old{$1$} &
$0$ & 
\old{$1$} & 
\old{$1$} & 
\old{$1$} & 
$0$ & 
$0$ & 
\old{$1$} & 
\old{$1$} & 
\old{$1$} \\
$s_5$ & 
\old{$1$} &
\old{$1$} &
\old{$1$} &
\old{$1$} &
\old{$1$} &
\old{$1$} &
\old{$1$} &
\old{$1$} &
\old{$1$} &
\old{$1$} &
\old{$1$} &
\old{$1$} &
\old{$1$} &
\old{$1$} &
\old{$1$} 
\end{tabular}
}
}
\subfloat[Threshold $\lambda = \sfrac{9}{10}$]{
%\missingfigure[figwidth=0.4\textwidth,figheight=2cm]{second table}
\quad
\adjustbox{max width=0.405\textwidth}{
\begin{tabular}{l|c|cc|ccc|cccc}
It. & 
1 & 
\multicolumn{2}{c|}{2} & 
\multicolumn{3}{c|}{3} &
\multicolumn{4}{c}{4}  \\\hline  
$F_i$ & 
$F_1$ & 
$F_1$ & 
$F_2$ & 
$F_1$ & 
$F_2$ & 
$F_3$ &
$F_1$ & 
$F_2$ & 
$F_3$ &
$F_4$ \\\hline
$s_0$ & 
$\sfrac{9}{10}$ & 
$\sfrac{9}{10}$ & 
$\sfrac{9}{10}$ & 
$\sfrac{9}{10}$ & 
$\sfrac{9}{10}$ & 
$\sfrac{9}{10}$ & 
$\sfrac{9}{10}$ & 
$\sfrac{9}{10}$ & 
$\sfrac{9}{10}$ & 
$\sfrac{9}{10}$ \\
$s_1$ & 
\old{$1$} & 
$\sfrac{99}{100}$ & 
\old{$1$} &
$\sfrac{99}{100}$ & 
$\sfrac{99}{100}$ & 
\old{$1$} &
$\sfrac{99}{100}$ & 
$\sfrac{99}{100}$ & 
$\sfrac{99}{100}$ & 
\old{$1$} \\
$s_2$ &
\old{$1$} &
$\sfrac{81}{100}$ &
\old{$1$} &
$\sfrac{81}{100}$ &
$\sfrac{81}{100}$ &
\old{$1$} &
$\sfrac{81}{100}$ &
$\sfrac{81}{100}$ &
$\sfrac{81}{100}$ &
\old{$1$} \\
$s_3$ &
\old{$1$} &
\old{$1$} &
\old{$1$} &
\old{$1$} &
\old{$1$} &
\old{$1$} &
\old{$1$} &
\old{$1$} &
\old{$1$} &
\old{$1$} \\
$s_4$ &
\old{$1$} &
\old{$1$} &
\old{$1$} &
$0$ &
\old{$1$} &
\old{$1$} &
$0$ &
$0$ &
\old{$1$} &
\old{$1$} \\
$s_5$ &
\old{$1$} &
\old{$1$} &
\old{$1$} &
\old{$1$} &
\old{$1$} &
\old{$1$} &
\old{$1$} &
\old{$1$} &
\old{$1$} &
\old{$1$}
\end{tabular}
}
}
\caption{Two runs of \mainloop on the Markov chain induced by selecting action $a$ in Fig.~\ref{fig:runningex}. For every iteration, frames are recorded after invocation of $\strengthen$.}
\label{fig:naiveic3frames}
\end{figure}%

\begin{example}
\label{ex:running:setup}
We conclude this section with two example executions of $\mainloop$ on a simplified version of the MDP in Fig.~\ref{fig:runningex}.
Assume that action $b$ has been removed. Then, for every state, exactly one action is enabled, i.e., we consider a Markov chain.
Fig.~\ref{fig:naiveic3frames} depicts the frame sequences computed by \mainloop (for a reasonable $\heuristic$) on that Markov chain for two thresholds: 
$\nicefrac{5}{9} \textcolor{gray}{{}= \prevtmax{\mc}{s_0}{\bad}}$ and $\nicefrac{9}{10}$.
In particular, notice that \emph{proving the coarser bound of $\nicefrac{9}{10}$
requires fewer frames than proving the exact bound of $\nicefrac{5}{9}$}.
%In particular, notice that less frames are needed to prove the coarser bound of $\nicefrac{9}{10}$ than for the exact bound $\nicefrac{5}{9}$.
\hfill$\triangle$
\end{example}%
\section{Strengthening in \mainloop}
\label{sec:strengthen}

%
%
% !TEX root = ./main.tex

When the main loop of \mainloop has created a new frame $F_{k} = \oneexp$ in its previous iteration, this frame may violate frame-safety (Def.~\ref{def:invariants}.3) because of $F_{k}[\sinit] = 1 \not\leq \lambda$.
The task of $\strengthen$ is to restore the \pricthree invariants on \emph{all} frames $F_0,\hdots,F_k$.
To this end, our first \emph{obligation} is to lower the value in frame $i = k$ for state $s = \sinit$ to $\delta = \lambda \in [0,1]$. 
We denote such an obligation by $(i,s,\delta)$.
Observe that implicitly $\delta = 0$ in the qualitative case, i.e., when proving unreachability. %, implicitly $\delta = 0$.
An obligation $(i,s,\delta)$ is \emph{resolved} by updating the values assigned to state $s$ in \emph{all frames} $F_1, \ldots, F_i$ to at most $\delta$.
That is, for all $j \leq i$, we set $F_{j}[s]$ to the minimum of $\delta$ and the original value $F_{j}[s]$.
Such an update affects neither initiality nor the chain property (Defs.~\ref{def:invariants}.1,~\ref{def:invariants}.2).
It may, however, violate relative inductivity (Def.~\ref{def:invariants}.4), i.e., $\bm{F_{i-1}} \leq F_{i}$. 
Before resolving obligation $(i,s,\delta)$, we may thus have to further decrease some entries in $F_{i-1}$ as well.
%to maintain relative inductivity.
Hence, \emph{resolving obligations may spawn additional obligations} which have to be resolved first to maintain relative inductivity.
%(and ultimately establish the \pricthree{} invariants for all frames).
%Before \emph{resolving} obligation $(i,s,\delta)$ by updating~$F_{i}[s]$~to~$\delta$, however, we may have to decrease some entries in $F_{i-1}$ as well in order to maintain relative inductivity (Def.~\ref{def:invariants}.4), i.e., $\bm{F_{i-1}} \leq F_{i}$.
%Hence, \emph{resolving obligations may spawn additional obligations} which have to be resolved.
%may spawn new obligations which have to be resolved subsequently.
In this section, we present a generic instance of $\strengthen$
meeting its specification (Def.~\ref{def:spec_strengthen}) and discuss its correctness.
%meeting the specification in Def.~\ref{def:spec_strengthen} and discuss its termination behavior.

\label{sec:strengthenexplained}

\subsubsection{$\strengthen$ by example.}
$\strengthen$ is given by the pseudo code in Alg.~\ref{alg:strengthensimple}; differences to standard \icthree
(cf.~\cite[Fig. 6]{DBLP:conf/fmcad/EenMB11}) are highlighted in \changed{\changedColor}.
Intuitively, $\strengthen$ attempts to recursively resolve all obligations until either both frame-safety and relative inductivity are restored for \emph{all} frames 
%(including the last one) 
or it detects a \emph{potential counterexample} justifying why it is unable to do so.
We first consider an execution where the latter does not arise:
%
% !TEX root = ../main.tex
\definecolor{webgreen}{rgb}{0,.5,0}
\newcommand\mycommfont[1]{\ttfamily\textcolor{teal}{#1}}
\SetCommentSty{mycommfont}
%\begin{figure}[t]
\begin{algorithm}[t]
%\setstretch{2}
\SetNoFillComment
$Q \leftarrow \{(k, \sinit, \changed{\lambda})\}$ \;\label{alg:str:init}
\While{$Q$ not empty}{
%\lRepeat(foo){$Q$ is empty}{
   %\tcc{Pop obligation with smallest frame index $i$. Meaning: Make update $F_i[s] \leftarrow \changed{\delta}$ possible, i.e., prove $\prevtbmax{\mdp}{s}{i}{\bad} \leq \changed{\delta}.$}
        $(i, s, \changed{\delta}) \leftarrow Q.\qpop$\tcc*{pop obligation with minimal frame index} \label{alg:str:pop}
        \uIf(\label{alg:str:frame-zero}){$i = 0\quad\changed{\vee\quad(s \in \bad \wedge \delta < 1)}$}{
       \tcc{\changed{possible} counterexample given by subsystem consisting of states popped from Q at some point}
       \Return $\false$\changed{, \blank, \qtouched};
       \label{alg:str:false}
          
      %}
   }
   \tcc{check whether $F_i[s] \leftarrow \changed{\delta}$ violates relative inductivity}
   \uIf(\label{alg:str:relind}for such an $a$){$\exists a \in \acts{s} \colon \bma{a}{F_{i-1}}[s] > \changed{\delta}$}{
      	%
      	%\tcc{
      	%	Since $\acts{s}$ is finite, we check a finite disjunction here.
      	%}
      	%
      	%
      %
      %
          %
          %
           $\changed{\delta_1,\ldots,\delta_n \leftarrow \getprobs\left( s, \, a, \, \delta \right)}$ \;\label{alg:str:getprobs}
          %
          %
          %\uIf{\changed{$\found$}}{
             %\tcc*{might be extracted from SMT query}
                  $\{s_1, \ldots, s_n\} \leftarrow \succs(s, a)$\;\label{alg:str:succs}
                  $Q.\qpush{\left(i-1, s_1,  \changed{\delta_1} \right), \ldots, \left(i-1, s_n,  \changed{\delta_n} \right), \left( i, s, \changed{\delta} \right) }$\;\label{alg:str:push}
             %
          %}
      %
%         \changed{
%            \uElse(\label{alg:str:not-found}\tcc*[h]{possible counterexample}){
%            \Return $\false$\changed{, \blank,  \qtouched};
           %  $\changed{\oracle, \ostates \leftarrow \refineoracle \left( \vstates, \, \ostates, \,  \oracle \right)}$\;
            % $\refute \leftarrow \checkrefutation \left( \thresh, \, \ostates \right)$\;
             %
     % \uIf{$\refute$}{
      %   \Return $\true, F_0,\ldots,F_k, \changed{\oracle}$\;
      %}
      %\uElse{
      %   $\changed{\vstates \leftarrow \emptyset}$\;
      %   $Q \leftarrow \{(k, \sinit,\changed{ \thresh})\}$ \;
      %}
          %}
          %}
      %
      }
      \uElse(\tcc*[h]{resolve $(i,s,\delta)$ without violating relative inductivity})
      {
       %\tcc{Updating $F_i[s] \leftarrow \changed{ \delta}$ does \emph{not} violate rel. induct..}
       %
       %
       %\uIf{$i < k -1$}{
       	%
        %   $Q.\qpush{\left( i+1, s, \changed{\delta}  \right) }$\;
       %
       %}
       %
      % \uIf{$ i \neq k$}{
	   %
        %  $\changed{\vstates \leftarrow \vstates \setminus \left\{s\right\}}$\;
       %}
       %
       $F_1[s] \leftarrow \minval{F_1[s], \, \changed{\delta}}; \ldots ; F_i[s]  \leftarrow \minval{F_i[s], \, \changed{\delta}}$\; \label{alg:str:update}
      }
}(\tcc*[h]{$Q$ empty; all obligations have been resolved} \label{alg:str:endwhile})
\Return $\true, \, F_0,\ldots ,F_k, \changed{\qtouched}$\;
\label{alg:str:true}
   
   \caption{$\strengthencall{F_0, \ldots,F_k}{\heuristic}{}$}
   \label{alg:strengthensimple}
\end{algorithm}%
%\end{figure}%
%
%
%
\begin{example}
\label{ex:running:strengthen}
  We zoom in on Ex.~\ref{ex:running:setup}:
  Prior to the second iteration, we have created the following three frames assigning values to the states $s_0,s_5$:
  \begin{align*}
          F_0 \eeq (0,0,0,0,1), %\zeroexp,
    \qquad
    F_1 \eeq (\sfrac{5}{9}, 1,1,1,1,1),\qquad \text{and}
    \qquad
    F_2 \eeq \oneexp.
  \end{align*}
  To keep track of unresolved obligations $(i,s,\delta)$, 
  $\strengthen$ employs a priority queue~$Q$ which pops obligations with minimal frame index $i$ first.
  Our first step is to ensure frame-safety of $F_2$, i.e., alter $F_2$ so that $F_2[s_0] \leq \sfrac{5}{9}$; we thus initialize the queue $Q$ with the initial obligation $(2,s_0,\sfrac{5}{9})$ (l.~\ref{alg:str:init}).
  To do so, we check whether updating $F_2[s_0]$ to $\sfrac{5}{9}$ would invalidate relative inductivity (l.~\ref{alg:str:relind}).
  This is indeed the case:
  \begin{align*}
    \bm{F_1}[s_0] 
    \eeq 
    \sfrac{1}{2} \cdot F_1[s_1] + \sfrac{1}{2} \cdot F_1[s_2] 
    \eeq
    1 \not\leq \sfrac{5}{9}.
  \end{align*}
  To restore relative inductivity, $\strengthen$ spawns one new obligation for each relevant successor of $s_0$.
  These have to be resolved before retrying to resolve the old obligation.\footnote{We assume that the set $\succs(s,a) = \left\{ s' \in S ~\mid~ P(s, a , s') > 0 \right\}$ of \emph{relevant $a$-successors} of state $s$ is returned in some arbitrary, but fixed order.}
  
  \emph{In contrast to standard \icthree, spawning obligations involves finding suitable probabilities $\delta$} (l.~\ref{alg:str:getprobs}).
  In our example this means we have to spawn two obligations $(1,s_1,\delta_1)$ and $(1,s_2,\delta_2)$ such that $\sfrac{1}{2} \cdot \delta_1 + \sfrac{1}{2} \cdot \delta_2 \leq \sfrac{5}{9}$.
There are \emph{infinitely many choices} for $\delta_1$ and $\delta_2$ satisfying  this inequality.
  %We will discuss the search for suitable probabilities in detail below.
  Assume some heuristic $\getprobs$ chooses $\delta_1 = \sfrac{11}{18}$ and $\delta_2 = \sfrac{1}{2}$; we push obligations $(1,s_1,\sfrac{11}{18})$, $(1,s_2,\sfrac{1}{2})$, and $(2,s_0,\sfrac{5}{9})$~(ll.~\ref{alg:str:succs},~\ref{alg:str:push}).
  In the next iteration, we first pop obligation $(1,s_1,\sfrac{11}{18})$ (l.~\ref{alg:str:pop}) and find that it can be resolved without violating relative inductivity  (l.~\ref{alg:str:relind}).
  Hence, we set $F_1[s_1]$ to $\sfrac{11}{18}$ (l.~\ref{alg:str:update}); no new obligation is spawned.
  Obligation $(1,s_2,\sfrac{1}{2})$ is resolved analogously; the updated frame is $F_1 = (\sfrac{5}{9},\sfrac{11}{18},\sfrac{1}{2},1)$.
  Thereafter, our initial obligation $(2,s_0,\sfrac{5}{9})$ can be resolved; relative inductivity is restored for $F_0,F_1,F_2$.
  Hence, $\strengthen$ returns $\true$ together with the updated frames. 
    \hfill $\triangle$
\end{example}
%
%The remainder of this section briefly sketches 
%(1) why Alg.~\ref{alg:strengthensimple} implements $\strengthen$ as specified in Def.~\ref{def:spec_strengthen}.
%(2) why it terminates for reasonable heuristics $\heuristic$ choosing probabilities, and
%(3) what happens if it fails to restore the \pricthree invariants.

%
\vspace{-1ex}\subsubsection{$\strengthen$ is sound.}
Let us briefly discuss why Alg.~\ref{alg:strengthensimple} meets the specification of a sound implemenation of $\strengthen$ (Def.~\ref{def:spec_strengthen}):
First, we observe that Alg.~\ref{alg:strengthensimple} alters the frames---and thus potentially invalidates the \pricthree invariants---only in l.~\ref{alg:str:update} by resolving an obligation $(i,s,\delta)$ with $\bm{F_{i-1}}[s] \leq \delta$ (due to the check in l.~\ref{alg:str:relind}).

\noindent
Let $F\subst{s}{\delta}$ denote the frame $F$ in which $F[s]$ is set to $\delta$, i.e.,%
\begin{align*}
	F\subst{s}{\delta}[s'] \eeq  
	\begin{cases}
		\delta, &\text{if}~s' = s, \\
		F[s'], &\text{otherwise}.
	\end{cases}
\end{align*}%
Indeed, resolving obligation $(i,s,\delta)$ in l.~\ref{alg:str:update} lowers the values assigned to state $s$ to at~most~$\delta$ \emph{without} invalidating the \pricthree invariants:
\begin{lemma}\label{lem:ic3_inv_preserved_by_update}
Let $(i,s,\delta)$ be an obligation and $F_0,\ldots,F_i$, for $i > 0$, be frames with $\bm{F_{i-1}}[s] \leq \delta$. 
Then $\icinv{F_0, \ldots, F_i} \text{ implies }$
\begin{align*}
\icinv{~\vphantom{\Bigl(}
F_0 \subst{s}{\vphantom{\bigl(}\minval{F_0[s],\, \delta}},
\,\ldots,\,
F_i \subst{s}{\vphantom{\bigl(}\minval{F_i[s],\, \delta}}~}.
\end{align*}
\end{lemma}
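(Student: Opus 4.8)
The plan is to introduce the abbreviation $G_j \eeq F_j \subst{s}{\minval{F_j[s],\, \delta}}$ for $0 \leq j \leq i$ and to verify the four \pricthree invariants for $G_0, \ldots, G_i$ directly, exploiting two elementary observations. First, taking the pointwise minimum with $\delta$ can only \emph{decrease} entries, so $G_j \leq F_j$ holds for every $j$. Second, since $G_j$ and $F_j$ agree on every state $s' \neq s$, each invariant has to be checked with care only at the coordinate $s$ (and, for relative inductivity, only when $s$ is the \emph{target} of the Bellman operator).

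I would first dispatch the three easy invariants. The \emph{chain property} and \emph{frame-safety} are immediate: at coordinates $s' \neq s$ the frames are unchanged, and at $s$ the map $x \mapsto \minval{x,\, \delta}$ is monotone, so $F_j[s] \leq F_{j+1}[s]$ yields $G_j[s] \leq G_{j+1}[s]$; frame-safety survives because $G_j[\sinit] \leq F_j[\sinit] \leq \thresh$, as the update only lowers values. For \emph{initiality} I must argue $G_0 = F_0$, i.e. that $F_0[s] \leq \delta$. Here I would split on whether $s \in \bad$: if $s \notin \bad$ then $F_0[s] = \bm{\zeroexp}[s] = 0 \leq \delta$; if $s \in \bad$ then $\bm{F_{i-1}}[s] = 1$ by definition of the Bellman operator, so the hypothesis $\bm{F_{i-1}}[s] \leq \delta$ forces $\delta = 1 = F_0[s]$. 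In both cases $\minval{F_0[s],\, \delta} = F_0[s]$, so $G_0 = F_0 = \bm{\zeroexp}$.

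The crux is \emph{relative inductivity}, $\bm{G_j} \leq G_{j+1}$ for all $j < i$. I would establish two separate bounds whose conjunction gives $\bm{G_j}[s] \leq \minval{F_{j+1}[s],\, \delta} = G_{j+1}[s]$ at the only delicate coordinate $s' = s$; coordinates $s' \neq s$ follow from the first bound alone, since there $G_{j+1} = F_{j+1}$. The first bound, $\bm{G_j} \leq F_{j+1}$, follows from monotonicity of $\bmsymbol$ applied to $G_j \leq F_j$ together with relative inductivity of the $F$'s: $\bm{G_j} \leq \bm{F_j} \leq F_{j+1}$. The second bound, $\bm{G_j}[s] \leq \delta$, is where the \emph{chain property} from the hypothesis is essential: for $j < i$ we have $j \leq i-1$, so chaining gives $F_j \leq F_{i-1}$, hence $G_j \leq F_j \leq F_{i-1}$, and monotonicity of $\bmsymbol$ yields $\bm{G_j}[s] \leq \bm{F_{i-1}}[s] \leq \delta$ by the lemma's hypothesis.

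The main obstacle is precisely recognizing that relative inductivity at the updated coordinate $s$ must be controlled through $F_{i-1}$ rather than through the \enquote{local} predecessor $F_j$: the value $\delta$ is only guaranteed to dominate $\bm{F_{i-1}}[s]$, and it is the chain property that lets me transport this single guarantee down to every $G_j$ with $j < i$. Everything else is routine case analysis on $s' = s$ versus $s' \neq s$ and the monotonicity of both $\minval{\cdot,\,\delta}$ and $\bmsymbol$. I note that the case $s \in \bad$ causes no separate trouble for relative inductivity either, since then $\bm{G_j}[s] = 1$ while the hypothesis has already forced $\delta = 1$.
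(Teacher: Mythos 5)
Your proof is correct and follows essentially the same route as the paper's: a direct verification of each invariant, with the key step being the transport of the hypothesis $\bm{F_{i-1}}[s] \leq \delta$ down to all lower-indexed frames via the chain property and monotonicity of $\bmsymbol$. Two minor points where you are tidier than the paper: you justify initiality (i.e., $\minval{F_0[s],\,\delta} = F_0[s]$) explicitly via the case split on $s \in \bad$, a fact the paper asserts without argument, and you replace the paper's case analysis on whether $F_{j+1}'[s]$ equals $\delta$ or $F_{j+1}[s]$ by proving the two bounds $\bm{G_j} \leq F_{j+1}$ and $\bm{G_j}[s] \leq \delta$ separately and combining them into the minimum.
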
%
Crucially, the precondition of Def.~\ref{def:spec_strengthen} guarantees that all \pricthree invariants except frame safety hold initially.
Since these invariants are never invalidated due to Lem.~\ref{lem:ic3_inv_preserved_by_update}, Alg.~\ref{alg:strengthensimple} is a sound implementation of $\strengthen$ if it restores frame safety whenever it returns $\true$, i.e., once it leaves the loop with an empty obligation queue $Q$ (ll.~\ref{alg:str:endwhile}--\ref{alg:str:true}).
Now, an obligation $(i,s,\delta)$ is only popped from $Q$ in l.~\ref{alg:str:pop}. 
As $(i,s,\delta)$ is added to $Q$ upon reaching l.~\ref{alg:str:push}, the size of $Q$ can only ever be reduced (without returning $\false$) by resolving $(i,s,\delta)$ in l.~\ref{alg:str:update}.
Hence, Alg.~\ref{alg:strengthensimple} does not return $\true$ unless it restored frame safety by resolving, amongst all other obligations, the initial obligation
$(k,\sinit,\thresh)$. 
Consequently:
\begin{lemma}
\label{lem:strenghtencorrect}
  Procedure $\strengthen$ is sound, i.e., it satisfies the specification in \textnormal{Def.~\ref{def:spec_strengthen}}.
\end{lemma}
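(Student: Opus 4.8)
I would establish the Hoare triple of Def.~\ref{def:spec_strengthen} as a partial-correctness statement by a loop-invariance argument over the \textbf{while}-loop of Alg.~\ref{alg:strengthensimple}. Since the postcondition $\success \Rightarrow \icinv{F_0,\ldots,F_k}$ is vacuously true whenever the algorithm returns $\false$ (l.~\ref{alg:str:false}), it suffices to reason about the single $\true$-returning exit in ll.~\ref{alg:str:endwhile}--\ref{alg:str:true}, which is reached exactly when the loop terminates with an empty obligation queue $Q$. The argument then splits into two parts: (a) a suitable invariant is maintained across every iteration, and (b) on leaving the loop with $Q$ empty, this invariant together with emptiness of $Q$ yields the full \pricthree invariants.

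For part (a), I propose as loop invariant the conjunction of all \pricthree invariants for $F_0,\ldots,F_k$ \emph{except} that frame-safety is demanded only for $F_0,\ldots,F_{k-1}$ (so $F_k[\sinit]\leq\thresh$ may fail). This is precisely what the precondition $\icinv{F_0,\ldots,F_{k-1}} \wedge F_{k-1}\leq F_k \wedge \bm{F_{k-1}}\leq F_k$ supplies at loop entry. The only step mutating a frame is l.~\ref{alg:str:update}, reached in the \textbf{else}-branch, i.e.\ when the guard of l.~\ref{alg:str:relind} fails; hence $\bma{a}{F_{i-1}}[s]\leq\delta$ for every $a\in\acts{s}$, so $\bm{F_{i-1}}[s]\leq\delta$, and $i>0$ (the case $i=0$ is intercepted in l.~\ref{alg:str:frame-zero}). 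These are exactly the hypotheses of Lem.~\ref{lem:ic3_inv_preserved_by_update}, which I would invoke to conclude that lowering $F_1[s],\ldots,F_i[s]$ to $\min(F_j[s],\delta)$ preserves initiality, the chain property, and relative inductivity. Frame-safety of $F_0,\ldots,F_{k-1}$ is preserved as well, simply because the update never raises an entry; frame-safety of $F_k$ is not part of the invariant, so the weakened invariant is maintained.

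For part (b), I would run the queue-bookkeeping argument already indicated in the text: obligations enter $Q$ only in l.~\ref{alg:str:push}, where a popped \emph{unresolvable} obligation is re-pushed alongside its spawned sub-obligations, whereas a \emph{resolved} obligation (l.~\ref{alg:str:update}) is popped and not re-pushed. Thus, short of returning $\false$, the size of $Q$ can decrease only by resolving an obligation, so $Q$ can become empty only after the initial obligation $(k,\sinit,\thresh)$ placed in l.~\ref{alg:str:init} has itself been resolved in l.~\ref{alg:str:update}. Resolving it sets $F_k[\sinit]\leq\thresh$, and since all subsequent updates only lower values, $F_k[\sinit]\leq\thresh$ still holds at loop exit. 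This reinstates the missing frame-safety of $F_k$, upgrading the loop invariant to the full $\icinv{F_0,\ldots,F_k}$ required by the postcondition.

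\textbf{Main obstacle.} The delicate point is the mismatch at index $i=k$: the loop invariant deliberately drops frame-safety of $F_k$, yet Lem.~\ref{lem:ic3_inv_preserved_by_update} as stated preserves the \emph{full} invariant, so its direct application is only immediately legitimate for obligations with $i<k$ (applied to the prefix $F_0,\ldots,F_i$). For $i=k$ one must note that the lemma's reasoning for the structural invariants (initiality, chain property, relative inductivity) does not appeal to frame-safety of $F_k$, so these survive the update, while frame-safety of the lower frames follows from monotonicity of the update; frame-safety of $F_k$ is then recovered at exit via the queue argument rather than carried as an invariant. Getting this bookkeeping right—i.e.\ tracking exactly which invariant is weakened, where it is restored, and that the update is value-monotone—is the crux of the proof.
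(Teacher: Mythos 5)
Your proposal is correct and takes essentially the same route as the paper: the paper's argument likewise observes that the only frame mutation is the guarded update in l.~\ref{alg:str:update} (so that $\bm{F_{i-1}}[s] \leq \delta$ holds whenever it fires), invokes Lem.~\ref{lem:ic3_inv_preserved_by_update} to conclude that updates never invalidate the invariants supplied by the precondition of Def.~\ref{def:spec_strengthen}, and then uses the same queue-bookkeeping argument that $Q$ can only empty after the initial obligation $(k,\sinit,\thresh)$ of l.~\ref{alg:str:init} has been resolved, which restores frame-safety of $F_k$ upon returning \true. If anything, your explicit loop invariant (full invariants minus frame-safety at index $k$) and your discussion of the resulting mismatch when applying Lem.~\ref{lem:ic3_inv_preserved_by_update} at $i=k$ make precise a point the paper's inline sketch glosses over.
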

\begin{theorem}
\label{thm:priccorrect}
Procedure $\mainloop$ is sound, i.e., satisfies the specification in \textnormal{Thm.~\ref{thm:strengthencorrecticcorrect}}.
   % and always terminates.
\end{theorem}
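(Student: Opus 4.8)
The plan is to derive Theorem~\ref{thm:priccorrect} as an immediate corollary of the \emph{conditional} soundness statement already proved as Theorem~\ref{thm:strengthencorrecticcorrect}. That theorem establishes validity of precisely the Hoare triple we need, but under two hypotheses: (a) that $\strengthen$ is sound in the sense of Definition~\ref{def:spec_strengthen}, and (b) that $\propagate$ does not affect the \pricthree invariants. Thus no new reasoning about the main loop is required; it suffices to discharge these two hypotheses and then invoke Theorem~\ref{thm:strengthencorrecticcorrect} verbatim.

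Hypothesis (a) is already in hand: it is exactly Lemma~\ref{lem:strenghtencorrect}. That lemma, in turn, rests on Lemma~\ref{lem:ic3_inv_preserved_by_update}, which guarantees that the only frame-altering step of $\strengthen$---the update in l.~\ref{alg:str:update}---never invalidates any invariant, together with the fact that $\strengthen$ returns $\true$ only after the initial obligation $(k,\sinit,\thresh)$, and hence frame safety, has been resolved.

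For hypothesis (b) I would verify that every invocation of $\propagate$ preserves all four \pricthree invariants. The governing observation is that $\propagate$ acts only by overwriting an entry $F_{i+1}[s]$ with the value $F_{i}[s]$, which---because the chain property gives $F_{i} \leq F_{i+1}$---can only \emph{lower} entries, and which never touches $F_0$. Initiality is then immediate, and frame safety is preserved because lowering entries cannot push $F_i[\sinit]$ above $\thresh$. The chain property survives as well, since setting $F_{i+1}[s]$ to $F_i[s]$ keeps $F_i[s] \leq F_{i+1}[s]$ and can only help the neighbouring inequality $F_{i+1} \leq F_{i+2}$. The delicate invariant is relative inductivity: for the pair $\bm{F_i} \leq F_{i+1}$ this is exactly what the guard inside $\propagate$ checks before committing an update, while for the following pair monotonicity of $\bmsymbol$ ensures that lowering $F_{i+1}$ only lowers $\bm{F_{i+1}}$, so $\bm{F_{i+1}} \leq F_{i+2}$ continues to hold.

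With both hypotheses established, Theorem~\ref{thm:strengthencorrecticcorrect} yields soundness of $\mainloop$ directly. I expect the relative-inductivity bookkeeping in step (b) to be the only part needing genuine care; the remainder is a mechanical instantiation of an already-proved conditional result.
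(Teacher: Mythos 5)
Your proposal is correct and follows essentially the same route as the paper: Theorem~\ref{thm:priccorrect} is obtained by instantiating the conditional result of Theorem~\ref{thm:strengthencorrecticcorrect}, discharging the first hypothesis via Lemma~\ref{lem:strenghtencorrect} and the second via the observation that $\propagate$ only commits an update $F_{i+1}[s] \leftarrow F_i[s]$ when relative inductivity is not violated, which (together with monotonicity of $\bmsymbol$ and the fact that entries are only lowered and $F_0$ is untouched) preserves all four invariants. Your explicit invariant-by-invariant check of the $\propagate$ hypothesis is more detailed than the paper, which asserts this preservation by construction, but the underlying argument is the same.
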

%A formal detailed argument is deferred to the Appendix.\sebastianmargin{DO we still do this?}
%
We remark that, analogously to standard \icthree, resolving an obligation %through an update of the frames 
in l.~\ref{alg:str:update} may be accompanied by \emph{generalization}. 
%Before we resolve an obligation $(i,s,\delta)$ through an update of the frames in line~\ref{alg:str:update}, we may perform \emph{generalization} (as does standard \icthree).
That is, we attempt to update the values of multiple states at once.
Generalization is, however, highly non-trivial in a probabilistic setting.
We discuss three possible approaches to generalization in Sect.~\ref{sec:generalization}.

\subsubsection{\strengthen{} terminates.}
\label{sec:strengthen:getprobs}
%Recall that $\strengthen$ aims to restore the \pricthree invariants by resolving all obligations; if it is unable to do so, its specification (Def.~\ref{def:spec_strengthen}) requires that it returns $\false$.
%A trivial, but sound, implementation of $\strengthen$ is to always return $\false$.
We now show that $\strengthen$ as in Alg.~\ref{alg:strengthensimple} terminates.
The only scenario in which $\strengthen$ may not terminate is if it keeps spawning obligations in l.~\ref{alg:str:push}. 
Let us thus look closer at how obligations are spawned:
%Recall that $\strengthen$ aims to resolve all obligations. A trivial way to terminate is to report that \strengthen does not manage to do so, but we want to avoid this case. In this subsection, we aim to show that the instance of \strengthen reported in Alg.~\ref{alg:strengthensimple} terminates.
%The only possible scenario that \strengthen does not terminate is if it keeps spawning new obligations. We therefore consider how we spawn new obligations.
%
Whenever we detect that resolving an obligation $(i,s,\delta)$ would violate relative inductivity for some action $a$ (l.~\ref{alg:str:relind}), we first need to update the values of the successor states $s_1,\hdots, s_n \in \succs(s,a)$ in frame $i{-}1$, i.e., we push the obligations $(i{-}1,s_1,\delta_1), \hdots, (i{-}1, s_n, \delta_n)$ which have to be resolved first (ll.~\ref{alg:str:getprobs}--\ref{alg:str:push}).
It is noteworthy that, for a TS, a single action leads to a single successor state $s_1$.
Alg.~\ref{alg:strengthensimple} employs a heuristic \heuristic to determine the probabilities required for pushing obligations (l.~\ref{alg:str:getprobs}).
Assume for an obligation $(i,s,\delta)$ that the check in l.~\ref{alg:str:relind} yields 
$\exists a \in \acts{s} \colon \bma{a}{F_{i-1}}[s] > \delta$. 
%\christophmargin{I removed the footnote here since I do not think it helps at this particular point.} %\footnote{In words: Under the hypothesis that the actual $i{-}1$-step reachability probabilities are given according to $F_{i-1}$, there is an action $a$, such that the $i$-step reachability probability to reach $B$ from $s$ choosing action $a$ exceeds $\delta$.}
%Assume that for obligation $(i, s, \delta)$ in l.~\ref{alg:str:relind} we determine that 
%$\exists a \in \acts{s} \colon \bma{a}{F_{i-1}}[s] > \delta$.
Then \heuristic takes $s$, $a$, $\delta$ and reports some probability $\delta_j$ for every $a$-successor $s_j$ of $s$.
However, an arbitrary heuristic of type
%\begin{align*}
$\heuristic\colon S \,\times\, \act \,\times\, [0,1] \to [0,1]^*$ may lead to non-terminating behavior:
%In particular, 
If $\delta_1, \hdots, \delta_n = F_{i-1}[s_1], \hdots F_{i-1}[s_n]$, then the heuristic has no effect. 
It is thus natural to require that an \emph{adequate} heuristic \heuristic yields probabilities such that the check $\bma{a}{F_{i-1}}[s] > \delta$ in l.~\ref{alg:str:relind} cannot succeed twice for the \emph{same obligation} $(i,s,\delta)$ and \emph{same action $a$}. Formally, this is guaranteed by the following: % rather than admitting arbitrary heuristics.
%Rather than considering any heuristic, it is thus our natural to admit only heuristics 
%It is thus only natural to require that any admissible heuristic \heuristic restores relative inductivity.
%Formally, $\strengthen$ terminates if heuristic \heuristic is adequate in the following sense:
%To ensure termination, we therefore require that the heuristic is adequate. 
%
%
\begin{definition}
\label{descr:goodheuristic} 
%A heuristic is \emph{adequate}, if for every $s \in S$, $a \in \act(s)$,and $\delta \in [0,1]$ with
% $\succs(s,a) = s_1, \ldots, s_n$, 
% either $\false, \blank \gets \heuristic(s, a, \delta)$ or 
% $\true, \delta_1,\hdots,\delta_n \gets \heuristic(s, a, \delta)$ with
%\begin{align*}
%  \bma{a}{F_{i-1}\subst{s_1}{\delta_1}\ldots\subst{s_n}{\delta_n}} \leq \delta.
%\end{align*}
Heuristic \heuristic is \emph{adequate} if the following triple is valid (for any frame $F$):
\begin{align*}
  & \bigl\{\, \succs(s,a) = s_1, \ldots, s_n \,\bigr\} \\
  & \quad \delta_1,\ldots,\delta_n \lleftarrow \heuristic(s,a,\delta) \\
  & \bigl\{\, \bma{a}{\vphantom{\big(}F\subst{s_1}{\delta_1}\ldots\subst{s_n}{\delta_n}}[s] \leq \delta \,\bigr\}
  \tag*{$\triangle$}
\end{align*}
\end{definition}%
Details regarding our implementation of heuristic \heuristic are found in Section~\ref{sec:ic3instantiated}.

%A call $\heuristic(s,a,\delta)$ of an \emph{adequate} heuristic $\heuristic$ spawns obligations for the successors of $s$ %for frame $i-1$ 
%(ll.~\ref{alg:str:getprobs} -- \ref{alg:str:push}) in a way such that, for every action $a$, the check $\bma{a}{F_{i-1}}[s] > \delta$ in l.~\ref{alg:str:relind} cannot succeed twice for the same obligation $(i,s,\delta)$.
For an adequate heuristic, attempting to resolve an obligation $(i,s,\delta)$ (ll.~\ref{alg:str:pop} -- \ref{alg:str:update}) either succeeds after spawning it at most $|\act(s)|$ times or $\strengthen$ returns $\false$.
%Every obligation $(i, s, \delta)$ is thus spawned at most $|\act(s)|$ times.\bennimargin{Not following why this sentence follows from the previous.} 
By a similar argument, attempting to resolve an obligation $(i > 0 , s, \blank\,)$ leads to at most $\sum_{a\in\act(s)} |\{ s' \in S \mid P(s, a, s') > 0\}|$ other obligations of the form $(i{-}1, s', \blank\,)$.
Consequently, the total number of obligations spawned by Alg.~\ref{alg:strengthensimple} is bounded. 
Since Alg.~\ref{alg:strengthensimple} terminates if all obligations have been resolved (l.~\ref{alg:str:endwhile}) and each of its loop iterations either returns $\false$, spawns obligations, or resolves an obligation, we conclude:
\begin{lemma}
%If heuristic $\heuristic$ is adequate, then $\strengthen(\frameseqk)$ terminates.
$\strengthen(\frameseqk)$ terminates for every adequate heuristic $\heuristic$.
\end{lemma}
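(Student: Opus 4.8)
The plan is to prove termination by a ranking argument: reduce non-termination to the unbounded spawning of obligations and then bound, once and for all, the total number of obligations that can ever be pushed onto the priority queue $Q$. First I would note that every iteration of the \textbf{while}-loop of Alg.~\ref{alg:strengthensimple} does exactly one of three things: it returns \false\ (l.~\ref{alg:str:false}), it pushes obligations in l.~\ref{alg:str:push}, or it resolves the popped obligation in l.~\ref{alg:str:update} without re-pushing it, thereby strictly decreasing $|Q|$. Consequently, if only finitely many push-events occur over the whole run, then $Q$ can grow only finitely often, and between any two push-events $|Q|$ strictly decreases; hence $Q$ eventually empties and the loop exits in l.~\ref{alg:str:endwhile}. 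Termination thus reduces to bounding the total number of obligations ever pushed.

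The crucial step is a per-obligation bound: for a fixed obligation $(i,s,\delta)$, the relative-inductivity check in l.~\ref{alg:str:relind} can succeed at most $|\acts{s}|$ times---at most once per enabled action $a \in \acts{s}$. Suppose the check succeeds for action $a$. Then l.~\ref{alg:str:getprobs} invokes the \emph{adequate} heuristic $\heuristic$ to obtain $\delta_1,\ldots,\delta_n$ for the successors $\succs(s,a) = \{s_1,\ldots,s_n\}$, and l.~\ref{alg:str:push} pushes the obligations $(i-1,s_1,\delta_1),\ldots,(i-1,s_n,\delta_n)$ together with a copy of $(i,s,\delta)$. Because $Q$ always pops a minimal frame index first, these successor obligations---and all obligations they transitively spawn, which live strictly below index $i$---are popped and (absent a \false-return) resolved before $(i,s,\delta)$ is popped again. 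Resolution in l.~\ref{alg:str:update} only ever \emph{lowers} frame entries via $\minval{\cdot,\cdot}$, so afterwards $F_{i-1}[s_j] \leq \delta_j$ for every $j$, and this stays true for the remainder of the call. Adequacy (Def.~\ref{descr:goodheuristic}) together with monotonicity of $\bma{a}{\cdot}$ then yields $\bma{a}{F_{i-1}}[s] \leq \delta$, so the check in l.~\ref{alg:str:relind} can never succeed again for the same $a$. Since $\acts{s}$ is finite, after at most $|\acts{s}|$ spawns all actions satisfy the inductivity constraint and the \textbf{else}-branch resolves $(i,s,\delta)$.

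With this per-obligation bound in hand, I would bound the grand total of obligations by induction on the frame index. An obligation at index $0$ spawns nothing, since l.~\ref{alg:str:frame-zero} returns \false\ immediately. An obligation $(i,s,\delta)$ with $i>0$ triggers the spawn branch at most $|\acts{s}|$ times, each time pushing at most $\sum_{a\in\acts{s}} |\succs(s,a)|$ obligations, all at the strictly smaller index $i-1$. Since the frame index strictly decreases with each level of spawning and is bounded above by the initial index $k$, the recursion depth is at most $k$ and the total number of pushed obligations is finite (crudely, bounded by a $k$-fold product of the per-state out-degrees). Combining this with the reduction of the first paragraph gives termination. I expect the main obstacle to be the careful justification in the second paragraph that the adequacy guarantee \emph{persists}: one must argue both that the successor obligations are fully resolved before the parent is re-popped (using the minimal-index pop order and that spawned obligations strictly decrease the index) and that later resolutions cannot undo the bound $F_{i-1}[s_j]\leq \delta_j$ (using that updates are monotone decreasing), so that the count of at most one spawn per action genuinely holds.
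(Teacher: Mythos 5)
Your proof is correct and follows essentially the same route as the paper: reduce termination to bounding the total number of spawned obligations, use adequacy to show the relative-inductivity check in l.~\ref{alg:str:relind} can succeed at most once per action for a fixed obligation (hence at most $|\acts{s}|$ spawns per obligation), and then bound the grand total via the strictly decreasing frame index. Your second paragraph merely makes explicit what the paper leaves implicit in Def.~\ref{descr:goodheuristic}---namely that the minimal-index popping order and the monotone-decreasing updates ensure the adequacy guarantee persists until the parent obligation is re-popped---which is a welcome level of added rigor, not a deviation.
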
%
\begin{recovery}
	\label{recovery:3}
Let \heuristic be adequate. 
Then for qualitative reachability ($\thresh = 0$), all obligations spawned by $\strengthen$ as in \textnormal{Alg.~\ref{alg:strengthensimple}} are of the form $(i,s,0)$.
%
%with adequate $\heuristic$, all obligations in $Q$ are of the form $(i, s, 0)$.
%If $\lambda = 0$, then during 
%$\strengthen(F_0,\hdots,F_k,\heuristic)$ with adequate $\heuristic$, all obligations in $Q$ are of the form $(i, s, 0)$.
\end{recovery}%
%
%\
\vspace{-1ex}\subsubsection{\strengthen{} returns $\false$.}
\label{sec:strengthen:counterexamples}
There are two cases in which $\strengthen$ fails to restore the \pricthree invariants and returns $\false$.
The first case (the left disjunct of l.~\ref{alg:str:frame-zero}) is that we encounter an obligation for frame $F_0$.
Resolving such an obligation would inevitably violate \emph{initiality}; analogously to standard \icthree, we thus return $\false$.

The second case~(the right disjunct of l.~\ref{alg:str:frame-zero}) is that we encounter an obligation $(i,s,\delta)$ for a bad state $s \in \bad$ with a probability $\delta < 1$ (though, obviously, all $s \in \bad$ have probability ${=}1$).
Resolving such an obligation would inevitably prevents us from restoring \emph{relative inductivity}: 
If we updated $F_i[s]$ to $\delta$, we would have $\bm{F_{i-1}}[s] = 1 > \delta = F_{i}[s]$. 
Notice that, in contrast to standard \icthree, this second case \emph{can} occur in \pricthree:
%Second, as clarified in the example below, there are cases in which it is impossible for an adequate heuristic to find suitable probabilities $\delta_1,\ldots,\delta_n$.
\begin{example}
\label{ex:heuristic-no-values}
%Notice that no obligations are created for bad states in $\bad$ as their reachability probability is always $1$.
Assume we have to resolve an obligation $(i, s_3, \sfrac{1}{2})$ for the MDP in Fig.~\ref{fig:runningex}.
This involves spawning obligations $(i{-}1, s_4,\delta_1)$ and $(i{-}1, s_5, \delta_2)$, where $s_5$ is a bad state, such that
$\sfrac{1}{3} \cdot \delta_1 + \sfrac{2}{3} \cdot \delta_2 \leq \sfrac{1}{2}$.
Even for $\delta_1 = 0$, this is only possible if $\delta_2 \leq \sfrac{3}{4} < 1$.
\hfill$\triangle$
\end{example}

\vspace{-1ex}\subsubsection{\strengthen{} cannot prove unsafety.}
\label{sec:strengthen:counterexamples}
If standard \icthree returns $\false$, it proves unsafety by constructing a counterexample, i.e., \emph{a single path from the initial state to a bad state}. 
If \pricthree returns $\false$, there are two possible reasons: 
\emph{Either} the MDP is indeed unsafe, \emph{or} the heuristic $\heuristic$ at some point selected probabilities in a way such that $\strengthen$ is unable to restore the $\pricthree$ invariants (even though the MDP might in fact be safe).
\strengthen{} thus only returns a \emph{potential} counterexample which either proves unsafety or indicates that our heuristic was inappropriate.

Counterexamples in our case consist of subsystems rather than a single path (see Chall.~\ref{challenge:counterexamples} and Sec.~\ref{sec:outermost_loop}).
\strengthen{} hence returns the set $\qtouched$ of all states that eventually appeared in the obligation queue. %\footnote{some optimizations are possible}. 
This set is a conservative approximation, and optimizations as in~\cite{DBLP:conf/sfm/AbrahamBDJKW14} may be beneficial.
Furthermore, in the qualitative case, our potential counterexample subsumes the counterexamples constructed by standard \icthree:
\begin{recovery}
\label{recovery:4}
    Let $\heuristic_0$ be the adequate heuristic mapping every state to $0$.
    For qual.\ reachability ($\lambda = 0$), % and \heuristic be an adequate and $0$-fitting heuristic.
	%If $\lambda=0$ and $\heuristic$ $0$-fitting, 
	if $\success = \false$ is returned by $\strengthencall{\frameseqk}{\heuristic_0}$, then
	$\qtouched$ contains a path from the initial to a bad state.\textnormal{\footnote{$\qtouched$ might be restricted to only contain this path by some simple adaptions.}}
    %Then $\strengthen(F_0,\hdots,F_k,\heuristic)$ yields either $\success =\true$ or $\qtouched$ contains 
    %a path from $\sinit$ to $\bad$\footnote{$\qtouched$ might be restricted to only contain this path by some simple adaptions.}.
\end{recovery}

\section{Dealing with Potential Counterexamples}
\label{sec:outermost_loop}

% !TEX root = ./main.tex

% we may end up with a possible counterexample although the MDP $\mdp$ is safe.
%The same issue arises if we no suitable probabilities exist to generate new obligations (cf. line~\ref{alg:str:not-found} and Example~\ref{ex:heuristic-no-values}).

Recall that our core algorithm $\mainloop$ is incomplete for a fixed heuristic~\heuristic:
It cannot give a conclusive answer whenever it finds a potential counterexample for two possible reasons:
Either the heuristic $\heuristic$ turned out to be inappropriate 
%e.g., if it has to chose a value $< 1$ for a bad state while spawning obligations, 
or the MDP is indeed unsafe.	
The idea to overcome the former is to call $\mainloop$ finitely often in an outer loop that generates new heuristics until we find an appropriate one: 
If \mainloop still does not report safety of the MDP, then it is indeed unsafe.
We do not blindly generate new heuristics, but use the potential counterexamples returned by $\mainloop$ to refine the previous one.

% !TEX root = ../main.tex
%
%
%\begin{figure}[t]
\begin{algorithm}[t]
\KwData{global MDP $\mdp$,\quad set of bad states $\bad$,\quad  threshold $\lambda$ }
\KwResult{$\true$ iff $\prevtmax{\mdp}{\sinit}{\bad} \leq \lambda$}
$\oracle  \lleftarrow \interface{\initOracle()}$;\label{alg:outermost:init}
$\touched \lleftarrow \{\sinit\}$\;
\Do{
    $\touched \neq S$
}{
    $\heuristic \lleftarrow \createHeuristic(\oracle)$; $\quad\safeReturn, \subsystem \lleftarrow \mainloop()$\;

        \lIf{
      \safeReturn
    }{
      \Return $\true$
    }
    \lIf(\label{alg:outer:refute}){
            $\checkrefutation(\subsystem)$
    }{
        \Return $\false$
    }
    {$\touched \lleftarrow \interface{\enlarge(\touched, \subsystem)}$\;\label{alg:outermost:touched}}
%    \uIf{
%      $\subsystem \subseteq \touched$
%    }{
%      $\touched \lleftarrow \touched \,\cup\, \left(\succs(\touched) \setminus \bad\right)$\;
%    }
%    \uElse{
%      $\touched \lleftarrow \touched \,\cup\, \subsystem$\;
%    }
    $\Omega \lleftarrow \interface{\refineoracle(\Omega, \touched)}$\;\label{alg:outermost:refine}
}
return $\Omega(\sinit) \leq \lambda$
    \caption{\pricthree: The outermost loop dealing with possibly imprecise heuristics}
    \label{alg:outermost_loop}
\end{algorithm}%
%\end{figure}%
%
%

%\subsubsection{Possible Counterexamples}

Let consider the procedure $\outerloop$ in Alg.~\ref{alg:outermost_loop} which wraps our core algorithm \mainloop in more detail:
First, we create an \emph{oracle} $\Omega\colon S \rightarrow [0,1]$ which (roughly) \emph{estimates} the probability of reaching $\bad$ for every state.
A~\emph{perfect oracle} would yield \emph{precise} maximal reachability probabilites, i.e., $\Omega(s) = \prevtmax{}{s}{\bad}$ for every state $s$.
We construct oracles by \interface{user-supplied methods} (highlighted in \interface{blue}). 
Examples of implementations of all user-supplied methods in Alg.~\ref{alg:outermost_loop} are discussed in Sect.~\ref{sec:experiments}.

Assuming the oracle is good, but not perfect, we construct an adequate heuristic~$\heuristic$ selecting probabilities based on the oracle\footnote{We thus assume that heuristic \heuristic invokes the oracle whenever it needs to guess some probability.} for all successors of a given state: There are various options. 
The simplest is to pass-through the oracle values. 
A version that is more robust against noise in the oracle is discussed in Sect.~\ref{sec:practical}.
%\sebastianmargin{I reformulated, but this remains vague. See Bennis comments in the source code}
%\christophmargin{Expanded a bit. All in all we agreed to have an oracle rather than just a heuristic, so I think this is fine. The heuristic is an implementation based on some magic function $\Omega$...}
%  I'm buying the above now --Benni
%\bennimargin{What does robust mean here? Actually, this is the only point in this section where we draw any link between the oracle and the heuristic, which I find very weird. \mainloop is parametrized by the heuristic and here we speak only about oracles which are magically turned into heuristics.} 
We then invoke $\mainloop$.
If $\mainloop$ reports safety, the MDP is indeed safe by the soundness of $\mainloop$.

\subsubsection{Check refutation.}
If $\mainloop$ does not report safety, it reports a subsystem that hints to a \emph{potential} counterexample.
%Formally, we construct a subMDP from these states that have been visited during the invocation of procedure $\strengthen$.  
Formally, this subsystem is a subMDP of states that were `visited' during the invocation of $\strengthen$.%
\begin{definition}[subMDP]
\label{def:submdp}
	Let $\mdp = \mdps$ be an MDP and let $S' \subseteq S$ with $\sinit \in S'$.
	We call $\mdp_{S'} = \mdpsprime$ the \emph{subMDP induced by $\mdp$ and $S'$}, where for all $s,s' \in S'$ and all $a \in \act$, we have $ \transmatrix' (s, a, s') = \transmatrix (s, a, s') $. 
    \hfill $\triangle$
\end{definition}%
A subMDP $\mdp_{S'}$ may be substochastic where missing probability mass never reaches a bad state.
Def.~\ref{def:mdps} is thus relaxed: For all states $s \in S'$ we require that $\sum_{s'\in S'} \transmatrix(s,a,s') \leq 1$.%
If the subsystem is unsafe, we can conclude that the original MDP $\mdp$ is also safe.
\begin{lemma}
	If $\mdp'$ is a subMDP of $\mdp$ and $\mdp'$ is unsafe, then $\mdp$ is also unsafe. 
\end{lemma}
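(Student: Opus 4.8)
The plan is to reduce the statement to a monotonicity property of maximal reachability probabilities under taking subMDPs. Writing $\bmsymbol'$ for the Bellman operator of the subMDP $\mdp_{S'} = \mdp'$ and $\bmsymbol$ for that of $\mdp$, and using the fixpoint characterization of maximal reachability (as in the Problem Statement, which applies verbatim to the possibly substochastic $\mdp'$ since $([0,1]^{S'},\leq)$ is a complete lattice and $\bmsymbol'$ is monotone), it suffices to prove
\begin{align*}
  \bigl(\lfp~\bmsymbol'\bigr)[\sinit] \lleq \bigl(\lfp~\bmsymbol\bigr)[\sinit]~.
\end{align*}
Here $\sinit \in S'$ by Def.~\ref{def:submdp}, so both evaluations are well defined. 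Given this inequality the lemma is immediate: unsafety of $\mdp'$ means $(\lfp~\bmsymbol')[\sinit] > \thresh$, and the inequality propagates the strict bound to $(\lfp~\bmsymbol)[\sinit]$, so $\mdp$ is unsafe as well.

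To establish the inequality I would invoke \emph{Park's Lemma}, exactly as in the proof of Lem.~\ref{lem:ic3_inv_eqframes} but applied to $\mdp'$. Let $V \in [0,1]^{S'}$ be the restriction of $\lfp~\bmsymbol$ to $S'$, i.e.\ $V[s] = (\lfp~\bmsymbol)[s]$ for every $s \in S'$. The heart of the argument is to show that $V$ is \emph{inductive} for the subMDP, that is, $\bmsymbol'(V) \leq V$. Once this is done, Park's Lemma for $\mdp'$ yields $\lfp~\bmsymbol' \leq V$, and evaluating at $\sinit$ gives $(\lfp~\bmsymbol')[\sinit] \leq V[\sinit] = (\lfp~\bmsymbol)[\sinit]$, as required.

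For the check $\bmsymbol'(V) \leq V$ I would distinguish whether $s \in \bad$. For $s \in \bad \cap S'$ both sides equal $1$, since $(\lfp~\bmsymbol)[s] = 1$. For $s \in S' \setminus \bad$ I would exploit two sources of slack that both point the same way: first, the actions enabled at $s$ in $\mdp'$ form a subset of $\acts{s}$, so the maximum defining $\bmsymbol'(V)[s]$ ranges over no more actions; second, since $\transmatrix'(s,a,s') = \transmatrix(s,a,s')$ on $S'$ and the omitted summands $\sum_{s' \in S \setminus S'} \transmatrix(s,a,s') \cdot (\lfp~\bmsymbol)[s']$ are nonnegative, each inner weighted sum over $S'$ is bounded above by the corresponding full sum over $S$. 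Chaining these,
\begin{align*}
  \bmsymbol'(V)[s] \eeq \max_{a}\,\sum_{s' \in S'} \transmatrix(s,a,s')\,(\lfp~\bmsymbol)[s'] \lleq \bm{\lfp~\bmsymbol}[s] \eeq (\lfp~\bmsymbol)[s] \eeq V[s]~,
\end{align*}
where the last equality is the fixpoint property of $\lfp~\bmsymbol$.

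The main obstacle I anticipate is bookkeeping rather than mathematical depth: the two operators live over different state spaces ($S'$ versus $S$), so one must be careful about restricting $\lfp~\bmsymbol$ to $S'$ and about the fact that $\mdp'$ may be \emph{substochastic}. Notably, the argument does \emph{not} rely on the modeling assumption that the missing probability mass never reaches $\bad$; that assumption is only needed to make subMDPs useful as counterexamples, whereas here mere nonnegativity of the dropped mass suffices. As an orthogonal sanity check and alternative proof, one can argue operationally: a memoryless scheduler optimal for $\mdp'$ extends to $\mdp$ by agreeing on $S'$ (and choosing arbitrarily elsewhere); every path reaching $\bad$ inside $S'$ carries identical probability in both induced Markov chains, so the reachability probability in $\mdp$ under the extension dominates that in $\mdp'$, and taking the maximum recovers the same inequality.
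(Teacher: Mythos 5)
Your proof is correct, but there is nothing in the paper to compare it against: the lemma is stated without proof and is also absent from the appendix of omitted proofs, evidently regarded by the authors as a standard monotonicity property of reachability probabilities under subMDPs. Your argument fills this gap soundly and stays entirely within the paper's own toolkit: Park's lemma, which the paper already uses in Sect.~\ref{sec:pric3-basic-principle} and Lem.~\ref{lem:ic3_inv_eqframes}, applied to the restriction $V$ of $\lfp~\bmsymbol$ to $S'$. The verification that $V$ is a prefixed point of $\bmsymbol'$ is right: for $s \in \bad \cap S'$ both sides equal $1$ because $\lfp~\bmsymbol$ is a fixed point, and for $s \in S' \setminus \bad$ the two sources of slack you identify (the maximum in $\bmsymbol'$ ranges over a subset of the enabled actions---or equivalently over all of $\act$, with non-enabled actions contributing $0$---and each inner sum drops only nonnegative terms over $S \setminus S'$) give $\bmsymbol'(V)[s] \leq \bmsymbol(\lfp~\bmsymbol)[s] = V[s]$. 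Two of your side remarks are worth keeping. First, the fixpoint characterization for the substochastic $\mdp'$ does need justification: your complete-lattice/monotonicity observation gives existence of $\lfp~\bmsymbol'$, and that it equals maximal reachability in $\mdp'$ follows, e.g., by completing $\mdp'$ with a sink state absorbing the missing mass. Second, you are right that the paper's modeling remark that missing mass ``never reaches a bad state'' is not needed for this direction of the implication, since dropped mass can only decrease reachability. Your alternative operational argument---extending an optimal memoryless scheduler of $\mdp'$ to $\mdp$ and observing that every $\bad$-reaching path inside $S'$ keeps its probability---is equally valid, and is arguably the informal argument the authors had in mind when they omitted the proof.
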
%
The role of $\checkrefutation$ is to establish whether the subsystem is indeed a true counterexample or a spurious one.
Formally, $\checkrefutation$ should ensure:
%Formally it must meet the following specification:
%\begin{definition}[Specification of $\checkrefutation$]
%	\label{def:spec_checkrefute}
	   \begin{align*}
	      \htriple{\true}
          {\textit{res} \gets \checkrefutation\left( \subsystem \right)}
          {\textit{res} = \true ~{}\Leftrightarrow{}~ \smdp{\subsystem} \text{ unsafe}}.  
	   \end{align*}
Again, $\outerloop$ is backward compatible in the sense that a single fixed heuristic is always sufficient when reasoning about reachability ($\thresh = 0$).%
\begin{recovery}
	\label{recovery:5}
%Let $\heuristic$ be adequate and $0$-fitting.
For qualitative reachability~($\lambda = 0$) and the heuristic $\heuristic_0$ from \textnormal{Recovery Statement~\ref{recovery:4}}, \outerloop 
%\textnormal{(Alg.~\ref{alg:outermost_loop})} 
invokes its core $\mainloop$ exactly \emph{once}.
%If $\lambda=0$ and $\heuristic$ is $0$-fitting, then $\pricthree(\hdots)$ is invoked once.	
\end{recovery}%
This statement is true, as $\mainloop$ returns either $\safeReturn$ or a subsystem containing a path from the initial state to a bad state. 
In the latter case, $\checkrefutation$ detects that the subsystem is indeed a counterexample which cannot be spurious in the qualitative setting.

We remark that the procedure $\checkrefutation$ invoked in l.~\ref{alg:outer:refute}
is a classical fallback;
it runs an (alternative) model checking algorithm, e.g., solving the set of Bellman equations, for the subsystem.
%It solves exactly our problem statement if $S' = S$.\sebastianmargin{Add remark for RB10}
In the worst case, i.e., for $S' = S$, we thus solve exactly our problem statement.
Empirically (Tab.~\ref{tab:primaryresults_1}) we observe that for reasonable oracles the procedure $\checkrefutation$ is invoked on significantly smaller subMDPs. However, in the worst case the subMDP must include 
\emph{all} paths of the original MDP, and then thus coincides.

\subsubsection{Refine oracle.}
Whenever we have neither proven the MDP safe nor unsafe,
%If we find out that the MDP is neither safe nor unsafe, 
we refine the oracle to prevent generating the same subsystem in the next invocation of $\mainloop$.
To ensure termination, oracles should only be refined finitely often.
That is, we need some progress measure. 
The set $\touched$ overapproximates all counterexamples encountered in some invocation of $\mainloop$ and we propose to use its size as the progress measure.
% such that $|\touched|$ increases by at least one in every iteration.
While there are several possibilities to update $\touched$ through the user-defined procedure $\interface{\enlarge}$~(l.~\ref{alg:outermost:touched}), every implementation should hence satisfy
%is need to satisfy: % the triple below.
%That is, updates of $\touched$ (l.~\ref{alg:outermost:touched}) need to satisfy: % the triple below.
$\htriple{\true}{\touched' \gets \interface{\enlarge(\touched,\blank\,)}}{|\touched'| > |\touched|}$.
%\end{definition}
Consequently, after finitely many iterations, the oracle is refined with respect to all states. 
%Eventually, we thus require to refine the oracle with respect to all states.
In this case, we may as well rely on solving the characteristic LP problem: % another probabilistic model checker:
%Then we may as well rely on solving the characteristic LP. 
\begin{lemma}
\label{lem:finallyperfect}
The algorithm \outerloop in Alg.~\ref{alg:outermost_loop} is sound and complete
if  $\refineoracle(\oracle,S)$ returns a perfect oracle $\oracle$ (with $S$ is the set of all states).
\end{lemma}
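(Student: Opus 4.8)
The plan is to read \enquote{sound and complete} as the single statement that \outerloop always terminates and returns $\true$ if and only if $\prevtmax{\mdp}{\sinit}{\bad} \leq \lambda$, and to prove it by decomposing into \emph{termination} (giving completeness) and \emph{correctness of every exit point} (giving soundness). The perfect-oracle hypothesis will enter at exactly one place, namely the return statement reached after the loop, so I would isolate that case and obtain the other two exits purely from specifications already established.

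First I would argue termination. The only potentially unbounded control structure is the \textbf{do}-\textbf{while} loop, whose guard is $\touched \neq S$, and the progress measure is $|\touched|$. The specification of $\enlarge$, i.e.\ the valid triple $\htriple{\true}{\touched' \gets \enlarge(\touched,\blank\,)}{|\touched'| > |\touched|}$, forces $|\touched|$ to strictly increase in every iteration (l.~\ref{alg:outermost:touched}). Since $\touched \subseteq S$ and $S$ is finite, the loop executes at most $|S|-1$ times before $\touched = S$. Each individual iteration terminates: $\mainloop$ terminates by the earlier termination theorem (for which it suffices that $\strengthen$ and $\propagate$ terminate, and $\strengthen$ terminates for every adequate heuristic, which $\createHeuristic$ is assumed to produce), while $\checkrefutation$, $\enlarge$, and $\refineoracle$ terminate by assumption on the user-supplied methods. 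Hence \outerloop halts.

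Next I would verify that each of the three places where \outerloop returns emits the correct truth value of \enquote{$\mdp$ is safe}. If $\mainloop$ reports $\safeReturn$, soundness of $\mainloop$ (Thm.~\ref{thm:priccorrect}) yields $\prevtmax{\mdp}{\sinit}{\bad} \leq \lambda$, so returning $\true$ is correct. If $\checkrefutation(\subsystem)$ holds (l.~\ref{alg:outer:refute}), its specification gives that $\smdp{\subsystem}$ is unsafe, and since an unsafe subMDP implies an unsafe MDP (the preceding subMDP lemma), $\mdp$ is unsafe; thus returning $\false$ is correct. The remaining return $\oracle(\sinit) \leq \lambda$ is reached only when the guard fails, i.e.\ $\touched = S$; in that same final iteration, line~\ref{alg:outermost:refine} has just executed $\refineoracle(\oracle,\touched) = \refineoracle(\oracle, S)$, which by hypothesis returns a perfect oracle with $\oracle(s) = \prevtmax{\mdp}{s}{\bad}$ for all $s$. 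In particular $\oracle(\sinit) = \prevtmax{\mdp}{\sinit}{\bad}$, so the returned Boolean decides safety exactly.

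Combining, \outerloop terminates and, at whichever statement it returns, outputs the correct value, which establishes both soundness and completeness. I expect the only delicate point to be the control-flow bookkeeping of the \textbf{do}-\textbf{while} loop: one must argue that $\touched = S$ at the guard forces the \emph{immediately preceding} $\refineoracle$ call to have received $S$ as its argument, so that the oracle is already perfect at the moment $\oracle(\sinit)$ is read off, and that this exact value is used directly rather than being fed back into a further (unnecessary) $\mainloop$ invocation. Everything else reduces to direct appeals to the quoted specifications of $\mainloop$, $\checkrefutation$, and $\enlarge$, together with the subMDP lemma.
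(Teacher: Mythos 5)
Your proposal is correct and follows essentially the same route as the paper: the paper likewise splits the claim into soundness (correctness of each return point, using soundness of $\mainloop$, the $\checkrefutation$ specification, and the perfect-oracle hypothesis for the post-loop return when $\touched = S$) and completeness (termination, via the strict increase of $|\touched|$ guaranteed by $\enlarge$ together with finiteness of the MDP). Your additional bookkeeping about the final $\refineoracle(\oracle,S)$ call preceding the guard check is exactly the implicit step the paper's terser argument relies on.
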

%\benni{Now, my concern is indeed picked up in the Lemma. But I still cannot follow because I don't really understand the link between heuristic and oracle.
%\christoph{I hope the comment at the beginning of this section clarifies this. We definitely do not have a formal relationship between the two, but the heuristic should be based on the oracle, i.e., the oracle should be the only way to guess inside the heuristic.}
%\benni{Idea: Can we say (above Lemma 7) something wishy washy along the lines of \enquote{The more states we touch, the more \enquote{refined} the oracle becomes for these states} and then say something like \enquote{but if $\touched = S$, then the oracle is not just \enquote{more refined} but indeed \emph{perfect}, see the following Lemma 7}?\newline{} $~$\quad Then we have the termination and correctness argument immediately and more clearly (to my taste), because if $\touched \subseteq S$ anyways and $|\touched|$ increases, then eventually $\touched = S$ and -- because of Lemma 7 -- we then terminate with a correct result (because we have solved the whole problem anyway).\newline{} $~$\quad This also then very clearly \emph{doesn't} hide \enquote{the possibility of falling back to a different model checking algorithm} (see Reviewer 3).}
%}
Weaker assumptions on $\refineoracle$ are possible, but are beyond the scope of this paper. % as they are, as of now, practically irrelevant.
Moreover, the above lemma does not rely on the abstract concept that heuristic $\heuristic$ provides suitable probabilities after finitely many refinements.\footnote{One could of course now also create a heuristic that is trivial for a perfect oracle and invoke \mainloop{} with the heuristic for the perfect oracle, but there really is no benefit in doing so.}

% !TEX root = ./main.tex

\newcommand{\pvarfont}[1]{\mathsf{#1}}

\newcommand{\var}{\pvarfont{vars}}
\newcommand{\varc}{\pvarfont{c}}
\newcommand{\varf}{\pvarfont{f}}
\newcommand{\valc}{\pvarfont{val\_ c}}
\newcommand{\valf}{\pvarfont{val\_ f}}
\newcommand{\Frame}{\pvarfont{Frame}}
\newcommand{\Frameapp}[1]{\pvarfont{Frame} \left( #1 \right)}
\newcommand{\Goal}{\pvarfont{Bad}}
\newcommand{\Zero}{\pvarfont{Zero}}
\newcommand{\Goalapp}[1]{\pvarfont{Bad} \left( #1 \right)}
\newcommand{\smtbm}{\pvarfont{Phi}}
\newcommand{\smtbmapp}[1]{\pvarfont{Phi} \left( #1 \right) }
\newcommand{\chosencmd}{\pvarfont{ChosenCommand}}
\newcommand{\States}{\pvarfont{States}}

\newcommand{\niceforall}{\forall \,}

\section{Practical \boldpricthree}
\label{sec:practical}

So far, we gave a conceptual view on \prict, but now take a more practical stance. 
We detail important features of effective implementations of \pricthree (based on our empirical evaluation).
We first describe an implementation without generalization,
and then provide a prototypical extension that allows for three variants of generalization. 
%We provide an efficient\bennimargin{Maybe \enquote{efficient} is a bit much? :-D Especially in the light of RB1? We should at least explain in what sense we are efficient (given that we don't compete with state of the art model checkers) or replace the word (for instance by \emph{effective}?).} implementation of 
%In this section, we report on these implementations. 

\lstset{language=prism}
\newsavebox{\prismexample}
\begin{lrbox}{\prismexample}% Store first listing
\begin{lstlisting}[numbers=none]
module ex
  c : [0..20] init 0;   f : [0..1] init 0;
  [] c<20 -> 0.1:(f'=1) + 0.9:(c'=c+1); // cmd 1
  [] c<10 -> 0.2:(f'=1) + 0.8:(c'=c+2);
endmodule
\end{lstlisting}
\end{lrbox}
\begin{figure}[t]
\centering
\subfloat[\prism{} code snippet]{
\raisebox{0.6cm}{\usebox{\prismexample}}
}
\subfloat[Part of the corresponding MDP]{
\begin{tikzpicture}
	\node[rectangle,draw] (c2f0) {\scriptsize{$c{=}2{,}f{=}0$}};
	\node[rectangle,below=0.4cm of c2f0,draw] (c2f1) {\scriptsize{$c{=}2{,}f{=}1$}};
	\node[rectangle,left=0.4cm of c2f1,draw] (c3f0) {\scriptsize{$c{=}3{,}f{=}0$}};
	\node[rectangle,right=0.4cm of c2f1,draw] (c4f0) {\scriptsize{$c{=}4{,}f{=}0$}};
	\node[circle,fill, inner sep=1pt,left=0.7 cm of c2f0] (a1) {};
	\node[circle,fill, inner sep=1pt,right=0.7 cm of c2f0] (a2) {};
	
	\draw[-] (c2f0) edge node[above]{\tiny{cmd1}} (a1);
	\draw[-] (c2f0) edge node[above]{\tiny{cmd2}} (a2);
	\draw[->] (a1) edge node[left]{\tiny{$0.1$}} (c2f1);
	
	\draw[->] (a1) edge node[left]{\tiny{$0.9$}} (c3f0);
	\draw[->] (a2) edge node[right]{\tiny{$0.2$}} (c2f1);
	\draw[->] (a2) edge node[right]{\tiny{$0.8$}} (c4f0);
\end{tikzpicture}
\label{fig:prismex:mdp}
}
\caption{Illustrative \prism-style probabilistic guarded command language example}
\label{fig:prismex}
\end{figure}

\subsection{A Concrete \pricthree Instance without Generalization}
\label{sec:ic3instantiated}

\paragraph{Input.}
We describe MDPs using the \prism{} guarded command language\footnote{
Preprocessing ensures a single thread (module) and no deadlocks.}, exemplified in Fig.~\ref{fig:prismex}. 
States are described by valuations to $m$ (integer-valued) program variables $\vars$, 
and outgoing actions are described by commands of the form \begin{align*}\footnotesize{\text{\texttt{[] guard -> prob1 : update1 \& ... \& probk : updatek }}}\end{align*}
If a state satisfies \texttt{guard}, then the corresponding action with $k$ branches exists; probabilities are given by \texttt{probi}, the successor states are described by \texttt{updatei}, see Fig.~\ref{fig:prismex:mdp}.

\paragraph{Encoding.}
We encode frames as logical formulae. Updating frames then corresponds to adding conjuncts, and checking for relative inductivity is a satisfiability call.
Our encoding is as follows:
States are assignments to the program variables, i.e., $\States = \Ints^m$.
We use various uninterpreted functions, to whom we give semantics using  appropriate constraints.
Frames\footnote{
In each operation, we only consider a single frame.} are represented by uninterpreted functions $\fun{\Frame}{\States}{\Reals}$
satisfying $\Frameapp{s} = d$ implies $F[s]\geq d$.
%For a frame $F$ and variables $\var$ the semantics is: $\Frameapp{s} = d$ implies $F[s]\geq d$. 
Likewise, the Bellman operator is an uninterpreted function  $\fun{\smtbm}{\States}{\Reals}$ such that $\smtbmapp{s} = d$ implies $\bm{F}[ s ] \geq d$.
Finally, we use $\fun{\Goal}{\States}{\Bools}$ with  
	           $\Goalapp{s}$ iff $s \in \bad$.
	           % Analogously, $\fun{\Zero}{\States}{\Bools}$ describes zero states $\{ s \mid \prevtmax{}{s}{\bad} = 0 \}$, which we may deduce from preprocessing.

%

Among the appropriate constraints, we ensure that variables are within their range, bound the values for the frames, and enforce $\smtbmapp{s}=1$ for $s \in \bad$.
We encode the guarded commands as exemplified by this encoding of the first command in Fig.~\ref{fig:prismex}:
\begin{align*}
& \niceforall s \in \States \colon \neg \Goalapp{s} \wedge 
		             s[c] < 20   \\&\quad 
		             \Longrightarrow
		          \smtbmapp{s} =  0.1 \cdot \Frameapp{(s[c],1)} + 0.9 \cdot \Frameapp{(s[c] + 1,s[f])}.
\end{align*}
In our implementation, we optimize the encoding.
We avoid the uninterpreted functions by applying an adapted Ackerman reduction.
We avoid universal quantifiers, by first observing that we always ask whether a single state is not inductive, and then unfolding the guarded commands in the constraints  that describe a frame.
That encoding grows linear in the size of the maximal out-degree of the MDP, and is in the quantifier-free fragment of linear arithmetic (QFLRIA).

%
%We explicitly track commands to identify the action violating inductivity,
%and allow some preprocessing to identify states  $\{ s \mid \prevtmax{}{s}{\bad} = 0 \}$

\paragraph{Heuristic.}
We select probabilities $\delta_i$ by solving the following optimization problem, with variables $x_i$, $\textsl{range}(x_i)\in [0,1]$, for states $s_i \in \succs(s,a)$ and oracle $\oracle$\footnote{If $\max \oracle(s_j) = 0$, we assume $\forall j. \oracle(s_j) = 0.5$. If $\delta = 0$, we omit rescaling to allow $\sum x_j =0$.}.
\begin{align*}
       &\text{minimize} \sum^k_{\substack{i \\  s_i \not \in \bad}} \abs{\frac{x_i}{\sum_{j=1}^k x_j} - \frac{\oracle \left( s_i \right)}{\sum_{j=1}^n \oracle \left( s_j \right)}} \notag \; \text{s.t.}\;%\\
       %& \text{s.t.}& &\quad \bigwedge_{x_i \in \vars} 0 \leq x_i \leq 1~ \eewedge 
       %
       \delta = \sum_{i=1}^k P(s, a , s_i) \cdot 
       \begin{cases}
       1, & \text{if}~s_i \in \bad, \\
       x_i, & \text{else.}
       \end{cases} \notag 
       %
    %  \eewedge   \bigwedge_{\substack{x_i \in \vars  \\ s_i \in \cstates}} x_i = F[s_i]. \notag
    %
\end{align*}%
The constraint ensures that, if the values $x_i$ correspond to the actual reachability probabilities from $s_i$, then the reachability from state $s$ is exactly $\delta$. 
A constraint stating that $\delta \geq \hdots$ would also be sound, but we choose equality as it preserves room between the actual probability and the threshold we want to show.
Finally, the objective function aims to preserve the ratio between the suggested probabilities.

\paragraph{Repushing and breaking cycles.}
\emph{Repushing}
~\cite{DBLP:conf/fmcad/EenMB11}
is an essential ingredient of both standard \icthree and \pricthree.
Intuitively, we avoid opening new frames and spawning obligations that can be deduced from current information. 
Since repushing generates further obligations in the current frame, its implementation requires that the detection of Zeno-behavior 
has to be moved from $\mainloop$ into the \strengthen{} procedure. 
%
%\paragraph{Breaking cycles.}
%In combination with repushing we need to avoid Zeno behavior in \strengthen.
Therefore, we track the histories of the obligations in the queue.
Furthermore, once we detect a cycle we first try to adapt the heuristic $\heuristic$ locally to overcome this cyclic behavior instead of immediately giving up. This local adaption reduces the number of $\mainloop$ invocations.
    
\paragraph{Extended queue.}
In contrast to standard \icthree, the obligation queue might contain entries that vary only in their $\delta$ entry. 
In particular, if the MDP is not a tree, it may occur that the queue contains both  $(i, s, \delta)$ and $(i, s, \delta')$ with $\delta > \delta'$.
Then, $(i, s, \delta')$ can be safely pruned from the queue.
Similarly, after handling $(i, s, \delta)$, if some fresh obligation $(i,  s, \delta'' > \delta)$ is pushed to the queue, it can be substituted with $(i, s, \delta)$.
To efficiently operationalize these observations, we keep an additional mapping which remains intact  over multiple invocations of \strengthen.
We furthermore employed some optimizations for $\qtouched$ aiming to track potential counterexamples better.
%
%\paragraph{Keeping the \mainloop{} state.}
After refining the heuristic, one may want to reuse frames or the obligation queue, but empirically this leads to performance degradation as the values in the frames are inconsistent with behavior suggested by the heuristic.

\subsection{Concrete \pricthree with Generalization}
\label{sec:generalization}
So far, frames are updated by changing single entries whenever we resolve obligations $(i, s, \delta)$, i.e., we add conjunctions of the form $F_i[s] \leq \delta$. 
Equivalently, we may add a constraint $\forall s' \in S: F_i[s'] \leq p_{\{s\}}(s')$ with $p_{\{s\}}(s) = \delta$ and $p_{\{s\}} = 1$ for all $s' \neq s$.

Generalization in \icthree aims to update a set $\genset$ (including $s$) of states in a frame rather than a single one without invalidating relative inductivity.
In our setting, we thus consider a function $p_{\genset}\colon \genset \to [0,1]$ with $p_{\genset}(s) \leq \delta$ that assigns (possibly different) probabilities to all states in $\genset$.
Updating a frame then amounts to adding the constraint
\[\niceforall s\in \States\colon s \in \genset \Longrightarrow \Frameapp{s} \leq \genfunc(s).\]
Standard \icthree{} generalizes by iteratively ``dropping'' a variable, say $v$. The set $\genset$ then consists of all states that do not differ from the fixed state $s$ except for the value of $v$.\footnote{Formally, $\genset = \{ s' \mid  \text{for all } v' \in \vars \setminus \{ v \}: s'(v') = s(v')\}$.}
We take the same approach by iteratively dropping program variables. Hence, $p_{\genset}$ effectively becomes a mapping from the value $s[v]$ to a probability.
We experimented with four types of functions $\genfunc$ that we describe for Markov chains.
The ideas are briefly outlined below; details are beyond the scope of this paper. 

\paragraph{Constant $\genfunc$.}
Setting all $s \in \genset$ to $\delta$ is straightforward but empirically not helpful.% and therefore not further considered.

\paragraph{Linear interpolation.}
We use a linear function $\genfunc$ that interpolates two points.
The first point $(s[v], \delta)$ is obtained from the obligation $(i,s,\delta)$.
 For a second point, consider the following: 
Let $\textnormal{Com}$ be the unique\footnote{Recall that we have a Markov chain consisting of a single module} command active at state $s$.
	             Among all states in $\genset$ that are enabled in the guard of $\textnormal{Com}$,
	             we take the state $s'$ in which $s'[v]$ is maximal\footnote{This implicitly assumes that $v$ is increased. Adaptions are possible.}. 
	  The second point for interpolation is then $(s'[v], \bm{F_{i-1}}[s'])$.
	   If the relative inductivity fails for $\genfunc$ we do not generalize with $\genfunc$, but may attempt to find other functions.

\paragraph{Polynomial interpolation.}
Rather than linearly interpolating between two points, we may interpolate using more than two points. In order to properly fit these points, we can use a higher-degree polynomial. We select these points using counterexamples to generalization (CTGs):
%One essential challenge is to select the right points for the interpolation.
We start as above with linear interpolation.
However, if $\genfunc$ is not relative inductive, the
SMT solver yields a model with state $s'' \in \genset$ and probability $\delta''$, with $s''$ violating relative inductivity, i.e., $\bm{F_{i-1}}[s''] > \delta''$. We call $(s'', \bm{F_{i-1}}[s''])$ a CTG, and $(s''[v],\bm{F_{i-1}}[s'']))$ is then a further interpolation point, and we repeat.
	    
   Technically, when generalizing using nonlinear constraints, we use real-valued arithmetic with a branch-and-bound-style approach to ensure integer values.

\paragraph{Hybrid interpolation.}
In polynomial interpolation, we generate high-degree polynomials and add them to the encoding of the frame. In subsequent invocations, reasoning efficiency is drastically harmed by these high-degree polynomials.
Instead, we soundly approximate $\genfunc$ by a piecewise linear function, and use these constraints in the frame.

%\subsection{Implementation}
%\input{ic3implementation}

%

\section{Experiments}
\label{sec:experiments}
%The goal of the evaluation is to 
We 
assess 
%if and 
how \pricthree{} may contribute to the state of the art in probabilistic model checking.
We do some early empirical evaluation showing that \pricthree{} is feasible. We see ample room for further improvements of the prototype.

\paragraph{Implementation.}
We implemented a prototype\footnote{The prototype is available open-source from \url{https://github.com/moves-rwth/PrIC3}.} of \pricthree{} based on Sect.~\ref{sec:ic3instantiated} in Python.
The input is represented using efficient data structures provided by the model checker \storm{}.
%We use some data structures of the model checker \storm{} to represent the input.
We use an incremental instance of Z3~\cite{DBLP:conf/tacas/MouraB08} for each frame, as suggested in~\cite{DBLP:conf/fmcad/EenMB11}. A solver for each frame is important to reduce the time spent on pushing the large frame-encodings.
The optimization problem in the heuristic is also solved using Z3.
%We support four types of generalization (none, linear, polynomial, hybrid)  discussed before.
All previously discussed generalizations (none, linear, polynomial, hybrid) are supported. % discussed before.
%We support the four types of generalization 
%(none, linear, polynomial, hybrid)  discussed before.

\paragraph{Oracle and refinement.}
We support the (pre)computation of four different types of oracles 
for the \interface{initialization} step in Alg.~\ref{alg:outermost_loop}:
(1)~A perfect oracle solving \emph{exactly} the Bellman equations. Such an oracle is unrealistic, but interesting from a conceptual point.
%(2)~Perfect oracles with %additive 
%Gaussian noise, these are equally unrealistic, but yield insights in imprecise oracles.\sebastianmargin{Check if we use this in experiments. If not, then remove and update numbering. HOwever, be careful as these numbers occur in text below (at least as Oracle~4)}
(2)~Relative frequencies by recording all visited states during simulation. This idea is a na\"ive simplification of Q-learning.
(3)~Model checking with decision diagrams (DDs) and few value iterations. Often, a DD representation of a model can be computed fast, and the challenge is in executing sufficient value iterations. We investigate whether doing few value iterations yields a valuable oracle (and covers states close to bad states).
(4)~Solving a (pessimistic) LP from BFS partial exploration. States that are not expanded are assumed bad. Roughly, this yields oracles covering states close to the initial states.
%\end{itemize}

To implement \interface{\refineoracle} (cf.\ Alg.~\ref{alg:outermost_loop}, l.~\ref{alg:outermost:refine}), we create an LP for the subMDP induced by the touched states. For states whose successors are not in the touched states, we add a transition to $B$ labeled with the oracle value as probability. 
The solution of the resulting LP updates the entries corresponding to the touched states.

For \interface{\enlarge} (cf.\ Alg.~\ref{alg:outermost_loop}, l.~\ref{alg:outermost:touched}), we take the union of the subsystem and the touched states. 
If this does not change the set of touched states, we also add its successors.

%If the subsystem is a subset of the touched states, we also add successors\sebastianmargin{Kevin, Check this statement} of touched.\sebastianmargin{See also RB16}

\paragraph{Setup.}
We evaluate the run time and memory consumption of our prototype of \pricthree.
We choose a combination of models from the literature (BRP~\cite{DBLP:conf/papm/DArgenioJJL01}, ZeroConf~\cite{DBLP:journals/rfc/rfc3927}) and some structurally straightforward variants of grids (chain, double chain; see App.~\ref{app:experimentdetails}).
 Since our prototype lacks the sophisticated preprocessing applied by many state-of-the-art model checkers, it is more sensitive to the precise encoding of a model, e.g., the number of commands.
 To account for this, we generated new encodings for all models.
% For all models, we generated our own encodings. 
% Many encodings have been written with state-of-the-art model checkers that are not that sensitive to the number of commands. 
%\prict{} is very sensitive to the encodings, and our current prototype lacks the required preprocessing to alleviate this.
%
All experiments were conducted on an single core of an
Intel® Xeon® Platinum 8160 processor. %, 48 threads à 2.1GHz, 384GB RAM
%XXXX with XX memory. 
We use a 15 minute time-limit and report TO otherwise. Memory is limited to 8GB; we report MO if it is exceeded.
%limit and report MO otherwise.\sebastianmargin{Kevin, help me out here}
Apart from the oracle, all parameters of our prototype remain fixed over all experiments.
%All parameters (except the oracle, which we make explicit) of our prototype remain fixed over all experiments.
To give an impression of the run times, we compare our prototype with both the explicit (Storm$_\text{sparse}$) and DD-based (Storm$_\text{dd}$) engine of the model checker \storm~1.4, which compared  favourably in QComp~\cite{DBLP:conf/tacas/HahnHHKKKPQRS19}.

\paragraph{Results.}
In Tab.~\ref{tab:primaryresults_1}, we present the run times for various invocations of our prototype and Oracle~4\footnote{We explore $\min \{|S|, 5000\}$ states using BFS and \storm.}.
In particular, we give the model name and the number of (non-trivial) states in the particular instance, and the (estimated) actual probability to reach  $B$.
For each model, we consider multiple thresholds $\thresh$.
\sebastianmargin{please check below.}
The next 8 columns report on the four variants of \pricthree{} with varying generalization schemes. Besides the scheme with the run times, we report for each scheme the number of states of the largest (last) subsystem that $\checkrefutation$ in Alg.~\ref{alg:outermost_loop}, l.~\ref{alg:outer:refute} was invoked upon (column $|sub|$).
The last two columns report on the run times for \storm{} that we provide for comparison.
In each row, we mark with {\color{red!50!blue}purple}
 MDPs that are unsafe, i.e., \pricthree{} refutes these MDPs for the given threshold $\thresh$. We \textbf{highlight} the best configurations of \pricthree.
 
%It is still the only oracle for that configuration that sucessfully refutes.

%\begin{table}[t]
%\caption{Empirical results. Time in seconds, time out = 15 minutes.}
%\begin{minipage}[t]{0.56\textwidth}
%\input{table1}
%\end{minipage}
%\begin{minipage}[t]{0.44\textwidth}
%\input{table2}
%\end{minipage}
%\label{tab:primaryresults_1}
%\end{table}

\begin{table}[t]
\caption{Empirical results. Run times are in seconds; time out = 15 minutes.}
\centering

{\scriptsize	
\begin{tabular}{c r r  r || r r | r r | r r | r r | r  r}

 &\phantom{a} $|S|$ &\phantom{a} $\prevtmax{\mc}{\sinit}{\bad}$  &\phantom{a}  $\lambda$ &\phantom{a} w/o &\phantom{a} $|\text{\textit{sub}}|$ &\phantom{a}  lin &\phantom{a} $|\text{\textit{sub}}|$& \phantom{a}  pol &\phantom{a} $|\text{\textit{sub}}|$ &\phantom{a}  hyb &\phantom{a} $|\text{\textit{sub}}|$ &\phantom{a} Storm$_\text{sparse}$ &\phantom{a}  Storm$_\text{dd}$ \\[0.5ex]
 \hline \hline
 \multirow{3}{*}{\rotatebox{90}{BRP}} &  \multirow{3}{*}{$10^3$}  & \multirow{3}{*}{$0.035$} &  $0.1$ & TO & -- & TO & -- & TO & -- &  TO & --  & ${<}0.1$ &  $0.12$  \\  %  29  /  48  /   36  / 32
&&& {\color{red!50!blue}$0.01$} & $\mathbf{51.3}$ & $324$ & $125.8$ & $324$  & TO & -- & MO & -- & ${<}0.1$ &  $0.18$ \\
& & & {\color{red!50!blue}$0.005$} & $\mathbf{10.9}$ & $188$ & $38.3$ & $188$ & TO & -- & MO & -- & ${<}0.1$ &  $0.1$\\%  6  /  23  /   18  / 8
  \hline

\multirow{7}{*}{\rotatebox{90}{ZeroConf}}  &  \multirow{3}{*}{$10 ^{4}$}  & \multirow{3}{*}{$0.5$} &  $0.9$  & TO & -- & TO & -- &  $0.4$ & $0$ & $\mathbf{0.1}$ & $0$ & ${<}0.1$ & $296.8$  \\ % 2636 /  2664 / 2653   / 3647
%&&& $0.75$ & TO & -- & TO & -- &  $0.3$ & $0$ & $\mathbf{0.2}$ & $0$ & ${<}0.1$ & $286.9$  \\
& & & $0.52$ & TO & -- & TO & -- & $0.2$ & $0$ & $\mathbf{0.2}$ & $0$&  ${<}0.1$ & $282.6$ \\ % 1915 duplicate id../ 2447   / 2438   / 2424
& & & {\color{red!50!blue}$0.45$} & $\mathbf{{<}0.1}$ & $1$ & $\mathbf{{<}0.1}$ & $1$ & $\mathbf{{<}0.1}$ & $1$ & $\mathbf{{<}0.1}$ & $1$ & ${<}0.1$ & $300.2$   \\\cline{3-14} %   5899 /   792  /  594  /  587
&  \multirow{4}{*}{$10^9$} & \multirow{4}{*}{${\sim}0.55$} & $0.9$ & TO & -- & TO & -- &  $\mathbf{3.7}$ & $0$ & MO & --  &  MO &  TO \\ % 3644 / 3672  / 3661 / 3647 
&&&$0.75$ & TO & -- & TO & -- &  $\mathbf{3.4}$ & $0$ & MO & --  & MO & TO  \\
 & & & $0.52$ & TO & -- & TO & -- & TO & -- & TO & -- & MO & TO \\%  3428/ 3455  / 3446  /3432
& & & {\color{red!50!blue}$0.45$} & $\mathbf{{<}0.1}$ & $1$ & $\mathbf{{<}0.1}$ & $1$ & $\mathbf{{<}0.1}$ &$1$ & $\mathbf{{<}0.1}$ & $1$ & MO & TO   \\\hline%Job ids. TODO
 \multirow{9}{*}{\rotatebox{90}{Chain}} &  \multirow{4}{*}{$ 10 ^3$}  & \multirow{4}{*}{$0.394$} &  $0.9$ & $18.8$ & $0$ & $60.2$ & $0$ &$1.2$ & $0$ & $\mathbf{0.3}$ & $0$ & ${<}0.1$ &  ${<}0.1$  \\ % 5480 / 6118   /  6110  / 5483
& & & $0.4$ & $20.1$ & $0$ & $55.4$ & $0$ & $\mathbf{0.9}$ & $0$ & TO  & --  & ${<}0.1$ &  ${<}0.1$  \\% 5336 /  5363  /  5353  / 5340
& & & {\color{red!50!blue}$0.35$} & $\mathbf{91.8}$ & $431$ & $119.5$ & $431$ & TO & -- & TO & -- & ${<}0.1$ &  ${<}0.1$ \\% 1109/  1128  /  1121  / 1111
& & & {\color{red!50!blue}$0.3$} & $\mathbf{46.1}$ & $357$ & $64.0$ & $357$ & TO & -- & TO & -- & ${<}0.1$ &  ${<}0.1$ \\\cline{3-14} %1085 /   1103  /  1098  / 1087
&  \multirow{3}{*}{$10 ^4$} & \multirow{3}{*}{$0.394$} & $0.9$ & TO & -- & TO & -- &  $1.6$ & $0$ & $\mathbf{0.3}$ & $0$ & ${<}0.1$ & $4.5$ \\%   / 47   /      / 1448
%&&& $0.48$ & TO & -- & TO & -- &  $\mathbf{1.4}$ & $0$ & MO & -- & ${<}0.1$ & $4.8$\\
& & & $0.4$ & TO & -- & TO & -- & $\mathbf{1.4}$ & $0$ & TO & --  & ${<}0.1$ & $4.9$  \\% / 5974  /  5966  /
%& & & $0.35$ & TO & TO & TO & TO  &  \\% 1305 /  1319  /    / 1303
& & & {\color{red!50!blue}$0.3$} & TO & -- & TO & -- & TO & -- & TO & --   & ${<}0.1$ & $4.9$  \\\cline{3-14}% /    /    /
&  \multirow{2}{*}{$10 ^{12}$}  & \multirow{2}{*}{$0.394$} &  $0.9$ & TO & -- & TO & -- &  $\mathbf{6.4}$ & $0$ & MO & -- & MO & TO   \\% /  55  /   / 
& & & $0.4$ & TO & -- & TO & -- & $\mathbf{6.0}$ & $0$ & MO & --  & MO & TO  \\% 6560 /  6587  / 6579    /
\hline

\multirow{8}{*}{\rotatebox{90}{Double Chain}} &  \multirow{4}{*}{$ 10 ^3$}  & \multirow{4}{*}{$0.215$} &  $0.9$ & $528.1$ & $0$ & $828.8$ & $0$ & $203.3$ & $0$ & $\mathbf{0.6}$ & $0$ & ${<}0.1$ & ${<}0.1$   \\  %  221n  /   7336 /7335     /  60n
&&&$0.3$ & $588.4$ & $0$ & TO & -- & $138.3$ & $0$  & $\mathbf{0.5}$ & $0$ & ${<}0.1$ & ${<}0.1$ \\
& & & $0.216$ &  $\mathbf{597.4}$ & $0$ & TO & --  & $765.8$ & $0$ & MO & -- & ${<}0.1$ & ${<}0.1$   \\   %   125 /  144   /  137   /  703n
%& & & $0.21$ & TO & TO & MO & TO   \\  %   /   1775  /     /    
& & & {\color{red!50!blue}$0.15$} & TO & -- & TO & -- & TO & -- & TO & -- & ${<}0.1$ & ${<}0.1$   \\\cline{3-14}    %   /    /     / 
%
%
%&  \multirow{2}{*}{$10 ^{4}$}  & \multirow{2}{*}{$0.22$} &  $0.9$ & TO & -- & TO & --  & $16.8$ & $0$ & $\mathbf{0.5}$ & $0$  & $0.2$ & $2.8$   \\ %  8035 /  8063   /  713n    / 704n 
& \multirow{2}{*}{$10 ^{4}$}  & \multirow{2}{*}{$0.22$}& $0.3$ & TO & --  & TO & -- & $17.5$ & $0$ & $\mathbf{0.5}$ & $0$ & $0.2$ & $2.6$ \\
& & & {$0.24$} & TO & -- & TO & --  &$\mathbf{16.8}$ & $0$ & MO & --  & $0.2$ & $2.7$ \\\cline{3-14} % 7821   /  576   /     /  7824
%   /    /     / 
&  \multirow{2}{*}{$10^7$} & \multirow{2}{*}{$2.6\eexp{-4}$} & $4\eexp{-3}$ &  TO & -- & TO & -- & TO & --  & MO & -- &  TO & TO   \\
& & & $2.7\eexp{-4}$  &  TO & -- & TO & -- & $\mathbf{281.2}$ & $0$ &  MO & -- & TO & TO  \\ %  125 /  143  /     /

\end{tabular}
}

\label{tab:primaryresults_1}
\end{table}

\paragraph{Discussion.}
Our experiments give a mixed picture on the performance of our implementation of \pricthree{}.
On the one hand, Storm significantly outperforms \pricthree{} on most models.
%Further optimizations are required to reduce the number of TO's.
On the other hand, \pricthree{} is capable of reasoning about huge, yet simple, models with up to $10^{12}$ states that Storm is unable to analyze within the time and memory limits.
There is more empirical evidence that \pricthree{} may complement the state-of-the-art:

First, \emph{the size of thresholds matters}.
Our benchmarks show that---at least without generalization---more ``wiggle room'' between the precise maximal reachability probability and the threshold generally leads to a better performance. 
\pricthree{} may thus prove bounds for large models where a precise quantitative reachability analysis is out of scope.

Second, \emph{\pricthree{} enjoys the benefits of bounded model checking}. In some cases, e.g., ZeroConf for $\lambda = 0.45$, \pricthree{} refutes very fast as it does not need to \mbox{build the whole model.}

Third, if \pricthree{} proves the safety of the system, it  does so without relying on checking large subsystems in the $\checkrefutation$ step.

Fourth, \emph{generalization is crucial}. Without generalization, \pricthree{} is unable to prove safety for any of the considered models with more than $10^3$ states.
With generalization, however, it can prove safety for very large systems and thresholds close to the exact reachability probability. For example, it proved safety of the Chain benchmark with $10^{12}$ states for a threshold of $0.4$ which differs from the exact reachability probability by $0.006$.

Fifth, \emph{there is no best generalization}. There is no clear winner out of the considered generalization approaches. Linear generalization always performs worse than the other ones. In fact, it performs worse than no generalization at all. The hybrid approach, however, occasionally has the edge over the polynomial approach. 
This indicates that more research is required to find suitable generalizations.

In Appendix~\ref{app:experimentdetails}, we also compare the additional three types of oracles (1--3). 
%While, depending on the oracle the run times may vary drastically, there are no clear conclusions to be drawn from these preliminary experiments.\sebastianmargin{Check this!!}
%we report in the last three columns the time, the number of iterations of $\outerloop$ and the cardinality of touched states in $\outerloop$.
We observed that only few oracle refinements are needed to prove \emph{safety}; for small models at most one refinement was sufficient.
However, this does not hold if the given MDP is unsafe. 
DoubleChain with $\thresh = 0.15$, for example, and Oracle 2 requires 25 refinements.

%\input{figures/chain}

%\kevin{Remark for experiments start}
%
%
%Considered states / number outer iterations for successfully refuted thresholds:
%\begin{itemize}
%	%
%	\item Brp small $\thresh = 0.01$:  324/43  (meaning: 324 considered states, 43 outer iterations)
%	\item Brp small $\thresh = 0.005$:  188/27
%	\item ZC small/medium/gigantic/gigantic2 $\thresh = 0.45$ $\thresh = 0.3$:  1/1.
%	\item Chain small $\thresh= 0.35$: 431/1.    $\thresh = 0.3$: 357/1.   
%	%
%\end{itemize}
%\kevin{Remark for experiments end}

\section{Conclusion}
We have presented \prict---the first truly probabilistic, yet conservative, extension of \icthree to quantitative reachability in MDPs. 
Our theoretical development is accompanied by a prototypical implementation and experiments.
We believe there is ample space for improvements including an in-depth investigation of suitable oracles and generalizations.

%We have presented a first version of \prict, a conservative extension of \icthree to quantitative reachability in MDPs. We accompany this framework with a first prototype and experiments.
%Future work includes a more thorough investigation of suitable oracles, heuristics, and improvements to our generalization approaches.

%
\bibliographystyle{splncs04}
\bibliography{literature}

\begin{thebibliography}{10}
\providecommand{\url}[1]{\texttt{#1}}
\providecommand{\urlprefix}{URL }
\providecommand{\doi}[1]{https://doi.org/#1}

\bibitem{DBLP:conf/sfm/AbrahamBDJKW14}
{\'{A}}brah{\'{a}}m, E., Becker, B., Dehnert, C., Jansen, N., Katoen, J.,
  Wimmer, R.: Counterexample generation for discrete-time markov models: An
  introductory survey.  {SFM}. {LNCS}~8483, pp. 65--121. Springer (2014)

\bibitem{DBLP:journals/tomacs/AghaP18}
Agha, G., Palmskog, K.: A survey of statistical model checking. {ACM} Trans.
  Model. Comput. Simul.  \textbf{28}(1),  6:1--6:39 (2018)

\bibitem{DBLP:journals/pacmpl/AgrawalC018}
Agrawal, S., Chatterjee, K., Novotn{\'{y}}, P.: Lexicographic ranking
  supermartingales: an efficient approach to termination of probabilistic
  programs. {PACMPL}  \textbf{2}({POPL}),  34:1--34:32 (2018)

\bibitem{DBLP:conf/tacas/AlfaroKNPS00}
de~Alfaro, L., Kwiatkowska, M.Z., Norman, G., Parker, D., Segala, R.: Symbolic
  model checking of probabilistic processes using mtbdds and the kronecker
  representation.  {TACAS}. {LNCS}~1785, pp. 395--410. Springer (2000)

\bibitem{DBLP:reference/mc/BaierAFK18}
Baier, C., de~Alfaro, L., Forejt, V., Kwiatkowska, M.: Model checking
  probabilistic systems.  Handbook of Model Checking, pp. 963--999. Springer
  (2018)

\bibitem{DBLP:series/lncs/BaierHK19}
Baier, C., Hermanns, H., Katoen, J.P.: The 10, 000 facets of {MDP} model
  checking.  Computing and Software Science, {LNCS} 10000, pp. 420--451.
  Springer (2019)

\bibitem{BK08}
Baier, C., Katoen, J.P.: Principles of model checking. {MIT} Press (2008)

\bibitem{DBLP:conf/cav/Baier0L0W17}
Baier, C., Klein, J., Leuschner, L., Parker, D., Wunderlich, S.: Ensuring the
  reliability of your model checker: Interval iteration for {Markov} decision
  processes.  {CAV} {(1)}. {LNCS} 10426, pp. 160--180. Springer (2017)

\bibitem{DBLP:conf/cav/BartheEFH16}
Barthe, G., Espitau, T., Fioriti, L.M.F., Hsu, J.: Synthesizing probabilistic
  invariants via doob's decomposition.  {CAV} {(1)}. {LNCS}~9779, pp. 43--61.
  Springer (2016)

\bibitem{DBLP:conf/atva/BartocciKS19}
Bartocci, E., Kov{\'{a}}cs, L., Stankovic, M.: Automatic generation of
  moment-based invariants for prob-solvable loops.  {ATVA}. {LNCS} 11781, pp.
  255--276. Springer (2019)

\bibitem{DBLP:series/faia/Biere09}
Biere, A.: Bounded model checking.  Handbook of Satisfiability, Frontiers in
  Artificial Intelligence and Applications~185, pp. 457--481. {IOS} Press
  (2009)

\bibitem{DBLP:conf/vmcai/Bradley11}
Bradley, A.R.: {SAT}-based model checking without unrolling.  {VMCAI}.
  LNCS~6538, pp. 70--87. Springer (2011)

\bibitem{DBLP:conf/atva/BrazdilCCFKKPU14}
Br{\'{a}}zdil, T., Chatterjee, K., Chmelik, M., Forejt, V.,
  Kret{\'{\i}}nsk{\'{y}}, J., Kwiatkowska, M.Z., Parker, D., Ujma, M.:
  Verification of {Markov} decision processes using learning algorithms.
  {ATVA}. {LNCS}~8837, pp. 98--114. Springer (2014)

\bibitem{DBLP:journals/tocl/ChadhaV10}
Chadha, R., Viswanathan, M.: A counterexample-guided abstraction-refinement
  framework for {M}arkov decision processes. {ACM} Trans. Comput. Log.
  \textbf{12}(1),  1:1--1:49 (2010)

\bibitem{DBLP:conf/cav/ChakarovS13}
Chakarov, A., Sankaranarayanan, S.: Probabilistic program analysis with
  martingales.  {CAV}. {LNCS}~8044, pp. 511--526. Springer (2013)

\bibitem{DBLP:conf/ijcai/ChakrabortyFMV15}
Chakraborty, S., Fried, D., Meel, K.S., Vardi, M.Y.: From weighted to
  unweighted model counting.  {IJCAI}. pp. 689--695. {AAAI} Press (2015)

\bibitem{DBLP:journals/rfc/rfc3927}
Cheshire, S., Aboba, B., Guttman, E.: Dynamic configuration of ipv4 link-local
  addresses. {RFC}  \textbf{3927},  1--33 (2005)

\bibitem{DBLP:journals/fmsd/CimattiGMT16}
Cimatti, A., Griggio, A., Mover, S., Tonetta, S.: Infinite-state invariant
  checking with {IC3} and predicate abstraction. {FMSD}  \textbf{49}(3),
  190--218 (2016)

\bibitem{DBLP:conf/isola/DArgenioHS18}
D'Argenio, P.R., Hartmanns, A., Sedwards, S.: Lightweight statistical model
  checking in nondeterministic continuous time.  {ISoLA}. LNCS 11245, pp.
  336--353. Springer (2018)

\bibitem{DBLP:conf/papm/DArgenioJJL01}
D'Argenio, P.R., Jeannet, B., Jensen, H.E., Larsen, K.G.: Reachability analysis
  of probabilistic systems by successive refinements.  {PAPM-PROBMIV}. Lecture
  Notes in Computer Science~2165, pp. 39--56. Springer (2001)

\bibitem{DBLP:conf/cav/DehnertJK017}
Dehnert, C., Junges, S., Katoen, J.P., Volk, M.: A storm is coming: {A} modern
  probabilistic model checker.  {CAV} {(2)}. {LNCS} 10427, pp. 592--600.
  Springer (2017)

\bibitem{DBLP:conf/fmcad/EenMB11}
E{\'{e}}n, N., Mishchenko, A., Brayton, R.K.: Efficient implementation of
  property directed reachability.  {FMCAD}. pp. 125--134. {FMCAD} Inc. (2011)

\bibitem{DBLP:conf/hybrid/FranzleHT08}
Fr{\"{a}}nzle, M., Hermanns, H., Teige, T.: Stochastic satisfiability modulo
  theory: {A} novel technique for the analysis of probabilistic hybrid systems.
   {HSCC}. {LNCS}~4981, pp. 172--186. Springer (2008)

\bibitem{DBLP:conf/qest/GretzKM13}
Gretz, F., Katoen, J.P., McIver, A.: Prinsys - on a quest for probabilistic
  loop invariants.  {QEST}. {LNCS}~8054, pp. 193--208. Springer (2013)

\bibitem{DBLP:journals/pe/GretzKM14}
Gretz, F., Katoen, J.P., McIver, A.: Operational versus weakest pre-expectation
  semantics for the probabilistic guarded command language. Perform. Eval.
  \textbf{73},  110--132 (2014)

\bibitem{DBLP:conf/fmcad/GurfinkelI15}
Gurfinkel, A., Ivrii, A.: Pushing to the top.  {FMCAD}. pp. 65--72. {IEEE}
  (2015)

\bibitem{DBLP:journals/tcs/HaddadM18}
Haddad, S., Monmege, B.: Interval iteration algorithm for {MDP}s and {IMDP}s.
  Theor. Comput. Sci.  \textbf{735},  111--131 (2018)

\bibitem{DBLP:conf/tacas/HahnHHKKKPQRS19}
Hahn, E.M., Hartmanns, A., Hensel, C., Klauck, M., Klein, J.,
  Kret{\'{\i}}nsk{\'{y}}, J., Parker, D., Quatmann, T., Ruijters, E.,
  Steinmetz, M.: The 2019 comparison of tools for the analysis of quantitative
  formal models - (qcomp 2019 competition report).  {TACAS} {(3)}. {LNCS}
  11429, pp. 69--92. Springer (2019)

\bibitem{DBLP:conf/tacas/HahnHWZ10}
Hahn, E.M., Hermanns, H., Wachter, B., Zhang, L.: {PASS:} abstraction
  refinement for infinite probabilistic models.  {TACAS}. {LNCS}~6015, pp.
  353--357. Springer (2010)

\bibitem{DBLP:conf/fm/HahnLSTZ14}
Hahn, E.M., Li, Y., Schewe, S., Turrini, A., Zhang, L.: iscasmc: {A} web-based
  probabilistic model checker.  {FM}. {LNCS}~8442, pp. 312--317. Springer
  (2014)

\bibitem{DBLP:journals/tse/HanKD09}
Han, T., Katoen, J.P., Damman, B.: Counterexample generation in probabilistic
  model checking. {IEEE} Trans. Software Eng.  \textbf{35}(2),  241--257 (2009)

\bibitem{DBLP:journals/pacmpl/HarkKGK20}
Hark, M., Kaminski, B.L., Giesl, J., Katoen, J.P.: Aiming low is harder:
  Induction for lower bounds in probabilistic program verification. {PACMPL}
  \textbf{4}({POPL}),  37:1--37:28 (2020)

\bibitem{DBLP:conf/tacas/HartmannsH14}
Hartmanns, A., Hermanns, H.: The modest toolset: An integrated environment for
  quantitative modelling and verification.  {TACAS}. {LNCS}~8413, pp. 593--598.
  Springer (2014)

\bibitem{DBLP:journals/corr/abs-1910-01100}
Hartmanns, A., Kaminski, B.L.: Optimistic value iteration.  {CAV}. LNCS,
  Springer (2020), [to appear]

\bibitem{DBLP:conf/fmcad/HassanBS13}
Hassan, Z., Bradley, A.R., Somenzi, F.: Better generalization in {IC3}.
  {FMCAD}. pp. 157--164. {IEEE} (2013)

\bibitem{DBLP:conf/cav/HermannsWZ08}
Hermanns, H., Wachter, B., Zhang, L.: Probabilistic {CEGAR}.  {CAV}.
  {LNCS}~5123, pp. 162--175. Springer (2008)

\bibitem{DBLP:conf/sat/HoderB12}
Hoder, K., Bj{\o}rner, N.: Generalized property directed reachability.  {SAT}.
  LNCS~7317, pp. 157--171. Springer (2012)

\bibitem{thesis:kaminski}
Kaminski, B.L.: Advanced Weakest Precondition Calculi for Probabilistic
  Programs. Ph.D. thesis, RWTH Aachen University, {Germany} (2019),
  \url{http://publications.rwth-aachen.de/record/755408/files/755408.pdf}

\bibitem{DBLP:journals/jacm/KaminskiKMO18}
Kaminski, B.L., Katoen, J.P., Matheja, C., Olmedo, F.: Weakest precondition
  reasoning for expected runtimes of randomized algorithms. J. {ACM}
  \textbf{65}(5),  30:1--30:68 (2018)

\bibitem{DBLP:journals/fmsd/KattenbeltKNP10}
Kattenbelt, M., Kwiatkowska, M.Z., Norman, G., Parker, D.: A game-based
  abstraction-refinement framework for {Markov} decision processes. {FMSD}
  \textbf{36}(3),  246--280 (2010)

\bibitem{DBLP:conf/stoc/Kozen83}
Kozen, D.: A probabilistic {PDL}.  {STOC}. pp. 291--297. {ACM} (1983)

\bibitem{DBLP:conf/cav/KwiatkowskaNP11}
Kwiatkowska, M.Z., Norman, G., Parker, D.: {PRISM} 4.0: Verification of
  probabilistic real-time systems.  {CAV}. LNCS~6806, pp. 585--591. Springer
  (2011)

\bibitem{sttt_siemens2020}
Lange, T., Neuh\"au\ss{}er, M.R., Noll, T., Katoen, J.P.: {IC3} software model
  checking. STTT  (2020)

\bibitem{DBLP:journals/ipl/LassezNS82}
Lassez, J.L., Nguyen, V.L., Sonenberg, L.: Fixed point theorems and semantics:
  {A} folk tale. Inf. Process. Lett.  \textbf{14}(3),  112--116 (1982)

\bibitem{DBLP:series/mcs/McIverM05}
McIver, A., Morgan, C.: Abstraction, Refinement and Proof for Probabilistic
  Systems. Monographs in Computer Science, Springer (2005)

\bibitem{DBLP:conf/tacas/MouraB08}
de~Moura, L.M., Bj{\o}rner, N.: {Z3:} an efficient {SMT} solver.  {TACAS}.
  LNCS~4963, pp. 337--340. Springer (2008)

\bibitem{park1969fixpoint}
Park, D.: Fixpoint induction and proofs of program properties. Machine
  intelligence  \textbf{5} (1969)

\bibitem{DBLP:journals/corr/abs-1909-08017}
Polgreen, E., Brain, M., Fr{\"{a}}nzle, M., Abate, A.: Verifying reachability
  properties in {Markov} chains via incremental induction. CoRR
  \textbf{abs/1909.08017} (2019)

\bibitem{puterman1994markov}
Puterman, M.L.: {Markov} Decision Processes: Discrete Stochastic Dynamic
  Programming. Wiley Series in Probability and Statistics, Wiley (1994)

\bibitem{DBLP:conf/cav/QuatmannK18}
Quatmann, T., Katoen, J.P.: Sound value iteration.  {CAV}. {LNCS} 10981, pp.
  643--661. Springer (2018)

\bibitem{DBLP:conf/qest/RabeWKYH14}
Rabe, M.N., Wintersteiger, C.M., Kugler, H., Yordanov, B., Hamadi, Y.: Symbolic
  approximation of the bounded reachability probability in large {Markov}
  chains.  {QEST}. {LNCS}~8657, pp. 388--403. Springer (2014)

\bibitem{DBLP:conf/mbmv/SeufertS17}
Seufert, T., Scholl, C.: Sequential verification using reverse {PDR}.  {MBMV}.
  pp. 79--90. Shaker Verlag (2017)

\bibitem{DBLP:conf/vmcai/SuenagaI20}
Suenaga, K., Ishizawa, T.: Generalized property-directed reachability for
  hybrid systems.  {VMCAI}. LNCS 11990, pp. 293--313. Springer (2020)

\bibitem{DBLP:conf/atva/TakisakaOUH18}
Takisaka, T., Oyabu, Y., Urabe, N., Hasuo, I.: Ranking and repulsing
  supermartingales for reachability in probabilistic programs.  {ATVA}. {LNCS}
  11138, pp. 476--493. Springer (2018)

\bibitem{DBLP:journals/corr/abs-1903-09354}
Vazquez{-}Chanlatte, M., Rabe, M.N., Seshia, S.A.: A model counter's guide to
  probabilistic systems. CoRR  \textbf{abs/1903.09354} (2019)

\bibitem{DBLP:conf/vmcai/WimmerBB09}
Wimmer, R., Braitling, B., Becker, B.: Counterexample generation for
  discrete-time {Markov} chains using bounded model checking.  {VMCAI}.
  {LNCS}~5403, pp. 366--380. Springer (2009)

\end{thebibliography}
%\clearpage

%\pagebreak
%\appendix
%\section{Full Strengthen code}
%\label{app:fullstrengthen}
%\input{strengthen-complete}
%\pagebreak
%\section{Pseudo-code for check refutation and get probabilities}
%\label{app:detailedcode}
%\input{detailed-code}
%\pagebreak
%\section{Example with repushing and propagion}
%\label{app:fullexample}
%\input{fullexample}
%\pagebreak
%\section{Proof of Theorem~}
%\input{proof_pric3correct}
%\section{Proof of Lemma \ref{lem:ic3_inv_preserved_by_update}}
%\input{proof_ic3_inv_preserved_by_update}
%\section{Proof of Theorem~}
%\input{proof_strengthen_correct}

\clearpage

\appendix
\section{Further details for the experiments}
\label{app:experimentdetails}

\begin{figure}[h!]
\begin{lstlisting}[numbers=none]
module chain
c : [0..N] init 0;   f : [0..1] init 0;
[] c<N-> p:(c'=c+1) + (1-p):(f'=1); 
endmodule
label bad = f'=1
\end{lstlisting}
\begin{lstlisting}[numbers=none]
module double_chain
c : [0..N] init 0;   f : [0..1] init 0;   g : [0..1] init 0;
[] c<N & g=0 -> p1:(c'=c+1) + p2:(g'=1) + p3 : (f' =1); 
[] c<N & g=1 -> q:(c'=c+1) + (1-q): (f'=1);
endmodule
label bad = f'=1
\end{lstlisting}
\caption{Prism code for examples Chain and Double Chain}
\label{fig:chains}
\end{figure}
In Fig.~\ref{fig:chains}, we depict the precise encoding of the two variants of chains.

For Table~\ref{tab:primaryresults_2} (alternative oracles), we use only the smallest models  from Table~\ref{tab:primaryresults_1} as we focus on the effect of the oracles. The model name and the threshold $\lambda$ thus identify the instance; $\oracle$ refers to the oracle type.\footnote{We use 100 value iteration steps for Oracle~3.} We then list the generalization approach. For each configuration, we report the number of iterations of $\mainloop$ and the run time in seconds.

Table~\ref{tab:primaryresults_1_appendix} provides additional empirical results for the benchmarks from Table~\ref{tab:primaryresults_1}: The number of calls to $\checkrefutation$ and the percentage of the total runtime required by $\checkrefutation$.
\begin{table}[h!]
\caption{Comparison of the number iterations of \pricthree{}'s outermost loop and its run times (in seconds) for three different oracles $\oracle$.}
\centering

\newcommand{\mcoq}{$m_{100}$}
\newcommand{\mcoh}{$m_{250}$}
{\scriptsize	
	% o=1 perfect
	%o=2 sim
	%o=3 mc_100
	% pro orakel 1 Zeile
	% pro lambda 4 Zeilen
	% pro Modell 12 Zeilen
	%
	%
	% gen 1 = w/o
	%
	%gen 2 = lin
	%
	%gen 3= poly
	%
	%gen 4=hyb
	% mod (12), lam (3), o (4), ref1,T1,ref2,T2,ref3,T3,ref4,T4
	\begin{tabular}{c  c  c ||  r r |  r r | r r | r r }
		
		\phantom{aaa}& $\thresh$ &  $\oracle$   & \multicolumn{2}{c|}{w/o}& \multicolumn{2}{c|}{lin}& \multicolumn{2}{c|}{pol}& \multicolumn{2}{c}{hyb}  \\[0.5ex]
		\hline \hline
		%
		%		
	   %[0.45,0.52,0.75]
		\multirow{9}{*}{\rotatebox{90}{ZeroConf}}  &   \multirow{3}{*}{$0.75$} &  1 &  - & TO &  - & TO & - & TO &  - & MO   \\
		&&2 &  - & TO &  - & TO &  $0$ & $0.3$ &  $\mathbf{0}$ & $\mathbf{0.1}$ \\
		&&3 &  - & TO &  - & TO &  $\mathbf{1}$ & $\mathbf{0.2}$ &  $\mathbf{1}$ & $\mathbf{0.2}$ \\\cline{2-11}
		&\multirow{3}{*}{$0.52$}&  1 &  - & TO &  - & TO &  - & TO &  - & MO   \\
		&&2 &  - & TO &  - & TO &  $0$ & $0.3$ &  $0$ & $0.1$ \\
		&&3 &  - & TO &  - & TO &  $\mathbf{1}$ & $\mathbf{0.2}$ &  $\mathbf{1}$ & $\mathbf{0.2}$ \\\cline{2-11}
		&\multirow{3}{*}{{\color{red!50!blue}$0.45$}}&  1 &  - & TO &  - & TO &  - & TO &  - & MO   \\
		&&2 &  $\mathbf{1}$ & $\mathbf{<0.1}$ &  $\mathbf{1}$ & $\mathbf{<0.1}$ &  $\mathbf{1}$ & $\mathbf{<0.1}$ &  $\mathbf{1}$& $\mathbf{<0.1}$ \\
		&&3 &  $\mathbf{1}$ & $\mathbf{<0.1}$ &  $\mathbf{1}$ & $\mathbf{<0.1}$ &  $\mathbf{1}$ & $\mathbf{<0.1}$ &  $\mathbf{1}$ & $\mathbf{<0.1}$ \\\hline\hline

		\multirow{9}{*}{\rotatebox{90}{Chain}}  &   \multirow{3}{*}{$0.9$} &  1 & $0$ & $20.72$ &  $0$ & $65.7$ &  $\mathbf{0}$ & $\mathbf{0.1}$ &  $\mathbf{0}$ & $\mathbf{0.1}$   \\
		&&2 &  $0$ & $19.1$ &  $0$ & $64.9$ &  $0$ & $1.1$ &  $\mathbf{0}$ & $\mathbf{0.2}$ \\
		&&3 &  $0$ & $22.3$ &  $0$ & $65.7$ &  $0$ & $1.1$ &  $\mathbf{0}$ & $\mathbf{0.2}$ \\\cline{2-11}
		&\multirow{3}{*}{$0.4$}&  1 &  $0$ & $23.4$ &  $0$ & $61.9$ &  $\mathbf{0}$ & $\mathbf{22.1}$ &  - & MO   \\
		&&2 &  $0$ & $21.2$ &  $0$ & $61.9$ &  $\mathbf{0}$ & $\mathbf{1.0}$ &  - & MO\\
		&&3 &  $0$ & $23.4$ &  $0$ & $65.1$ &  $\mathbf{0}$ & $\mathbf{1.0}$ &  - & MO \\\cline{2-11}
		&\multirow{3}{*}{{\color{red!50!blue}$0.35$}}&  1 &  $\mathbf{1}$ & $\mathbf{77.8}$ &  $1$ & $99.5$ &  - & TO &  - & MO   \\
		&&2 &  $\mathbf{1}$ & $\mathbf{79.1}$ &  $1$ & $107.3$ &  - & TO &  - & MO \\
		&&3 &  $\mathbf{1}$ & $\mathbf{90.1}$ &  $1$ & $123.2$ &  - & TO &  - & MO \\\hline\hline

		%[0.21,0.216,0.3]
		\multirow{9}{*}{\rotatebox{90}{Double Chain}}  &   \multirow{3}{*}{$0.3$} &  1 &  - & TO &  - & TO &  - & TO &  $\mathbf{1}$ & $\mathbf{9.9}$   \\
		&&2 &  - & TO &  - & TO &  $0$ & $302.7$ &  $\mathbf{0}$ & $\mathbf{0.4}$ \\
		&&3 &  $0$ & $192.5$ &  $0$& $223.4$ &  $0$ & $28.4$ &  $\mathbf{0}$ & $\mathbf{0.5}$ \\\cline{2-11}
		&\multirow{3}{*}{$0.216$}&  1 &  - & TO &  - & TO &  - & TO &  - & MO   \\
		&&2 &  - & TO &  - & TO &  $\mathbf{0}$ & $\mathbf{245.5}$ &  - & MO \\
		&&3 &  - & TO &  - & TO &  - & TO &  - & MO \\\cline{2-11}
		&\multirow{3}{*}{{\color{red!50!blue}$0.15$}}&  1 &  - & TO &  - & TO &  - & TO &  - & MO   \\
		&&2 &  $25$ & $\mathbf{373.5}$ &  $\mathbf{18}$ & $701.3$ &  - & TO &  - & MO \\
		&&3 &  - &  TO &  - & TO &  - & TO &  - & MO \\\hline

	\end{tabular}
}
\label{tab:primaryresults_2}
\end{table}

\begin{table}[t]
	\caption{Additional empirical results.  $\#cr$ is the number of calls to $\checkrefutation$.
	$\text{crTime}\, (\%)$ is the percentage of the total runtime required by $\checkrefutation$.}
	\centering
	
{\scriptsize	
\begin{tabular}{c r r  r || r r | r r | r r | r r }

&&&&\multicolumn{2}{c}{w/o}&\multicolumn{2}{c}{lin}&\multicolumn{2}{c}{pol}&\multicolumn{2}{c}{hyb} \\ [0.5ex]
 &\phantom{a} $|S|$ &\phantom{a} $\prevtmax{\mc}{\sinit}{\bad}$  &\phantom{a}  $\lambda$ &\phantom{a} \#cr &\phantom{a} $\text{crTime}\, (\%)$ &\phantom{a}  \#cr &\phantom{a} $\text{crTime}\, (\%)$ & \phantom{a}  \#cr &\phantom{a} $\text{crTime}\, (\%)$ &\phantom{a}  \#cr &\phantom{a} $\text{crTime}\, (\%)$  \\[0.5ex]
 \hline \hline
 \multirow{3}{*}{\rotatebox{90}{BRP}} &  \multirow{3}{*}{$10^3$}  & \multirow{3}{*}{$0.035$} &  $0.1$ & -- & -- & -- & -- & -- & -- &  -- & --   \\  %  29  /  48  /   36  / 32
&&& {\color{red!50!blue}$0.01$} & $43$ & $45.4$ & $43$ & $17.0$  & -- & -- & -- & --  \\
& & & {\color{red!50!blue}$0.005$} & $27$ & $18.3$ & $27$ & $4.9$ & -- & -- & -- & -- \\%  6  /  23  /   18  / 8
  \hline

\multirow{7}{*}{\rotatebox{90}{ZeroConf}}  &  \multirow{3}{*}{$10 ^{4}$}  & \multirow{3}{*}{$0.5$} &  $0.9$  & -- & -- & -- & -- &  $0$ & $0$ & $0$ & $0$   \\ % 2636 /  2664 / 2653   / 3647
%&&& $0.75$ & TO & -- & TO & -- &  $0.3$ & $0$ & $\mathbf{0.2}$ & $0$ & ${<}0.1$ & $286.9$  \\
& & & $0.52$ & -- & -- & -- & -- & $0$ & $0$ & $0$ & $0$ \\ % 1915 duplicate id../ 2447   / 2438   / 2424
& & & {\color{red!50!blue}$0.45$} & $1$ & $26.5$ & $1$ & $26.5$ & $1$ & $23.2$ & $1$ & $23.2$   \\ %   5899 /   792  /  594  /  587
&  \multirow{4}{*}{$10^9$} & \multirow{4}{*}{${\sim}0.55$} & $0.9$ & -- & -- & -- & -- &  $0$ & $0$ & -- & --   \\ % 3644 / 3672  / 3661 / 3647 
&&&$0.75$ & -- & -- & -- & -- &  $0$ & $0$ & -- & --    \\
 & & & $0.52$ & -- & -- & -- & -- & -- & -- & -- & --  \\%  3428/ 3455  / 3446  /3432
& & & {\color{red!50!blue}$0.45$} & $1$ & $27.3$ & $1$ & $24.9$ & $1$ &$24.5$ & $1$ & $26.3$    \\\hline%Job ids. TODO
 \multirow{9}{*}{\rotatebox{90}{Chain}} &  \multirow{4}{*}{$ 10 ^3$}  & \multirow{4}{*}{$0.394$} &  $0.9$ & $0$ & $0$ & $0$ & $0$ & $0$ & $0$ & $0$ & $0$  \\ % 5480 / 6118   /  6110  / 5483
& & & $0$ & $0$ & $0$ & $0$ & $0$ & $0$ & $0$ & --  & --    \\% 5336 /  5363  /  5353  / 5340
& & & {\color{red!50!blue}$0.35$} & $1$ & $82.2$ & $1$ & $61.0$ & -- & -- & -- & --  \\% 1109/  1128  /  1121  / 1111
& & & {\color{red!50!blue}$0.3$} & $1$ & $80.0$ & $1$ & $54.7$ & -- & -- & -- & --  \\ %1085 /   1103  /  1098  / 1087
&  \multirow{3}{*}{$10 ^4$} & \multirow{3}{*}{$0.394$} & $0.9$ & -- & -- & -- & -- &  $0$ & $0$ & $0$ & $0$ \\%   / 47   /      / 1448
%&&& $0.48$ & TO & -- & TO & -- &  $\mathbf{1.4}$ & $0$ & MO & -- & ${<}0.1$ & $4.8$\\
& & & $0.4$ & -- & -- & -- & -- & $0$ & $0$ & -- & --    \\% / 5974  /  5966  /
%& & & $0.35$ & TO & TO & TO & TO  &  \\% 1305 /  1319  /    / 1303
& & & {\color{red!50!blue}$0.3$} & -- & -- & -- & -- & -- & -- & -- & --    \\% /    /    /
&  \multirow{2}{*}{$10 ^{12}$}  & \multirow{2}{*}{$0.394$} &  $0.9$ & -- & -- & -- & -- &  $0$ & $0$ & -- & --   \\% /  55  /   / 
& & & $0.4$ & -- & -- & -- & -- & $0$ & $0$ & -- & --   \\% 6560 /  6587  / 6579    /
\hline

\multirow{8}{*}{\rotatebox{90}{Double Chain}} &  \multirow{4}{*}{$ 10 ^3$}  & \multirow{4}{*}{$0.215$} &  $0.9$ & $0$ & $0$ & $0$ & $0$ & $0$ & $0$ & $0$ & $0$  \\  %  221n  /   7336 /7335     /  60n
&&&$0.3$ & $0$ & $0$ & -- & -- & $0$ & $0$  & $0$ & $0$ \\
& & & $0.216$ &  $0$ & $0$ & -- & --  & $0$ & $0$ & -- & --   \\   %   125 /  144   /  137   /  703n
%& & & $0.21$ & TO & TO & MO & TO   \\  %   /   1775  /     /    
& & & {\color{red!50!blue}$0.15$} & -- & -- & -- & -- & -- & -- & -- & --   \\    %   /    /     / 
%
%
%&  \multirow{2}{*}{$10 ^{4}$}  & \multirow{2}{*}{$0.22$} &  $0.9$ & TO & -- & TO & --  & $16.8$ & $0$ & $\mathbf{0.5}$ & $0$  & $0.2$ & $2.8$   \\ %  8035 /  8063   /  713n    / 704n 
& \multirow{2}{*}{$10 ^{4}$}  & \multirow{2}{*}{$0.22$}& $0.3$ & -- & --  & -- & -- & $0$ & $0$ & $0$ & $0$ \\
& & & {$0.24$} & -- & -- & -- & --  & $0$ & $0$ & -- & --   \\ % 7821   /  576   /     /  7824
%   /    /     / 
&  \multirow{2}{*}{$10^7$} & \multirow{2}{*}{$2.6\eexp{-4}$} & $4\eexp{-3}$ &  -- & -- & -- & -- & -- & --  & -- & --   \\
& & & $2.7\eexp{-4}$  &  -- & -- & -- & -- & $0$ & $0$ & -- & --  \\ %  125 /  143  /     /

\end{tabular}
}

	\label{tab:primaryresults_1_appendix}
\end{table}

\section{Omitted Proofs}

\subsection{Proof of Theorem~\ref{thm:strengthencorrecticcorrect}}
\begin{proof}
	We show that 
	\[
	    I \eeq \icinv{F_0,\ldots,F_{k-1}} \wedge F_{k-1} \leq F_k \wedge \bm{F_{k-1}} \leq F_k
	\]
	is an invariant of the loop in \mainloop. \\ \\
	\noindent
	\emph{Initialization.} At the beginning of the loop, we have $k=1$, $F_0 \eeq \bm{\zeroexp}$, and $F_1 = \oneexp$. $\icinv{F_0}$ holds trivially. 
	Furthermore, we have $F \leq \oneexp$ for every frame $F \in [0, 1]^S$, which gives us
	$F_0 \leq F_1$ and $\bm{F_0} \leq F_1$. \\ \\
	\noindent
	\emph{Consecution.} Assume the loop invariant $I$ holds at the beginning of some iteration. \mainloop~immediately invokes \strengthen~in line $4$. If \strengthen~returns $\success = \false$, we have nothing to show. Otherwise, i.e., if \strengthen~returns $\success=\true$, the specification of \strengthen~(Def.\ \ref{def:spec_strengthen}) implies that $\icinv{F_0,\ldots,F_k}$ holds. Now observe that after line $5$, we have $F_k \leq F_{k+1}$ and $\bm{F_k} \leq F_{k+1}$ since $F_{k+1} = \oneexp$. Since $\propagate$ does not affect the~\pricthree invariants, these facts continue to hold after line $7$. 
	
	If there is an $1 \leq i \leq k$ with $F_i = F_{i+1}$, Lemma~\ref{lem:ic3_inv_eqframes} applies and we have $\prevtmax{\mdp}{\sinit}{\bad} \lleq \thresh$, which is what we have to show.
	
	To conclude our proof that $I$ is a loop invariant, observe that after the assignment $k \leftarrow k+1$ in line $9$, we have
	\[
	     \icinv{F_0,\ldots,F_{k-1}} \wedge F_{k-1} \leq F_k \wedge \bm{F_{k-1}} \leq F_k~,
	\]
	so $I$ continues to hold at the end of that iteration.
\end{proof}
\subsection{Proof of Lemma~\ref{lem:ic3_inv_preserved_by_update}}
\begin{proof}
	Let
	\[
	F_0' \eeq F_0 \subst{s}{\vphantom{\bigl(}\minval{F_0[s],\, \delta}},
	\,\ldots,\, F_k' \eeq F_k \subst{s}{\vphantom{\bigl(}\minval{F_k[s],\, \delta}}~.
	\]
	We consider each invariant separately. \\ \\
	\emph{Initiality.} Since $F_0' = F_0$, initiality is preserved. \\ \\
	\emph{Chain-Property.} Assume $F_0',\ldots,F_k'$ do \emph{not} satisfy the chain-property. Then there is some $i \in \{ 0,\ldots,k-1 \}$ and some state $s'$
	with $F_i'[s'] > F_{i+1}'[s']$. Since $F_0,\ldots,F_k$ satisfy the chain-property and since $s$ is the only state where $F_i'$ and $F_{i+1}'$ might differ, we have $s' = s$.
	There are now two cases: Either $F_{i+1}[s] \leq \delta$ or $F_{i+1}[s] > \delta$. 
	
	If $F_{i+1}[s] \leq \delta$, we also have $F_i[s] \leq \delta$ (chain-property). Hence, we have
	$F_i'[s] = F_{i}[s]$ and $F_{i+1}'[s] = F_{i+1}[s]$. This implies $F_i'[s] \leq F_{i+1}'[s]$ since $F_0,\ldots,F_k$ satisfy the chain-property.
	
	If $F_{i+1}[s] > \delta$, then $F_{i+1}' = \delta$. Since $F_i'[s]$ is the minimum of $F_i[s]$ and $\delta$, we again derive that the chain property holds. \\ \\
	\noindent
	\emph{Frame-Safety.} This case is immediate since we choose the minimum of $F_i[s]$ and $\delta$. This means that if $s = \sinit$ and $\delta > \thresh$, we have
	$F_i'[\sinit] = F_i[\sinit]$ for all $i \in \{0, \ldots, k\}$. Hence, frame-safety continues to hold for $F_0',\ldots,F_k'$. Conversely, if $\delta \leq \thresh$, frame-safety continues to hold, as well. \\ \\
	\noindent
	\emph{Relative Inductiveness.} First notice that the chain-property
	and monotonicity of $\bmsymbol$ imply
	\begin{align*}
	    &\bm{F_{i-1}} \eeleq \bm{F_i} \qquad \textnormal{for all}~ i \in \{1,\ldots,k\}~. \
	\end{align*}
	 This and our assumption $\bm{F_{k-1}}[s] \leq \delta$ yields 
	 \begin{align*}
	    &\bm{F_{i}}[s] \eeleq \delta \qquad \textnormal{for all}~ i \in \{0,\ldots,k-1\}~. \tag{$\varheartsuit$} \label{eq:proof:rel_ind_2}
	 \end{align*}
	Now let $i \in \{1,\ldots,k\}$ and let $s' \in S$. We show $\bm{F_{i-1}'}[s'] \leq F_i'[s']$ by distinguishing the cases $s' = s$ and $s' \neq s$. \\ \\
	\emph{The case $s'=s$.} First observe that
	\begin{align*}
	     &F_{i-1}' \eeleq F_{i-1} \\ 
	     \textnormal{implies} \qquad & \bm{F_{i-1}'} \eeleq \bm{F_{i-1}}   \tag{monotonicity of $\bmsymbol$} \\
	     \textnormal{implies} \qquad & \bm{F_{i-1}'}[s] \eeleq \delta    \tag{\ref{eq:proof:rel_ind_2} and transitivity of $\leq$}
	\end{align*}
	There are now two cases. Either $F_i'[s] = \delta$ or $F_i'[s] = F_i[s]$. If $F_i'[s] = \delta$, we have
	\begin{align*}
	   \bm{F_{i-1}'}[s] \eeleq \delta \eeq F_i'[s] \eeleq F_i'[s]~.
	\end{align*}
	Since $F_i'[s]$ is the minimum of $F_i[s]$ and $\delta$,
	$F_i'[s] = F_i[s]$ implies $F_i[s] \leq \delta$. By the chain property, we then also have $F_{i-1}[s] \leq \delta$. Hence, we have $F_{i-1}' = F_{i-1}$ and $F_i' = F_i$, which gives us
	\begin{align*}
	   & \bm{F_{i-1}'}[s] \\
	   \eeq & \bm{F_{i-1}}[s] \tag{$F_{i-1}' = F_{i-1}$}\\
	   \eeleq & F_i[s] \tag{chain-property for $F_0,\ldots,F_k$} \\
	   \eeq & F_i'[s] \tag{$F_{i}' = F_{i}$} \\
	   \eeleq & F_i'[s] ~.
	\end{align*}
	\emph{The case $s' \neq s$.} In this case, we have $F_i'[s'] = F_i[s']$. Furthermore, monotonicity of $\bmsymbol$ and $F_{i-1}' \leq F_{i-1}$ yields $\bm{F_{i-1}'} \leq \bm{F_{i-1}}$.
	Together, this gives us
	\begin{align*}
	   \bm{F_{i-1}'}[s'] \eeleq \bm{F_{i-1}}[s'] \eeleq F_i[s'] \eeq F_i'[s'] \eeleq F_i'[s']~.
	\end{align*} 
	This completes the proof.
\end{proof}

\subsection{Proof of Lemma~\ref{lem:finallyperfect}}

\begin{proof}
	We prove soundness and completeness separately. \\ \\
	\noindent
	\emph{Soundness.} If \pricthree~terminates inside the loop, then soundness is guaranteed by the soundness of \mainloop~and $\checkrefutation$. Otherwise, i.e., if \pricthree~terminates in line $10$, then $\touched = S$ and our assumption yields
	\begin{align*}
	   \oracle(\sinit) \leq \lambda \quad \text{iff} \quad \prevtmax{\mdp}{\sinit}{\bad} \leq \lambda~.
	\end{align*}
	\emph{Completeness.} Completeness is ensured by the fact that \mainloop~and $\checkrefutation$ terminate. Since $\enlarge$ ensures that $|\touched|$ increases strictly on every iteration, finiteness of the input MDP $\mdp$ implies that the loop in \pricthree~eventually terminates.
	
\end{proof}

\subsection{Proof of Recovery Statement~\ref{recovery:1}}
\begin{proof}
This is a consequence of the well-known fact that~\cite[Chap.\ 10]{BK08}
\[
    \prevtmax{\mc}{\sinit}{\bad} \eeq 0
    \qquad \text{iff} \qquad
    \sinit \not\models \exists \lozenge \bad~.
\]
That is,
the maximal probability to reach a state in $\bad$ from the initial state $\sinit$ is $0$ iff no state in $\bad$ is reachable from $\sinit$ in the graph structure underlying the MDP $\mdp$.
\end{proof}
\subsection{Proof of Recovery Statement~\ref{recovery:2}}
\begin{proof}
	We prove the recovery statement under the reasonable assumption that the heuristic \heuristic is adequate since otherwhise we do not recover standard IC3.
	By Recovery Statement~\ref{recovery:3},
	all obligations $(i,s, \delta)$ spawned by \strengthen~
	satisfy $\delta = 0$. Hence, all frames $F$ computed by \mainloop~satisfy
	\[
	   \forall s \in S \colon F[s] \in \{0,1\}~.
	\]
	For our convenience, we therefore regard frames as \emph{sets of states}.
	
	We now proceed by contradiction. Assume that \mainloop terminates in \textnormal{line~\ref{alg:pric3:zeno}}
	after the $k$th call to $\strengthen$. 
	Our goal is to show that $\oldSubsystem = \subsystem$ implies $F_1 = F_2$, which means that \mainloop has terminated in line~\ref{alg:pric3:checkInductive}, contradicting our assumption. We rely on two observations: \\ \\
	\noindent
	\emph{Observation 1.} Since \mainloop~terminates in \textnormal{line~\ref{alg:pric3:zeno}}, procedure \strengthen~resolved all obligations spawned, i.e., \strengthen~returned $\success = \true$ in every iteration. This implies that every state touched by~\strengthen~is removed from $F_1$:
	Denote by $F_{j,i}$ the frame $F_j$ after the $i$th call to \strengthen~and by $\subsystem_i$ the subsystem returned by the $i$th call to strengthen. We have
	\[
	    F_{1,i} \eeq S \setminus \subsystem_i~.
	\]
	That is, every state touched by~\strengthen~is removed from $F_1$, which is an immediate consequence of the chain-property (cf.\ Definition~\ref{def:invariants}) and the fact that all obligations are resolved. Finally, our assumption that $\oldSubsystem = \subsystem$ holds on iteration $k$ gives us
	\[
	     F_{1,k-1} \eeq F_{1,k}~.
	\]
	\noindent
	\emph{Observation 2.} Denote by $\reach{i}$ the states reachable in at most $i$ steps from initial state $\sinit$.
	As it is the case for standard (reverse) IC3, \strengthen~refines the frames in such a way that
	\[
	    \forall 1 \leq  i \leq k \colon  \forall 1 \leq  j \leq i \colon
	    F_{j,i} \eeq S \setminus \reach{i-j}~.
	\]
	Let us now consider the frames $F_{1,k}$ and $F_{2,k}$.
	Since $F_{1,k-1} = F_{1,k}$ by Observation $1$, we have 
	$\reach{k-2} = \reach{k-1}$ by Observation $2$. This gives us 
	\begin{align*}
	   &F_{1,k} \\
	   \eeq & \reach{k-1} \\
	   \eeq & \reach{k-2} 
	   \tag{by Observation $1$ and $2$} \\
	   \eeq & F_{2,k}~,
	   \tag{by Observation $2$}
	\end{align*}
	which is what we had to show.

\end{proof}
\subsection{Proof of Recovery Statement \ref{recovery:3}}

\begin{proof}
	Let $\thresh = 0$, and let $\getprobs$ be an adequate heuristic. We show that
	\[
	   I \eeq \forall (i,s,\delta)  \in Q \colon \delta =0
	\]
	is an invariant of the loop in Algorithm~\ref{alg:strengthensimple}, which implies the claim. \\ \\
	\noindent
	\emph{Initialization.} Since $\thresh =0$, invariant $I$ holds at the beginning of the loop. \\ \\
	\noindent
	\emph{Consecution.} Assume that $I$ holds at the beginning of some iteration. 
	This implies that after line $3$, we have $\delta = 0$.
	Now assume we spawn new obligations, i.e.\ that line $7$ is executed.
	The check in line $4$ thus evaluted to $\false$, which means $s \not\in \bad$.
	Since $\getprobs$ is adequate, it spawns $\delta_1, \ldots, \delta_n$ such that
	\begin{align*}
	&\bma{a}{\vphantom{\big(}F\subst{s_1}{\delta_1}\ldots\subst{s_n}{\delta_n}}[s] \lleq \delta 
	\tag{since $\getprobs$ is adequate} \\
	\text{iff} \quad &
	  \sum\limits_{i=1}^n \transmatrix(s, a, s_i) \cdot \delta_i \lleq 0
	\tag{ $s \not\in \bad$ and $\succs(s, a) = \{s_1,\ldots,s_n\}$} \\
	\text{iff} \quad & \delta_1 =0 \wedge \ldots \wedge \delta_n =0~.
	\tag{since $\delta_1,\ldots,\delta_n \geq 0$}
	\end{align*}
	This completes the proof.
\end{proof}
\subsection{Proof of Recovery Statement \ref{recovery:4}}
\begin{proof}
	We first show that for every obligation $(i,s,\delta) \in Q$, $\qtouched$
	contains all states on some path from the initial state $\sinit$ to state $s$:
	At the beginning of the loop, this holds trivially since $Q = \{(k, \sinit, \delta)\}$.
	Furthermore, every loop iteration preserves the desired property since:
	\begin{enumerate}
		 \item New obligations consist of successors of states in the queue only (line $8$), and
		 \item state $s$ is reenqueued if obligations for its successors are added to the queue (line $9$).
	\end{enumerate}
	Now let $\thresh =0$, and let $\getprobs_0$ be the adequate heuristic that maps every state to $0$.
	Recall from Recovery Statement \ref{recovery:3} that for every obligation $(i,s,\delta)\in Q$ it holds
	that $\delta =0$. Hence, if the check in line $4$ evalutes to $\true$, we must have $s \in \bad$. Since $\qtouched$ contains a path from the initial state to state $s$, the recovery statement follows.
\end{proof}~
\subsection{Proof of Recovery Statement~\ref{recovery:5}}
\begin{proof}
	This is a consequence of Recovery Statements~\ref{recovery:2} and~\ref{recovery:4}.
	We show that $\pricthree$ terminates after the first call to $\mainloop$ by
	distinguishing the cases where \mainloop~returns $\safeReturn = \true$ and where \mainloop~returns $\safeReturn=\false$. \\ \\
	\noindent
	\emph{ \mainloop~returns $\safeReturn = \true$.}
	In this case, we have nothing to show since $\pricthree$ immediately terminates  in line~$5$. \\ \\
	\noindent
	\emph{\mainloop~returns $\safeReturn = \false$.}
	By Recovery Statement~\ref{recovery:2}, \mainloop~never terminates in line
	\textnormal{line ~\ref{alg:pric3:zeno}}. Hence, \mainloop~has terminated in line~\ref{alg:pric3:unknown}. Now, by Recovery Statement~\ref{recovery:4}, $\subsystem$ contains a path from the initial state $\sinit$ to some state $s \in \bad$. Hence, the probability to reach some bad state in the subMDP $\smdp{\subsystem}$ is larger than $\thresh = 0$, which yields $\checkrefutation(\subsystem)$ to return $\true$ and $\pricthree$ to terminate in line \ref{alg:outer:refute}.
\end{proof}

\end{document}